\newif\ifconferenceversion
\newcommand{\VersionText}[2]{%
  \ifconferenceversion
    #1%
  \else
    #2%
  \fi
}
\newcommand{\IfConference}[1]{%
  \ifconferenceversion
    #1%
  \fi
}
\newcommand{\IfFull}[1]{%
  \ifconferenceversion
  \else
    #1%
  \fi
}
\newtheorem{Thm}{Theorem}[section]
\newtheorem{Lem}[Thm]{Lemma}
\newtheorem{Prop}[Thm]{Proposition}
\newtheorem{Def}[Thm]{Definition}
\newtheorem{Exm}[Thm]{Example}
\def\R{\mathbb{R}}
\def\N{\mathbb{N}}
\def\Z{\mathbb{Z}}
\def\F{\mathbb{F}}
\def\G{\mathbb{G}}
\def\M{\mathcal{M}}
\def\D{\mathcal{D}}
\def\Y{\mathcal{Y}}
\def\1{\mathbbm{1}}
\renewcommand{\epsilon}{\varepsilon}
\newcommand{\prob}[1]{\Pr\left[\begin{array}{@{}c@{}}#1\end{array}\right]}
\newcommand{\case}[1]{\left\{\begin{array}{@{}ll@{}}#1\end{array}\right.}
\newcommand{\round}[1]{\left\lfloor#1 \right\rceil}
\renewcommand{\vec}[1]{\overrightarrow{#1}}
\renewcommand{\sf}[1]{\ifmmode\mathsf{#1}\else\textsf{#1}\fi}
\newcommand{\adbf}[1]{\ifmmode\mathbf{#1}\else\textbf{#1}\fi}
\newcommand{\checkeq}{\stackrel{?}{=}}
\def\1{\mathbbm{1}}
\def\ser{\ensuremath{\sf{Ser}}}
\def\cli{\ensuremath{\sf{Cli}}}
\newcommand{\secretshare}[1]{\ensuremath{\left\llbracket#1\right\rrbracket}}
\setlist{nosep}
\newcommand{\autofit}[1]{\maxsizebox{\linewidth}{!}{\begin{varwidth}{10\linewidth}#1\end{varwidth}}}
\newcommand{\autofitpcb}[1]{\autofit{\pseudocodeblock{#1}}}
\title{VDDP: Verifiable Distributed Differential Privacy under the Client-Server-Verifier Setup}
\date{}
\author{
    Haochen Sun and Xi He \\
    Cheriton School of Computer Science, University of Waterloo \\
    \texttt{\{haochen.sun, xi.he\}@uwaterloo.ca}
}
\begin{document}

\maketitle

\begin{abstract}
    Although differential privacy (DP) is widely regarded as the de facto standard for data privacy, its implementation remains vulnerable to unfaithful execution by servers, particularly in distributed settings. In such cases, servers may sample noise from incorrect distributions or generate correlated noise while appearing to follow established protocols. This work addresses these malicious behaviours in a distributed client-server-verifier setup, under \emph{Verifiable Distributed Differential Privacy (VDDP)}, a novel framework for the verifiable execution of distributed DP mechanisms. We systematically capture end-to-end security and privacy guarantees against potentially colluding adversarial behaviours of clients, servers, and verifiers by characterizing the connections and distinctions between VDDP and zero-knowledge proofs (ZKPs).
    
    \VersionText{Moreover, we develop the Verifiable Distributed Discrete Laplace Mechanism (VDDLM), an instantiation of VDDP that achieves up to a 400,000x improvement in proof generation efficiency with only 0.1--0.2x error compared with the previous state-of-the-art verifiable differentially private mechanism. VDDLM also serves as the basis for constructing the Verifiable Distributed Discrete Gaussian Mechanism (VDDGM). Furthermore, VDDLM features a tight privacy analysis that accounts for all additional privacy losses due to numerical imprecisions and applies to other secure computation protocols for DP mechanisms based on cryptography.}{We develop three novel and efficient instantiations of VDDP: (1)~the Verifiable Distributed Discrete Laplace Mechanism (VDDLM), which achieves up to a 400,000x improvement in proof generation efficiency with only 0.1--0.2x error compared with the previous state-of-the-art verifiable differentially private mechanism and includes a tight privacy analysis that accounts for all additional privacy losses due to numerical imprecisions, applicable to other secure computation protocols for DP mechanisms based on cryptography; (2)~the Verifiable Distributed Discrete Gaussian Mechanism (VDDGM), an extension of VDDLM that incurs limited overhead in real-world applications; and (3)~an improved solution to Verifiable Randomized Response (VRR) under local DP, as a special case of VDDP, achieving up to a 5,000x reduction in communication costs and verifier overhead.}
\end{abstract}

\section{Introduction} 

In today's data-driven world, vast amounts of information are collected and analyzed to drive decision-making across various sectors, including healthcare, finance, and marketing. However, privacy and efficiency concerns often make it infeasible to collect and analyze such large volumes of data in a centralized manner. Therefore, \emph{distributed data analysis}~\cite{privex, prio, secagg} has emerged as a viable solution. In this approach, data are distributed and processed across multiple nodes, which can enhance privacy by reducing the exposure of sensitive information and improve efficiency by leveraging the computational resources of several nodes.

With differential privacy (DP)~\cite{DP,DR14} established as the de facto standard for data privacy, distributed data analysis mechanisms have also benefited from an additional layer of privacy guarantees provided by DP, thus giving rise to \emph{distributed differential privacy}~\cite{DBLP:conf/eurocrypt/CheuSUZZ19, DBLP:journals/tifs/WeiJWHDLCPW24, DBLP:conf/podc/Dwork21, DBLP:journals/popets/MeisingsethR25}. This approach includes various implementations such as multi-party computations (MPCs) of differential privacy schemes~\cite{DBLP:conf/uss/BohlerK20, DBLP:conf/ccs/BohlerK21,DBLP:journals/corr/abs-2109-10074,DBLP:conf/ccs/FuW24,DBLP:journals/tdsc/GoryczkaX17,DBLP:conf/nips/HeikkilaLKSTH17} and differentially private federated learning (FL) schemes~\cite{DBLP:journals/corr/abs-2405-08299, DBLP:conf/ccs/Liu00L024, DBLP:conf/ccs/XuZH24, DBLP:conf/ccs/PeinemannKSCM24}. These methods ensure that individual data points remain private while still allowing for accurate and efficient data analysis across distributed systems.

While benefiting from the privacy guarantees provided by DP, most distributed DP mechanisms assume that the servers performing the randomized computations involved in the protocols are semi-honest. That is, all servers strictly adhere to the prescribed protocol and only passively attempt to learn additional information about the database. However, malicious servers can deviate from the DP mechanism during execution, potentially in a collusive manner, which can further compromise the privacy guarantees. This type of deviation is elusive since attackers can attribute any anomalous results to randomness. Under the assumption that a certain portion of servers are honest, malicious-secure MPC protocols have been designed to resist such attacks~\cite{DBLP:conf/pst/AnandanC15, DBLP:conf/eurocrypt/DworkKMMN06, DBLP:conf/ccs/ChampionSU19, DBLP:conf/ccs/WeiYFCW23}. However, to convince an external party (e.g., a data analyst) of the authenticity of the result, this assumption must be lifted, and the honesty of the servers still needs to be verified. This type of verification is particularly difficult because the privacy requirements forbid a thorough examination of the execution.

Malicious deviations from the DP mechanism have been observed and addressed under the local and central DP settings by previous studies on verifiable executions of several DP mechanisms, including randomized response~\cite{KCY21} and the continuous (floating-point) Gaussian mechanism~\cite{DBLP:conf/iclr/ShamsabadiTCBHP24}. Moreover, the verifiable distributed binomial mechanism (VDBM)~\cite{BC23} among multiple clients and servers has also been developed. These studies utilize zero-knowledge proofs (ZKP)~\cite{DBLP:conf/crypto/BellareG92, DBLP:conf/crypto/FiatS86, DBLP:conf/crypto/GoldreichMW86, DBLP:conf/stoc/GoldwasserMR85, DBLP:journals/ftsec/Thaler22}, from which the verifiability and privacy protections are inherited. However, there remains a need for verifiable DP mechanisms with lower overhead, thereby enhancing scalability and practicality while achieving better utility and privacy trade-offs. For example, the total execution time of the binomial mechanism exceeds 30 minutes to achieve $\left(10^{-3}, 10^{-10}\right)$-DP on a single-dimensional input, while it is unclear whether the application of discrete cryptographic primitives may exacerbate the numerical issues of continuous Gaussian mechanisms~\cite{DBLP:conf/sp/JinMRO22,DBLP:conf/ccs/Mironov12}.

Furthermore, there is a lack of a formal framework for verifiable differential privacy, especially in the distributed setting with multiple clients and servers. For example, a hospital may have multiple patients' data stored on various servers across different departments. To answer a query from an external analyst, multiple servers jointly compute a DP mechanism on the patients' data. Similar scenarios have been considered in (D)Prio~\cite{prio,dprio} and ShrinkWrap~\cite{shrinkwrap}. However, it remains challenging for the analyst to verify the integrity of the query result, which involves both the validity of each patient's records and the correctness of each server's computation.

Directly transplanting the notions of ZKPs, which typically operate over \emph{deterministic} arithmetic circuits, fails to capture the correctness of DP mechanisms with inherent randomness~\cite{DBLP:books/cu/Goldreich2001,DBLP:journals/ftsec/Thaler22,BC23}. Moreover, ZKP protocols, typically involving a single prover and a single verifier, do not capture the interactions among multiple clients holding the data and servers executing the computations. For example, multiple servers may correlate their randomness to bias the output distribution, even if their marginal distributions remain correct. Though pioneering infrastructural cryptographic developments have extended ZKPs to the distributed setup (e.g., MPCs)~\cite{DBLP:conf/scn/BaumDO14, DBLP:conf/eurosp/KanjalkarZGM21, DBLP:conf/uss/OzdemirB22}, the notion of distributed DP with respect to each client's local database, with the additional requirement of verifiability, remains unclear. To address these challenges, we present \textbf{Verifiable Distributed Differential Privacy (VDDP)}. Our contributions can be summarized as follows\IfConference{\footnote{The full version, including all technical details and an additional instantiation of VDDP, is available at \url{https://arxiv.org/abs/2504.21752}.}}:

\begin{itemize}[leftmargin=*]
    \item We rigorously construct a framework for verifiable computations of DP mechanisms under the distributed client-server-verifier setup, capturing the potential for collusive deviations from the protocol by clients and servers, as well as all privacy leakages from one client to other colluding clients, servers, and the verifier. We explore the relationship between ZKP and DP under our framework, showing that while ZKPs are sufficient, they are not necessary for achieving end-to-end DP. \textbf{(Section~\ref{sec:def})}
    \item We develop the \emph{Verifiable Distributed Discrete Laplace Mechanism (VDDLM)}, as well as its extension to the \emph{Verifiable Distributed Discrete Gaussian Mechanism (VDDGM)}. We also provide a tight privacy analysis for VDDLM, incorporating all additional privacy losses caused by numerical errors in fixed-point arithmetic and truncation, which is applicable to future studies on secure computations of DP utilizing cryptography, e.g., MPCs and fully homomorphic encryption (FHE). \textbf{(Section~\ref{sec:vddlm})}
    \IfFull{\item We propose an improved solution to \emph{Verifiable Randomized Response (VRR)} under local DP with asymptotically improved overhead, as a special case of VDDP. \textbf{(Section~\ref{sec:vrr})}}
    \item We conduct systematic experiments on \VersionText{VDDLM and VDDGM}{VDDLM, VDDGM, and VRR}, demonstrating their efficiency. In particular, compared with VDBM, VDDLM is up to 400,000x faster in proof generation with only 0.1--0.2x error under the same privacy costs. \IfFull{Similarly, our improved solution for VRR achieves up to a 5,000x improvement in communication cost and the verifier's overhead.} \textbf{(Section~\ref{sec:experiments})}
\end{itemize}

\subsection{Related Work} \label{sec:rw}

\begin{table}[!t]
    \caption{Comparison of security and privacy models with previous work on MPCs of DP mechanisms (\textbf{MPC-DP},~\cite{DBLP:conf/ccs/ChampionSU19,DBLP:conf/uss/BohlerK20,DBLP:conf/ccs/BohlerK21,DBLP:conf/eurocrypt/DworkKMMN06,DBLP:conf/ccs/WeiYFCW23}) and verifiable executions of DP mechanisms~\cite{DBLP:conf/eurosys/NarayanFPH15,dprio,KCY21,DBLP:conf/iclr/ShamsabadiTCBHP24,BC23}. \emph{VD}, \emph{VC}, \emph{VR}: authenticity of data, correctness of deterministic computation, or correctness of sampling from the prescribed random distributions is verifiable (to an external data analyst); \emph{SN}: scalability (sub-linear growth rate) with respect to the support cardinality of noise; \emph{N}: resilience against numerical issues of DP due to compatibility with discrete cryptographic primitives; \emph{CSV}: client-server-verifier model; \emph{E2EDP}: end-to-end DP guarantee, allowing and incorporating additional leakages from the proof.}

    \centering
    \autofit{\begin{tabular}{@{}lccccccc@{}}
    \toprule
        ~ & VD & VC & VR & SN & N & CSV & E2EDP\\
        \midrule
        MPC-DP & \ding{56} & \ding{56} & \ding{56} & \ding{52} & \ding{52} & \ding{56} & N/A \\
        VFuzz~\cite{DBLP:conf/eurosys/NarayanFPH15} & \ding{52} & \ding{52} & \ding{56} & \ding{52} & \ding{56} & \ding{56} & \ding{56} \\
        DPrio~\cite{dprio} & \ding{52} & \ding{56} & \ding{52} & \ding{52} & \ding{52} & \ding{52} & \ding{56} \\
        KCY21~\cite{KCY21} & \ding{52} & \ding{52} & \ding{52} & \ding{56} & \ding{52} & \ding{56} & \ding{56} \\
        STC+24~\cite{DBLP:conf/iclr/ShamsabadiTCBHP24} & \ding{52} & \ding{52} & \ding{52} & \ding{52} & \ding{56} & \ding{56} & \ding{56} \\
        FFG+25~\cite{DBLP:journals/iacr/FranzeseFGJPWD25} & \ding{52} & \ding{52} & \ding{52} & \ding{56} & \ding{52} & \ding{56} & \ding{56} \\
        VDBM~\cite{BC23} & \ding{52} & \ding{52} & \ding{52} & \ding{56} & \ding{52} & \ding{52} & \ding{56} \\
        \textbf{Ours} & \ding{52} & \ding{52} & \ding{52} & \ding{52} & \ding{52} & \ding{52} & \ding{52} \\
        \bottomrule
    \end{tabular}}
    \label{tab:comparison}
\end{table}

Pioneering steps in verifiable executions of differentially private mechanisms involve cryptographic proofs of the correctness of deterministic fundamental computation steps in differentially private database systems such as \textbf{VFuzz}~\cite{DBLP:conf/eurosys/NarayanFPH15} and \textbf{DPrio}~\cite{dprio}. More recent advancements have shifted their focus to the correct sampling from noise distributions, including randomized response (\textbf{KCY21},~\cite{KCY21}), floating-point Gaussian mechanisms (\textbf{STC+24},~\cite{DBLP:conf/iclr/ShamsabadiTCBHP24}), and binomial mechanisms (\textbf{VDBM},~\cite{BC23}). More broadly, other studies on secure computation for randomness generation~\cite{DBLP:conf/pkc/AmbainisJL04,DBLP:conf/sp/BonehBCGI21} and differential privacy~\cite{DBLP:conf/sigmod/ChowdhuryW0MJ20,DBLP:conf/ccs/BellBGL020}, with multi-party computation (MPC)~\cite{DBLP:conf/eurocrypt/DworkKMMN06,DBLP:conf/ccs/ChampionSU19,DBLP:conf/uss/BohlerK20,DBLP:conf/ccs/BohlerK21,DBLP:journals/corr/abs-2109-10074,DBLP:conf/ccs/WeiYFCW23,DBLP:conf/ccs/FuW24}, have laid the foundation for the secure computation of DP mechanisms, especially in distributed settings. However, despite similarities in multiple aspects, they are based on the additional assumption of a minimal number of honest servers and do not cover the scenario in which an external data analyst needs to verify the authenticity of the data and the correctness of the computation, especially the randomness involved. Franzese et al. (\textbf{FFG+25},~\cite{DBLP:journals/iacr/FranzeseFGJPWD25}) also proposed a sampling circuit for generic DP mechanisms based on DP noises that is suitable for MPC and ZKP. However, its dependence on lookup tables necessitates listing all possible values of noise, which affects scalability. We compare this study's security and privacy models with the aforementioned studies in Table~\ref{tab:comparison}.

\IfFull{Due to concerns related to compatibility with cryptographic primitives, the discrete DP mechanisms~\cite{DBLP:conf/innovations/BalcerV18, DBLP:journals/siamcomp/GhoshRS12, dgauss} and the distributed DP mechanisms~\cite{cpsgd, Skellam, ddgm, pbm} constructed over them via SecAgg~\cite{secagg} have been considered for constructing the secure computations, including MPCs and verifiable executions, of DP mechanisms.

While deviations from DP mechanisms may cause adversarial behaviours, the correctness of the mechanisms may also lead to unintentional privacy leakages. Therefore, verification tools in programming languages have been applied to ensure that the mechanisms, as programs, are bug-free and can achieve the claimed DP guarantees~\cite{DBLP:conf/ccs/WangDKZ20, DBLP:conf/vmcai/ReshefKSD24, DBLP:journals/corr/abs-1909-02481}. Note that these studies focus on a different aspect of security in DP mechanisms, which may serve as complementary but do not overlap with the verifiable executions of the protocols.}
\section{Preliminaries} \label{sec:prelim}

In this study, we assume that all parties involved, including clients, servers, verifiers, and generic adversaries, are probabilistic polynomial time (PPT). We also assume that the number of clients and servers, as well as the size of each database, are polynomial in the security parameter $\kappa$. The notations used in this paper are summarized in Table~\ref{tab:notation}.

\begin{table}[!t]
    \centering
    \caption{Notations}
    \footnotesize 
    \begin{tabularx}{\linewidth}{@{}lX@{}}
    \toprule
        Notation & Definition \\
        \midrule
        $D\simeq D'$ & $D$ and $D'$ are neighbouring databases\\
        $\left[n\right]$ & set of integers $\left\{0, 1, 2, \dots, n-1\right\}$\\
        $y\sample S$ & $y$ is independently and uniformly sampled from a set $S$\\
        $y\sample \mathcal{P}$ & $y$ is independently sampled from a distribution $\mathcal{P}$\\
        $y\sample \mathcal{F}(x)$ & $y$ is independently sampled from the output distribution of a randomized function $\mathcal{F}(x)$\\
        $Y\sim S/\mathcal{P}/\mathcal{F}(x)$ & a random variable $Y$ follows the distribution as defined above\\
        $\mathcal{P}_1 \equiv \mathcal{P}_2$ & two equivalent probability distributions $\mathcal{P}_1$ and $\mathcal{P}_2$\\
        $\kappa$ & the generic security parameter\\
        $\negl[\kappa]$ & the class of functions that are asymptotically smaller than any inverse polynomial\\
        $n_\cli$, $n_\ser$ & the total number of clients/servers\\
        $\vec{\cli}$, $\vec{\ser}$ & the set of potentially malicious clients/servers\\
        $\cli_j$, $D_j$ & the $j$th client with a sensitive local database $D_j$\\ 
        $\ser_i$ & the $i$th server that computes over the sensitive data\\
        $\secretshare{D_j}_i$ & the secret-share of $D_j$ transmitted to $\ser_i$\\
        $I_j$ & the indices of servers receiving secret-shares from $\cli_j$\\
        $J_i$ & the indices of clients secret-sharing their data with $\ser_i$\\
        $I^*, J^*$ & the indices of clients and servers that are \textbf{not} identified as deviating from the protocol\\
        $\verifier$ & the semi-honest verifier (e.g., data analyst) that is interested in the computation output and verifies its authenticity\\
        $a \checkeq b$ & verifier checks whether $a=b$\\
        \bottomrule
    \end{tabularx}%
    
    \label{tab:notation}%
\end{table}
\subsection{Differential Privacy (DP)} \label{sec:prelim-dp}
We first present the standard definition that focuses on the scenario of \emph{central DP} (Definition~\ref{def:dp}), where a server collects the database from multiple clients and executes the DP mechanism, then transmits its output to a data analyst.

\begin{Def}[Differential Privacy~\cite{DR14}] \label{def:dp}
    A randomized mechanism $\mathcal{M}: \mathcal{D} \to \mathcal{Y}$, where $\mathcal{D}$ is the space of input databases and $\mathcal{Y}$ is the space of outputs, is $(\epsilon, \delta)$-differentially private if, for any pair of neighbouring (differing in one record) databases $D\simeq D'$ and any measurable subset $S \subseteq \mathcal{Y}$,
    \begin{equation}
        \prob{\mathcal{M}\left(D\right) \in S} \leq e^\epsilon \prob{\mathcal{M}\left(D'\right) \in S} + \delta.
    \end{equation}
    $\mathcal{M}$ achieves $\epsilon$-\emph{pure differential privacy} when $\delta = 0$. 
\end{Def}

\IfFull{In \emph{local DP}~\cite{DBLP:conf/pods/EvfimievskiGS03, DP} (Definition~\ref{def:local-dp}), each client applies a DP perturbation to their own data before directly submitting it to the data analyst. 
\begin{Def}[Local Differential Privacy]\label{def:local-dp}
    A mechanism $\mathcal{M}: \mathcal{X} \to \mathcal{Y}$ is $(\epsilon, \delta)$-locally differentially private if, for any $x, x'\in \mathcal{X}$ and any measurable subset $S \subseteq \mathcal{Y}$,
    \begin{equation}
        \Pr\left[\mathcal{M}\left(x\right) \in S \right] \leq e^\epsilon \Pr\left[\mathcal{M}\left(x'\right) \in S\right] + \delta.
    \end{equation}
\end{Def}}

Also, due to the application of cryptographic primitives in this study, we consider \emph{computational differential privacy (CDP)}, where the most common notion is IND-CDP, the analogue of Definition~\ref{def:dp} that holds only against PPT adversaries instead of computationally unbounded ones.

\begin{Def}[IND-CDP~\cite{CDP}]\label{def:ind-cdp}
    Given $\epsilon: \N \to \R_+$, a family of randomized mechanisms $\M_\kappa: \D_\kappa \to \Y_\kappa$ is $\epsilon$-IND-CDP if, for any family of PPT adversaries $\left\{\adv_\kappa\right\}_{\kappa\in \N}$ and any polynomial $p$, there exists $\delta\left(\kappa\right)\in \negl[\kappa]$ such that, for all pairs of neighbouring databases $D_\kappa\simeq D_\kappa'$ satisfying $\max\left\{\abs{D_\kappa}, \abs{D_\kappa'}\right\}\leq p\left(\kappa\right)$,
    \begin{equation}
        \prob{\adv_\kappa\left(y_\kappa\right) = 1:\\y_\kappa\sample \M_\kappa\left(D_\kappa\right)} \leq e^{\epsilon(\kappa)}\prob{\adv_\kappa\left(y_\kappa\right) = 1:\\y_\kappa\sample \M_\kappa\left(D_\kappa'\right)} + \delta\left(\kappa\right).
    \end{equation}
\end{Def}

\subsubsection{Discrete Laplace Mechanism} 
Due to the numerical issues of continuous DP mechanisms (e.g., the Laplace mechanism and Gaussian mechanism)~\cite{DBLP:conf/ccs/Mironov12, DBLP:conf/sp/JinMRO22}, their discrete versions have been developed to prevent unintentional privacy leakage from floating-point arithmetic. Hence, we focus on discrete DP mechanisms in our study due to the application of cryptographic primitives. The discrete Laplace mechanism~\cite{DBLP:journals/siamcomp/GhoshRS12, DBLP:conf/innovations/BalcerV18} involves perturbing the result with additive discrete Laplace noise $\sf{Lap}_\Z(t)$, where
\begin{equation}
\prob{r \gets \sf{Lap}_\Z(t)} = \frac{e^\frac{1}{t}-1}{e^\frac{1}{t}+1} \cdot\exp\left(-\frac{\abs{r}}{t}\right), \forall r \in \Z.
\end{equation}

For any query $q: \mathcal{D} \to \Z$ with sensitivity $\Delta$ (i.e., $\abs{q(D) - q(D')} \leq \Delta$ for any pair of neighbouring databases $D$ and $D'$), perturbing the output $q(D)$ with additive noise from $\sf{Lap}_\Z\left(\frac{\Delta}{\epsilon}\right)$ achieves $\left(\epsilon, 0\right)$-DP. For a multi-dimensional query, adding i.i.d. discrete Laplace samples to each dimension of the output achieves the same DP guarantee when the L1 distance in the output space is used to define the sensitivity~\cite{DR14}.

\subsubsection{Discrete Gaussian Mechanism} 

Similar to the discrete Laplace mechanism, the discrete Gaussian mechanism~\cite{dgauss} involves perturbing the result with additive discrete Gaussian noise $\mathcal{N}_\Z(\sigma^2)$, where
\begin{equation}
    \prob{z \gets \mathcal{N}_\Z(\sigma^2)} \propto \exp\left(-\frac{z^2}{2\sigma^2}\right).\label{eq:dgauss}
\end{equation}
For any query $q: \mathcal{D} \to \Z$ with sensitivity $\Delta$ (i.e., $\abs{q(D) - q(D')} \leq \Delta$ for any pair of neighbouring databases $D$ and $D'$), and any $\epsilon > 0, \delta > 0$, perturbing the output $q(D)$ with additive noise from $\mathcal{N}_\Z\left(\frac{\Delta^2}{\epsilon^2}\right)$ achieves $\frac{1}{2}\epsilon^2$-concentrated DP~\cite{DBLP:conf/tcc/BunS16} and therefore $\left(\epsilon', \delta\right)$-DP, where $\epsilon' = \frac{1}{2}\epsilon^2 + \epsilon \cdot \sqrt{2\log\frac{1}{\delta}}$.

Moreover, for any multi-dimensional query $q: \mathcal{D} \to \Z^d$, where $d$ is the dimensionality of the output, adding i.i.d. discrete Gaussian samples to each dimension of the output, $\mathcal{N}_{\Z^d}\left(\frac{\Delta^2}{\epsilon^2}\right)$, achieves the same DP guarantee when L2 distances in the output space $\Z^d$ are used to define the sensitivity $\Delta$~\cite{dgauss, DBLP:conf/tcc/BunS16}.

\IfFull{\subsubsection{Randomized Response (RR)} 
The $K$-ary randomized response mechanism~\cite{DBLP:conf/icml/KairouzBR16, DBLP:conf/ndss/0001LLSL20, DBLP:conf/nips/KairouzOV14, DBLP:conf/sigmod/WangDZHHLJ19, DBLP:conf/sp/WangLJ18, DBLP:conf/uss/WangBLJ17} is an important local DP mechanism, where each client randomizes its input $x\in [K]$ to $k\in [K]$ with the following probability:
\begin{equation}
    \Pr\left[\mathcal{M}(x) = k\right] = p_{(k-x)\bmod K}, \label{eq:rr}
\end{equation}
where $\sum_{k \in [K]} p_k = 1$. This mechanism achieves $\epsilon$-pure DP, where $\epsilon = \log \frac{p_0}{\min_{k=1}^{K-1} p_k}$. It is common to set $p_0 > p_1 = p_2 = \dots = p_{K-1}$ to achieve DP and optimal utility.

As stated in Theorem~\ref{thm:rr-est-hist}, an unbiased estimator of the unperturbed histogram can be efficiently constructed.

\begin{Thm}\label{thm:rr-est-hist}
    Given the histogram that the data analyst receives, $\mathbf{n}' = {\left(n_0', n_1', \dots, n_{k-1}'\right)}^\top$, an unbiased estimation of the unperturbed histogram $\Hat{\mathbf{n}}$ can be formulated as
    \begin{equation} \label{eq:rr-est-hist}
        \Hat{\mathbf{n}} \gets \begin{pmatrix}
            p_0 & p_{K-1} & \dots & p_{1}\\
            p_1 & p_0 & \dots & p_{2} \\
            \vdots & \vdots & \vdots & \vdots \\
            p_{K-1} & p_{K-2} & \dots & p_{0} 
        \end{pmatrix}^{-1} {\mathbf{n}'}.
    \end{equation}
\end{Thm}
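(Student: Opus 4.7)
The plan is to identify the matrix in Equation~\eqref{eq:rr-est-hist} as the transition matrix $P$ of the per-client randomized response, show that $\mathbb{E}[\mathbf{n}']=P\mathbf{n}$, and then conclude by linearity of expectation. Concretely, I would first fix notation: let $\mathbf{n}=(n_0,\ldots,n_{K-1})^\top$ be the true (deterministic) histogram of the $n=\sum_x n_x$ client values, and let $n_k'$ denote the number of clients whose randomized output equals $k$. Then I would define $P\in\mathbb{R}^{K\times K}$ by $P_{k,x}:=p_{(k-x)\bmod K}$, observe that this is precisely the matrix displayed in Equation~\eqref{eq:rr-est-hist}, and verify the indexing against Equation~\eqref{eq:rr} (row $k$, column $x$, entry $p_{(k-x)\bmod K}$).

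Next I would compute $\mathbb{E}[n_k']$. For each client $c$ with true value $x_c$, let $X_{c,k}$ be the indicator that $c$ reports $k$; by Equation~\eqref{eq:rr} we have $\Pr[X_{c,k}=1]=p_{(k-x_c)\bmod K}$. Summing over clients of each class,
\begin{equation}
\mathbb{E}[n_k']=\sum_{x\in[K]} n_x\, p_{(k-x)\bmod K} = (P\mathbf{n})_k,
\end{equation}
so $\mathbb{E}[\mathbf{n}']=P\mathbf{n}$. By linearity of expectation applied to $\hat{\mathbf{n}}=P^{-1}\mathbf{n}'$, we obtain $\mathbb{E}[\hat{\mathbf{n}}]=P^{-1}\mathbb{E}[\mathbf{n}']=P^{-1}P\mathbf{n}=\mathbf{n}$, which is the desired unbiasedness.

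The only subtlety is establishing that $P^{-1}$ exists, which is implicitly assumed by the statement's use of the inverse but deserves a one-line justification in the proof. Since $P$ is circulant with first column $(p_0,p_1,\ldots,p_{K-1})^\top$, its eigenvalues are the DFT values $\lambda_m=\sum_{k=0}^{K-1}p_k\,\zeta^{km}$ for $m\in[K]$ where $\zeta=e^{2\pi i/K}$. For the standard DP-achieving choice $p_0>p_1=\cdots=p_{K-1}$ highlighted in Section~\ref{sec:bg-dp}, $\lambda_0=\sum_k p_k=1$ and $\lambda_m=p_0-p_1\neq 0$ for $m\neq 0$, so $P$ is invertible; for generic parameters this genericity argument suffices. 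I would flag this as the only non-mechanical step; the rest is a direct calculation. I expect no substantive obstacle beyond recording the circulant-matrix observation cleanly, since the theorem reduces to linearity of expectation once $P\mathbf{n}=\mathbb{E}[\mathbf{n}']$ is written down.
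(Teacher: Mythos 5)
Your proposal is correct and follows essentially the same route as the paper's proof: compute $\mathbb{E}[n_k'] = \sum_{k'} p_{(k-k')\bmod K}\, n_{k'}$ from Equation~\eqref{eq:rr}, identify $\mathbb{E}[\mathbf{n}'] = P\mathbf{n}$ with $P$ the displayed circulant matrix, and conclude unbiasedness by linearity of expectation. Your additional justification that $P$ is invertible (via the circulant eigenvalues $\lambda_0 = 1$ and $\lambda_m = p_0 - p_1$ for the standard parameter choice) is a sound refinement that the paper's proof leaves implicit.
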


\begin{proof}[Proof of Theorem~\ref{thm:rr-est-hist}]
    Given the unperturbed counts 
    \begin{equation}\mathbf{n} = \left(n_0, n_1, \dots, n_{K-1}\right),\end{equation}
    by Equation~\eqref{eq:rr}, for each $k \in [K]$,
    \begin{equation}
        \mathbb{E}\left[n_k'\right] = \sum_{k'=0}^{K-1} p_{\left(k - k'\right)\bmod {K}} \cdot n_{k'},
    \end{equation}
    such that
    \begin{equation}
        \mathbb{E}\mathbf{n}' = \begin{pmatrix}
            p_0 & p_{K-1} & \dots & p_{1}\\
            p_1 & p_0 & \dots & p_{2} \\
            \vdots & \vdots & \vdots & \vdots \\
            p_{K-1} & p_{K-2} & \dots & p_{0} 
        \end{pmatrix} \mathbf{n}.
    \end{equation}
    Therefore, Equation~\eqref{eq:rr-est-hist} is an unbiased estimator of $\mathbf{n}$.
\end{proof}}
\subsection{Cryptographic Primitives} \label{sec:prelim-crypto}

To facilitate the application of cryptographic primitives, all variables and their arithmetic relations are mapped to a finite field $\F$ with a large prime order (e.g., $\F \approx 2^{256}$). Specifically, a $d$-dimensional $\adbf{v} \in \F^d$ is encoded as a degree-$(d-1)$ polynomial $F$ through the \emph{number-theoretic transform (NTT)}, where, for a multiplicative subgroup $\Omega$ of $\F$ with $\abs{\Omega} = d$ and generator $\omega$, $F(\omega^i) = \adbf{v}_i$ for $i \in [d]$. We utilize the following cryptographic primitives:

\subsubsection{Pseudo-random Number Generator (PRNG)} \label{sec:prelim-crypto-prng}
To verify large numbers of random values under high dimensionality, we use pseudo-random number generators (PRNGs) to facilitate the process and reduce communication overhead between the prover and verifier. \IfFull{A PRNG is considered cryptographically secure if its output is computationally indistinguishable from true randomness~\cite{DBLP:books/crc/KatzLindell2014}, as formalized in Definition~\ref{def:csprng}:

\begin{Def}\label{def:csprng}
    A PRNG $G: S \to R$, where $S$ is the space of random seeds and $R$ is the space of random outputs, is cryptographically secure if, for any PPT adversary \adv, there exists $\mu(\kappa)\in \negl[\kappa]$ such that     
    \begin{equation}
        \prob{\mathcal{A}(G(s)) = 1: \\
        s \sample S} - \prob{\mathcal{A}(r) = 1: \\
        r \sample R} \leq \mu(\kappa).
    \end{equation}
\end{Def}} For efficiency, we use the Legendre pseudo-random function (LPRF)~\cite{DBLP:conf/crypto/Damgard88,DBLP:journals/tosc/BeullensBUV20, DBLP:conf/pqcrypto/BeullensG20, DBLP:conf/csfw/MayZ22} with parallel PRNG construction~\cite{DBLP:conf/sc/SalmonMDS11} to generate fair coins, as applied in previous MPC and ZKP schemes for machine learning and data management~\cite{DBLP:conf/ccs/0001RRSS16, DBLP:conf/uss/ChangSCKP23}. In particular, given dimensionality $d$, $\sf{LPRF}\left(\cdot; d\right)$ takes the input of a random seed $s \sample \F$ and outputs $d$ pseudo-random bits as
\begin{equation}
    \sf{LPRF}\left(s; d\right) := \left(L_s\left(0\right), L_s\left(1\right), \dots, L_s\left(d-1\right)\right),
\end{equation}
where $L_s(k)$ is $1$ if there exists $x$ such that $x^2 = k+s$, and $0$ otherwise. The correctness of $L_s(k)$ can be verified by pre-computing a non-perfect square $t$ of $\F$, and then computing $x := \case{\sqrt{t(k+s)} & \text{if }b = 0 \\ \sqrt{k+s} & \text{if }b = 1}$, so it suffices to verify the correctness of the square relation and that $b \in \bin$, i.e.,
\begin{equation}
    x^2 = \left((1-b)t + b\right)(k+s) \land b(1-b) = 0. \label{eq:lprf-proof}
\end{equation}

\subsubsection{Zero-knowledge Proofs (ZKP)} \label{sec:prelim-crypto-zkp}
A \emph{zero-knowledge proof (ZKP)}~\cite{DBLP:conf/crypto/BellareG92, DBLP:conf/stoc/GoldwasserMR85, DBLP:conf/crypto/FiatS86, DBLP:conf/crypto/GoldreichMW86, DBLP:journals/ftsec/Thaler22} is a potentially interactive protocol
\begin{equation}
    \Pi := \left(\prover\left(w, x, y\right) \leftrightarrow \verifier\left(x, y\right)\right)\label{eq:zkp}
\end{equation}
where a possibly malicious prover $\prover$ demonstrates to a semi-honest verifier $\verifier$ (both parties are PPT) that a statement in the form of $y = C(w||x)$ is true, where $C$ is a deterministic arithmetic circuit. The output $y$ and part of the input $x$ are known to both parties, while the other part of the input $w$ is known only to the prover. Note that Equation~\eqref{eq:zkp} represents the exact protocol that an honest prover adheres to. Meanwhile, for a generic and possibly malicious prover $\prover$, we represent the execution of $\Pi$ as $\prover \leftrightarrow \verifier(x, y)$, $\verifier(\prover, x, y)$, or $\Pi\left(\prover, x, y\right)$. For a valid ZKP scheme with security parameter $\kappa \in \N$, it satisfies the following properties for any $x, y$:

\begin{itemize}[leftmargin=*]
    \item \textbf{(Completeness)} For any $w$ such that $y = C(w||x)$, $\verifier$ outputs $1$ (i.e., acceptance) after executing $\Pi$.
    \item \textbf{(Soundness)} If there does not exist $w$ such that $y = C(w||x)$, for any $\prover$, \(\Pr\left[\verifier\left(\prover, x, y\right) = 1\right] \leq \negl[\kappa]\).
    \item \textbf{(Zero-knowledge)} There exists a PPT simulator $\simulator$ such that, for any $w$ such that $y = C(w||x)$, $\simulator\left(x, y\right)$ is computationally indistinguishable from $\verifier$'s view in $\Pi$.
\end{itemize}

The completeness and soundness properties ensure that the proof is accepted if and only if the prover is honest. Meanwhile, the zero-knowledge property ensures that the proof leaks no information other than the existence of $w$ such that $y = C(w||x)$, which serves the purpose of ensuring faithful execution while limiting additional information leakage. However, in our context, merely preventing the non-existence of $w$ is insufficient, as it does not stop the prover from biasing the output distribution. In particular, since DP mechanisms are randomized functions, a large number of outputs (often all outputs in a certain output space) are possible for each input. Even if the prover does not know the value $w$ (e.g., the input database), the mere existence of $w$, and therefore soundness, is trivial and does not imply the correctness of the sampling distribution given the input. Therefore, we additionally require knowledge soundness:

\begin{itemize}[leftmargin=*]
    \item \textbf{(Knowledge Soundness)} \VersionText{The acceptance of the proof implies that $\prover$ \textbf{knows}\footnote{The notion of knowledge~\cite{DBLP:journals/ftsec/Thaler22} is formalized in the full version.} $w$ such that $y = C\left(w || x\right)$.}{There exists a PPT extractor $\ext$ with rewinding access to the prover that can extract a valid $w$ from accepted provers with negligible soundness error, specifically,
    \begin{equation}
        \prob{y = C(w||x):\\
        w \gets \ext\left(\prover, x, y\right)} \geq \Pr\left[ \verifier(\prover, x, y) = 1\right] - \negl[\kappa].
    \end{equation}}
\end{itemize}

\IfFull{Here, knowledge soundness requires that the value of $w$ can be extracted from $\prover$, which formalizes the notion that \emph{$\prover$ knows the value of $w$ (e.g., individual data)} and serves the purpose of verifiable computations of DP mechanisms given adequate handling of randomness.}

In this study, we use the Pedersen commitment~\cite{DBLP:conf/crypto/Pedersen91} for scalars and the KZG commitment scheme~\cite{DBLP:conf/asiacrypt/KateZG10} for multi-dimensional entities encoded as polynomials. These commitment schemes operate over $\F$ and a cyclic group $\G$ isomorphic to the addition operation of $\F$, and achieve computational binding and hiding. Coupled with the \emph{interactive oracle proofs (IOPs)}~\cite{DBLP:books/daglib/0023084, DBLP:journals/iacr/BlumbergTVW14, DBLP:conf/tcc/Lee21, DBLP:journals/iacr/GabizonWC19, DBLP:conf/eurocrypt/ChenBBZ23} that maintain completeness, knowledge soundness, and zero-knowledge, we construct proofs over arithmetic circuits $C$. The circuit $C$ captures the correct execution of LPRF as in Equation~\eqref{eq:lprf-proof}, as well as the transformation from uniformly random bits to the final output with the desired distributions.

\IfFull{\paragraph{KZG Polynomial Commitment} The KZG polynomial commitment~\cite{DBLP:conf/asiacrypt/KateZG10} scheme utilizes pairing, formalized as Definition~\ref{def:pairing}.

\begin{Def}[Pairing~\cite{DBLP:conf/ima/KoblitzM05, DBLP:journals/dam/GalbraithPS08}]\label{def:pairing}
    Assume cyclic groups $\G_1, \G_2, \G_T$ with prime order $q$ where the discrete log assumptions hold. A \textbf{pairing} is a function $e: \G_1 \times \G_2 \to \G_T$ such that 

    \begin{itemize}
        \item \textbf{(Bilinear)} $e\left(g_1^a, g_2^b\right) = {e\left(g_1, g_2\right)}^{ab}$;
        \item \textbf{(Non-degenerate)} $e \neq 1$;
        \item \textbf{(Computability)} $e$ can be efficiently computed.
    \end{itemize}
\end{Def}

The scheme comprises the following components:

\begin{itemize}[leftmargin=*]
    \item $\sf{pp} \gets \sf{Setup}_\sf{KZG}\left(\F, \G_1, \G_2, d-1\right)$ sets up the public parameters used for committing to the polynomials. Specifically, for $\abs{\F} = \abs{\G_1} = \abs{\G_2} = q$, it samples $g, h \sample \G_1$, $g_2 \sample \G_2$, and $\tau \sample \F$, and computes \begin{equation}\sf{pp} := \left(g, g^\tau, g^{\tau^2}, \dots, g^{\tau^{d-1}}, h, h^\tau, h^{\tau^2}, \dots, h^{\tau^{d-1}}, g_2, g_2^\tau\right).\end{equation} Note that this operation has $\bigO{d}$ complexity.
    \item $\sf{com}_F \gets \sf{Commit}_\sf{KZG}\left(F, R; \sf{pp}\right)$ computes the commitment of $F_{\leq {d-1}}[X]$ using randomly sampled $R \sample F_{\leq {d-1}} [X]$, as \begin{equation}\sf{com}_F := g^{F(\tau)}h^{R(\tau)} = \prod_{j=0}^{d-1}\left(g^{\tau^j}\right)^{F_j}\left(h^{\tau^j}\right)^{R_j}.\end{equation} Note that this operation has $\bigO{d}$ complexity, while the size of the commitment is $\bigO{1}$.
    \item $\pi \gets \prover_\sf{KZG}\left(y, x, F, R; \sf{pp}\right)$ proves that $y = F(x)$ by showing $(X - x)$ divides $\left(F(X) - y\right)$. Specifically, \begin{equation}\pi := \left(\rho := R(x), \gamma := g^{\frac{F(\tau) - y}{\tau - x}}h^{\frac{R(\tau) - r}{\tau - x}}\right).\end{equation} Note that this operation has $\bigO{d}$ complexity, while the size of the proof is $\bigO{1}$.
    \item $\sf{accept}/\sf{reject} \gets \verifier_\sf{KZG}\left(y, x, \pi, \sf{com}; \sf{pp}\right)$. Specifically, the verifier checks whether \begin{equation}e\left(\sf{com}_{F'}, g_2^\tau \cdot g_2^{-x}\right) = e\left(\sf{com}_{F} \cdot g^{-y} \cdot h^{-r}, g_2\right).\end{equation} Note that this operation has $\bigO{1}$ complexity.
\end{itemize}

When hiding is not required (e.g., when committing to a publicly known polynomial $F$), the randomness $R$ is set as $0$ and omitted from the notations. Moreover, for single-dimensional entities, the commitment scheme (i.e., the Pedersen commitment scheme~\cite{DBLP:conf/crypto/Pedersen91}) corresponds to the case $d = 1$, such that $x \in \F$ is committed as $g^x h^r$, where $r \sample \F$.

Moreover, the commitment scheme satisfies the following properties:
\begin{itemize}[leftmargin=*]
    \item \textbf{(Binding)} For any $\sf{pp}$, any PPT adversary $\adv$, and any $\sf{com} \in \G$,
    \begin{equation}
        \prob{F_1 \neq F_2 \\
         \land \sf{Commit}_\sf{KZG}\left(F_1, R_1; \sf{pp}\right) = \sf{com}\\
         \land \sf{Commit}_\sf{KZG}\left(F_2, R_2; \sf{pp}\right) = \sf{com}: \\
         F_1, F_2, R_1, R_2 \gets \adv\left(\sf{com}, F; \sf{pp}\right)}\leq \negl[\kappa],
    \end{equation}
    i.e., it is hard for $\adv$ to compute two valid polynomials with the same commitment.
    \item \textbf{(Hiding)} There exists a simulator $\simulator$ such that, for any $\sf{pp}$, $\simulator(\sf{pp})$ is computationally indistinguishable from
    \begin{equation}\sf{Commit}_\sf{KZG}\left(F, R; \sf{pp}\right): R \sample F_{\leq d-1}[X]\end{equation}
    for any $\sf{pp}$ and $F$.
\end{itemize}

Note that the verifier only has access to $\sf{com}_F$, while $F$ and $R$ are kept secret by the prover. The verifier learns no information about $F$ due to the randomness of $R$.}

\subsubsection{Secret-sharing} \label{sec:prelim-crypto-ss}
A secret-sharing scheme~\cite{DBLP:journals/cacm/Shamir79, DBLP:conf/mark2/Blakley79} is a method by which a secret $x$ is divided into $n$ shares
\begin{equation} \label{eq:ss}
    \left(\secretshare{x}_i\right)_{i \in [n]} \sample \sf{SecretShare}(x)
\end{equation}
such that the secret can be reconstructed only when a sufficient number of shares are combined. Specifically, a threshold $t$ is set, and any $t$ or more shares can be used to recover the secret via $\sf{RecSec}(\cdot)$, while fewer than $t$ shares reveal no information about the secret. In this study, we use additive secret-sharing (where $t=n$) to instantiate the new protocols.

\IfFull{\paragraph{Additive Secret-sharing} In an additive secret-sharing scheme, the secret $x \in \F$ is split into $n$ shares such that the sum of all shares equals the secret. Formally, for $n$ shares, $\left(\secretshare{x}_i\right)_{i \in [n-1]}$ are uniformly randomly chosen from $\F$, and the final share is computed as:
\begin{equation} \label{eq:additive-ss}
    \secretshare{x}_{n-1} = x - \sum_{i=0}^{n-2} \secretshare{x}_i.
\end{equation}
This ensures that the sum of all shares equals the original secret $x$. The threshold $t$ for recovering the secret in this scheme is $n$, meaning that any $n$ shares can be used to reconstruct the secret, while fewer than $n$ shares are uniformly random and provide no information about $x$.

The multi-dimensional version of this scheme is simply Equation~\eqref{eq:additive-ss} applied element-wise, with each element of the secret vector shared independently.}
\section{Privacy Definition} \label{sec:def}
In this section, we describe the problem setup as a distributed mechanism among the clients, servers, and a data analyst (verifier), and identify the challenges in making the mechanism verifiable to the data analyst. We then present our solution framework to address these challenges with formal security and privacy guarantees.
\subsection{Client-server-verifier Setup}\label{sec:def-setup}

The setup considered in this study involves $n_\cli \geq 1$ \textbf{potentially malicious clients}, $n_\ser \geq 1$ \textbf{potentially malicious servers}, and a \textbf{semi-honest} data analyst (who also acts as a \textbf{verifier}). As a motivating example, consider a financial institution setting: the clients $\vec{\cli} = \left(\cli_j\right)_{j \in \left[n_\cli\right]}$ are individual account holders, each holding a local transaction history $D_j \in \D_j$; the servers $\vec{\ser} = \left(\ser_i\right)_{i \in \left[n_\ser\right]}$ are internal bank servers distributed across departments and branches, responsible for processing these data under a differentially private mechanism; and the data analyst $\verifier$ is an external auditor interested in verifying the correctness of the reported analytical results.

Each client $\cli_j$ secret-shares $D_j$ with a subset of servers indexed by $I_j$, which limits the risk of individual data leakage unless a large number of servers collude. Each server $\ser_i$ aggregates the shares received from clients in $J_i := \left\{j: i \in I_j\right\}$, executes the designated computation, and submits the result to $\verifier$.

In this context, \emph{malicious} clients or servers may deviate from the prescribed protocol: clients may submit malformed or fabricated records $D_j \notin \D_j$ (e.g., transactions that are inconsistent with balance records or dated in the year 221 BC), while servers may alter the computation to weaken privacy guarantees or fabricate plausible-looking outputs to pass audits. Apart from malicious parties, the remaining servers and clients are assumed to be \emph{semi-honest}: they hold valid data and follow the protocol correctly but may still attempt to infer private data held by other clients.

The data analyst $\verifier$ is also modelled as \emph{semi-honest}. That is, it follows the audit protocol and reports the outcome (pass or fail) honestly but may abuse access to attempt to infer sensitive transaction data from individuals. $\verifier$ is responsible for verifying both \textbf{1)} the validity of the client databases (i.e., $D_j \in \D_j$) and \textbf{2)} the correctness of the computations performed by the servers.

We also assume that all semi-honest and malicious parties may collude. That is, malicious clients and servers may coordinate their deviations from the protocol (e.g., several clients and departments of the bank may collusively conceal illegal activities such as money laundering). Moreover, $\verifier$ may collaborate with all semi-honest and malicious clients and servers to extract additional information about other clients' databases from the outputs and proofs.

Note that both central DP and local DP can be viewed as special cases of the client-server-verifier setup: in central DP, $n_\ser = 1$ and $I_j = \left\{0\right\}$ for each $j$; in local DP, $n_\ser = n_\cli$ and $I_j = \left\{j\right\}$ for each $j$. The secret-sharing schemes are trivial in both cases. Meanwhile, in generic secret-sharing-based distributed DP mechanisms where each client secret-shares to all servers~\cite{DBLP:conf/uss/BohlerK20, DBLP:conf/ccs/BohlerK21, DBLP:journals/tdsc/GoryczkaX17, DBLP:conf/nips/HeikkilaLKSTH17}, we have $I_j = [n_\ser]$ for each $j$ and $J_i = [n_\cli]$ for each $i$.

To unify these three common cases, we assume that for any pair of clients $\cli_j$ and $\cli_{j'}$, either $I_j = I_{j'}$ or $I_j \cap I_{j'} = \emptyset$. By elementary combinatorics, this implies that for any pair of servers $\ser_i$ and $\ser_{i'}$, either $J_i = J_{i'}$ or $J_i \cap J_{i'} = \emptyset$. Therefore, for any set of servers $I$ that have received secret-shares from the same set of clients, we use $J_I$ to denote these clients.
\subsection{Intermediate solution without verifiability}\label{sec:def-intermediate-sol}

We first \textbf{temporarily} drop the verifiability requirement and instead assume that the subsets of semi-honest servers and clients (denoted by $I^*$ and $J^*$) are identified by an oracle, such that the other servers and clients that deviate from the protocol are subsequently removed. We then formalize the other components of such a distributed protocol:

\begin{itemize}[leftmargin=*]
    \item $\left(\secretshare{D_j}_i\right)_{i\in I_j} \sample \sf{SecretShare}(D_j)$: Each $\cli_j$ creates secret-shares and sends them to the servers.
    \item $\secretshare{D}_i \gets \sf{AggrShare}\left(\secretshare{D_j}_i: j \in J^* \cap J_i\right)$: Each server aggregates the secret-shares from the clients in $J^*$.
    \item $\secretshare{y}_i \sample \mathcal{F}(\secretshare{D}_i)$: Each $\ser_i$ performs a prescribed secret-shared and randomized computation (the \emph{server function}) $\mathcal{F}$ and computes the output $\secretshare{y}_i$.
    \item $(I, J) \gets \sf{IdUsable}\left(I^*, J^*\right), y \gets \sf{Aggr}\left(\secretshare{y}_i : i \in I\right)$: After the removal of malicious clients and servers, the final output can only be computed with respect to the input databases from a subset of clients $J \subseteq J^*$, using the secret-shared outputs from only a subset of servers $I \subseteq I^*$. $\verifier$ identifies $I$ and $J$ with $\sf{IdUsable}$ and aggregates the secret-shared outputs using $\sf{Aggr}$. We require that, for $(I, J) \gets \sf{IdUsable}\left(I^*, J^*\right)$, the equivalent input database to the protocol can be fully recovered directly as $D_\cli \gets \sf{AggrDB}\left(D_j: j \in J\right)$. $\sf{AggrDB}$ can be instantiated as concatenation for generic databases or summation when the $D_j$ are local histograms for counting queries. Meanwhile, we also require that the same database can be recovered from the servers' aggregated secret-shares as $D_\ser \gets \sf{RecDB}\left(\secretshare{D}_i: i \in I\right)$, such that $D_\cli = D_\ser$.
\end{itemize}

\subsubsection{Defining Distributed DP} Due to the distributed nature, the notion of DP in such a protocol can only be defined on a per-client basis by considering a pair of neighbouring local databases $D_j$ and $D_j'$ for each client $\cli_j$. Moreover, if all servers that process $D_j$ were \emph{colluding}, achieving any degree of DP would be impossible, as $D_j$ could be fully recovered due to the secret-sharing scheme. Therefore, DP can only be achieved for the clients when the number of servers that are not fully honest is bounded. Denoting all fully honest servers in $I_j$ as $H_j$, we treat all other clients except $\cli_j$, all servers other than $H_j$, as well as $\verifier$, collectively as a collusion $\mathfrak{C}$.

\begin{figure}[!t]
    \centering
    \includegraphics[width=\linewidth]{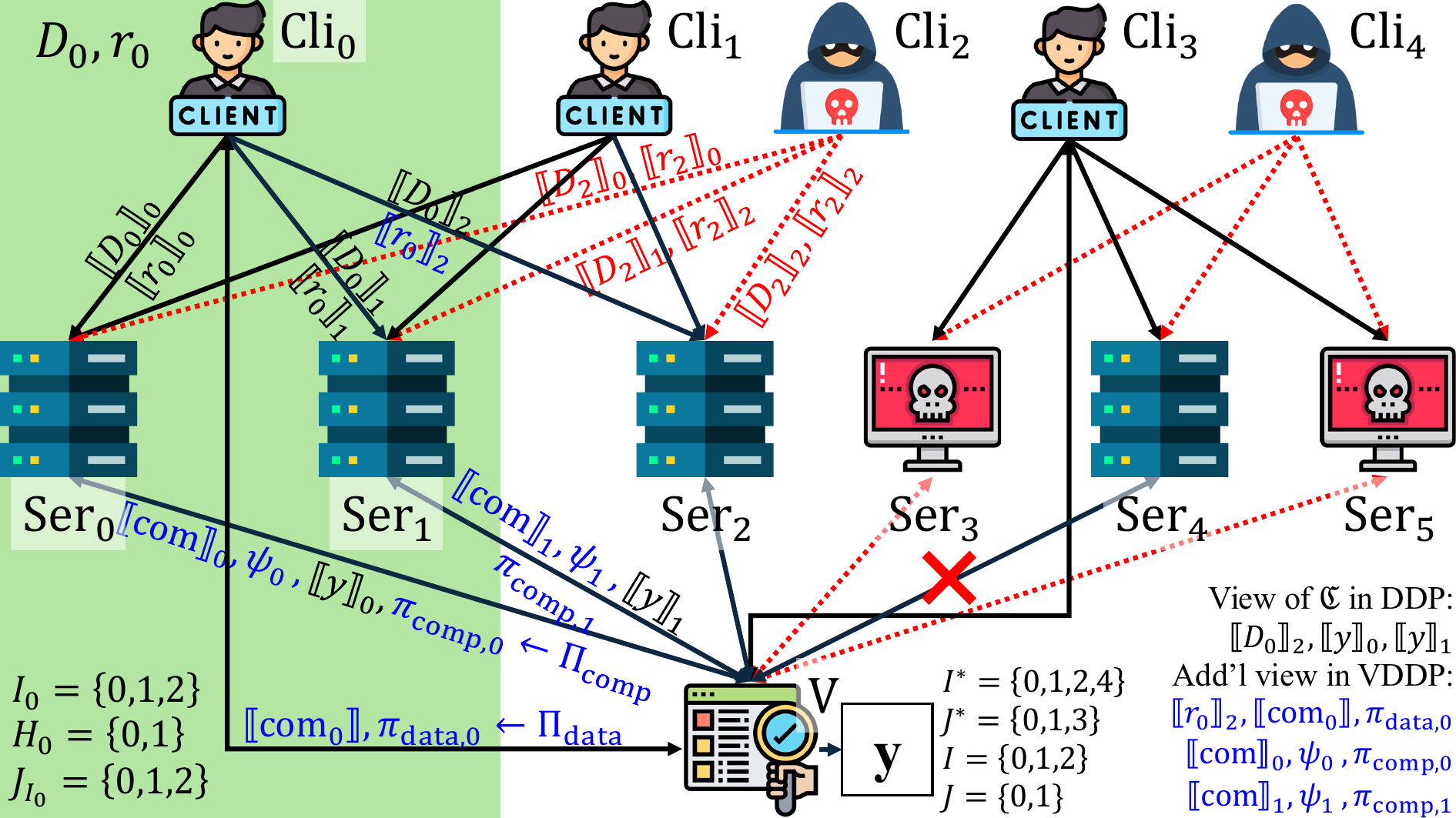}
    \caption{Problem setup of DDP and VDDP ($n_\cli = 5, n_\ser = 6$), targeting $\cli_0$'s data. All parties except $\cli_0$ and $H_0$ form a collusion $\mathfrak{C}$, although some parties in $\mathfrak{C}$ are semi-honest and adhere to the protocol. All transmissions within and leaving the green region (black solid lines) are honestly computed but may cause information leakage and enable $\mathfrak{C}$ to compute additional information about $D_0$. The additional leakages under VDDP (commitments and proofs) are highlighted in blue. The transmissions from malicious clients (red dotted arrows) and servers may be adversarially reverse-engineered.}
    \label{fig:setting}
\end{figure}

\begin{Exm}
    Figure~\ref{fig:setting} illustrates the setup with $n_\cli = 5$ clients and $n_\ser = 6$ servers. $\cli_0$ sends secret-shares of its data $D_0$ to three servers: $\ser_0$, $\ser_1$, and $\ser_2$. The largest possible colluding group $\mathfrak{C}$ interested in learning $\cli_0$'s data includes all clients except $\cli_0$, all servers except $H_0 = \left\{0, 1\right\}$ (i.e., $\ser_0$ and $\ser_1$), as well as $\verifier$. Adding $\ser_0$ or $\ser_1$ to $\mathfrak{C}$ would enable them to recover the true data of $\cli_0$, as $\mathfrak{C}$ would receive all secret-shares from $\cli_0$. In this case, $\mathfrak{C}$ receives $\secretshare{D_0}_2$ from $\cli_0$, and $\secretshare{y}_0$ and $\secretshare{y}_1$ from $\ser_0$ and $\ser_1$, respectively. 
\end{Exm}

The server function output does not depend solely on $D_j$ itself but also on the secret-shares $S_\text{out}$ that $\cli_j$ transmits to $\mathfrak{C}$, the secret-shares $S_\text{in}$ transmitted back to $H_j$, and the subset of clients $J^*$. Therefore, we treat these additional inputs as $S_\text{out}$, $S_\text{in}$, and $J^*$.

\begin{Def}\label{def:ddp-dp}
    A server function $\mathcal{F}$ satisfies $(\tau, \epsilon, \delta)$-distributed differential privacy (DDP) if $\sf{View}_\mathcal{F}^\mathfrak{C}\left(\cdot, S_\text{out}, S_\text{in}, J^*\right)$ is $\left(\epsilon, \delta\right)$-DP for any $S_\text{out}, S_\text{in}, J^*$, where
    \procedureblock[space=auto, linenumbering]{$\sf{View}_\mathcal{F}^\mathfrak{C}\left(D_j, S_\text{out}, S_\text{in}, J^*\right)$}{
        \label{pcln:view-ddp-ss} \left(\secretshare{D_j}_i\right)_{i \in H_j} \sample \sf{PartialShare}\left(D_j, S_\text{out}\right)\\
        \pcforeach i \in H_j \pcdo\\
            \label{pcln:view-ddp-F}\secretshare{y}_i \sample \mathcal{F}\left(\sf{AggrShare}\left(\secretshare{D_{j'}}_i: j' \in J^* \cap J_{H_j}\right)\right)\\
        \pcendfor\\
        \label{pcln:view-ddp-Y}\pcreturn Y := \left(\secretshare{y}_i\right)_{i \in H_j}
    } 
    and $S_\text{out} := \left(\secretshare{D_j}_i\right)_{i \in I_j \backslash H_j}$ are the secret-shares from $\cli_j$ to the servers not in $H_j$, $S_\text{in} := \left(\secretshare{D_{j'}}_i\right)_{i \in H_j, j' \in I_{H_j} \backslash \left\{j\right\}}$ are the secret-shares from other clients to $H_j$, $J^*$ (where $j \in J^*$) is the subset of clients that remain in the protocol, and $\sf{PartialShare}\left(D_j, S_\text{out}\right)$ is the conditional distribution of $\left(\secretshare{D_j}_i\right)_{i \in H_j}$ given $S_\text{out}$.
\end{Def}

\begin{Exm}
    In Figure~\ref{fig:setting}, $S_\text{out} = \secretshare{D_0}_2$ is honestly computed by $\cli_0$, while $S_\text{in} = \left(\secretshare{D_2}_0, \secretshare{D_2}_1\right)$ is potentially adversarially computed by malicious $\cli_2$ and transmitted to $H_0$. However, when $J^*$ is revealed in Line~\ref{pcln:view-ddp-F}, the honest servers $\ser_0$ and $\ser_1$ compute the server function only on the secret-shares received from $\cli_0$ and $\cli_1$.
\end{Exm}

Note that the additional privacy parameter $\tau$ is closely related to the secret-sharing scheme's threshold $t$ for recovering the secret. Specifically, $\tau \leq t - 1$, since more than $t - 1$ colluding servers can perfectly recover the secret-shared $D_j$. Therefore, in the concrete designs of the protocols, we aim to achieve $\tau = t - 1$ to maximize tolerance against colluding servers. We also employ the notion of computational DP to accommodate the application of PRNGs.
\subsection{Towards Verifiability: Challenges}\label{sec:def-challenges}

We further \textbf{realize the assumption} that $I^*$ and $J^*$ are identified by imposing additional verifications on the clients and servers, which implement the functionalities of the oracle. We identify several challenges in fulfilling this requirement:

\begin{enumerate}[wide, labelwidth=!, labelindent=0pt, label=\textbf{Challenge~\arabic*.}, ref=\textbf{Challenge~\arabic*}]

\item \label{challenge:randomness} The server function $\mathcal{F}$ is vulnerable to attacks by \textbf{malicious servers}, potentially in a collusive manner, as illustrated in Example~\ref{exm:server-collusion}. Therefore, it is insufficient to enforce the correctness of the output distribution of individual servers; rather, the joint distribution of all servers must be verified. In particular, in adaptations of randomness to verifiable computation~\cite{DBLP:books/cu/Goldreich2001, BC23}, the servers' sampling processes must be uncorrelated to ensure the correctness of the final output distribution and to establish the end-to-end security guarantee. 

\begin{Exm} \label{exm:server-collusion}
    When applying the distributed Gaussian mechanism, two colluding servers $\ser_{i_1}$ and $\ser_{i_2}$ may sample a pair of mutually cancelling noise values $r$ and $-r$, while $r$ (and $-r$) still follows the prescribed Gaussian distribution. Therefore, although the marginal output distributions of each server are correct, the cancellation of the noise nullifies the prescribed DP guarantee.
\end{Exm}

\item \label{challenge:acceptance} Unlike MPC protocols for DP mechanisms~\cite{DBLP:conf/uss/BohlerK20, DBLP:conf/ccs/BohlerK21, DBLP:conf/eurocrypt/DworkKMMN06, DBLP:conf/ccs/ChampionSU19, DBLP:conf/ccs/WeiYFCW23}, there is no assumption that a certain portion (e.g., two-thirds or half) of the servers are honest. Instead, the honesty of the servers (and clients) is inspected by an external verifier. Additionally, the local databases of a subset of clients may be removed from the input upon the discovery of dishonest clients and servers to prevent them from affecting the quality of the result received by the verifier. Therefore, the MPC ideal functionality over one fixed secret-shared input database cannot be well defined. Instead, the end-to-end security guarantee must properly capture the interference among \textbf{1)} the honesty of each client and server, \textbf{2)} whether each of them is accepted, and \textbf{3)} the output of the protocol.

\begin{Exm}\label{exm:collusion}
    In Figure~\ref{fig:setting}, $\cli_0$ and $\cli_1$ (with inputs $D_0$ and $D_1$) are both semi-honest and adhere to the protocol. Even though another dishonest client $\cli_2$ submits its data to the same set of servers $\left\{\ser_0, \ser_1, \ser_2\right\}$, the protocol may still continue on the input of $D_0$ and $D_1$ after the dishonesty of $\cli_2$ is revealed. Meanwhile, for $\cli_3$, as too many servers processing its data ($\ser_3$ and $\ser_5$) are dishonest, it should be excluded from the protocol to preserve the integrity of the final result.
\end{Exm} 

\item \label{challenge:privacy} Unlike conventional secure computations of DP mechanisms~\cite{DBLP:conf/uss/BohlerK20, DBLP:conf/ccs/BohlerK21, DBLP:conf/eurocrypt/DworkKMMN06, DBLP:conf/ccs/ChampionSU19, DBLP:conf/ccs/WeiYFCW23}, despite the inherited privacy guarantees on the protocol output, the proof of correct execution may cause additional information leakage, thereby breaking the original DP guarantee. Previous studies utilize ZK proofs~\cite{KCY21, BC23, DBLP:conf/iclr/ShamsabadiTCBHP24} to prevent further leakage at this stage. However, as two different privacy notions, it is worth exploring whether the ZK property of the proof is necessary and sufficient for achieving end-to-end DP. More importantly, due to the distributed nature of the setting, the privacy guarantee must be established with respect to each client's local database against other clients, servers, and the verifier.
\end{enumerate}
\subsection{Randomness Disentanglement} \label{sec:def-rd}

In response to \ref{challenge:randomness}, it is necessary to convert the randomized $\mathcal{F}$ into a deterministic function with auxiliary inputs that represent the randomness in $\mathcal{F}$. We first describe two simple attempts to achieve this conversion:

\begin{itemize}[leftmargin=*]
    \item If the servers solely determine the randomness, it remains impossible to affirm the underlying distribution of the randomness through a single sample. 
    \item If the verifier solely determines the randomness, the knowledge of the randomness term breaks the DP guarantee. Additive noises (e.g., discrete Laplace) are especially vulnerable, as the unperturbed output can be fully recovered by removing the noise.
\end{itemize}

The aforementioned security and privacy issues necessitate distributing the task of randomness generation between the verifier and the servers. Therefore, we employ a public-coin--private-coin model to disentangle the randomness between the two parties, as in Definition~\ref{def:rd}.

\begin{Def}[Randomness disentanglement (RD)] \label{def:rd}
    Given the randomized function $\mathcal{F}: \mathcal{X} \to \mathcal{Y}$, we call 
    a triplet $\left(f, \mathcal{P}_\sigma, \mathcal{P}_\phi\right)$ a randomness disentanglement (RD) for $\mathcal{F}$,
    where $\mathcal{P}_\sigma, \mathcal{P}_\phi$ are two probability distributions and $f: \mathcal{X} \times \supp{\mathcal{P}_\sigma} \times \supp{\mathcal{P}_\phi} \to \mathcal{Y}$ is a deterministic function, if and only if for any $x \in \mathcal{X}$:
    \begin{itemize}[leftmargin=*]
        \item For any fixed $\sigma \in \supp{\mathcal{P}_\sigma}$, the distribution of 
        $f(x, \sigma, \phi): \phi \sample \mathcal{P}_\phi$ is computationally indistinguishable from $\mathcal{F}(x)$. 
        \item For any fixed $\phi \in \supp{\mathcal{P}_\phi}$, the distribution of 
        $f(x, \sigma, \phi): \sigma \sample \mathcal{P}_\sigma$ is computationally indistinguishable from $\mathcal{F}(x)$.
    \end{itemize}
\end{Def}

In Definition~\ref{def:rd}, the server controls the obfuscation term (private coin) $\sigma$, while the verifier controls the generation of the public coin $\phi$. Hence, regardless of the server's choice of $\sigma$, it cannot deviate from the desired output distribution. Meanwhile, regardless of the verifier's choice of $\phi$, the distribution of the output remains the same, thereby offering equivalent privacy protection such that adversaries cannot obtain additional information through the additional view of $\phi$.

\begin{Exm}[RD for $\left(\ln 2, 0\right)$-local DP randomized response]\label{eg:rd-toy-rr}
    Consider the randomized response mechanism $\M : \bin \to \bin$ such that $\prob{\M(x) = x} = \frac{2}{3}$ and $\prob{\M(x) = 1 - x} = \frac{1}{3}$, which achieves $\left(\ln 2, 0\right)$-local DP. By elementary probability theory, an RD for $\M$ can be constructed as $f: \bin \times \left\{0, 1, 2\right\} \times \left\{0, 1, 2\right\} \to \bin$, defined by
    \begin{equation}
        f\left(x, \sigma, \phi\right) = \case{x &\text{if } \sigma \neq \phi\\ 1 - x &\text{if } \sigma = \phi},
    \end{equation}
    where $\sigma$ and $\phi$ are both uniformly randomly sampled from $\left\{0, 1, 2\right\}$. Note that by fixing either $\sigma$ or $\phi$, the randomness of the other causes $f\left(x, \sigma, \phi\right)$ to have the same distribution as $\M(x)$.
\end{Exm}

\begin{Exm}[RD for LPRF]\label{eg:rd-lprf}
    Consider any server function $\mathcal{F}$ of the form where there exists a deterministic arithmetic circuit $C$ such that, for any input $\secretshare{D}_i$, $\mathcal{F}\left(\secretshare{D}_i\right)$ has the same distribution as $C\left(\secretshare{D}_i, \mathbf{b}_i\right)$, where $\mathbf{b}_i\sample \bin^B$. Therefore, with LPRF introduced in Section~\ref{sec:prelim-crypto}, $\mathcal{F}\left(\secretshare{D}_i\right)$ is computationally indistinguishable from $C\left(\secretshare{D}_i, \sf{LPRF}\left(s_i\right)\right)$, where $s_i\sample \F$. 
    Hence, the RD of $\mathcal{F}$ can be constructed as
    \begin{equation}
        f\left(\secretshare{D}_i, \sigma_i, \phi_i\right) := C\left(\secretshare{D}_i, \sf{LPRF}\left(\sigma_i + \phi_i \right)\right), \label{eq:rd-lprf}
    \end{equation}
    where $\sigma_i$ and $\phi_i$ are both sampled from the uniform distribution over $\F$. Fixing one of $\sigma_i$ or $\phi_i$ and sampling the other uniformly ensures that $\sigma_i + \phi_i$ is uniformly distributed over $\F$, which yields the same output distribution as $\mathcal{F}\left(\secretshare{D}_i\right)$, thereby satisfying Definition~\ref{def:rd}. The server must then prove to the verifier both the correctness of the computation over $\sf{LPRF}$ as in Equation~\eqref{eq:lprf-proof}, and that of the circuit $C$.
\end{Exm}

\paragraph{Resistance against server collusion}
In addition to the malicious deviations performed by individual servers, multiple servers may collude to further bias the output distribution, as in Example~\ref{exm:collusion}. However, RD preempts this type of collusion.

\begin{Lem}\label{lem:rd-collusion}
    For any joint distribution over $\left(\sigma_0, \sigma_1, \dots, \sigma_{n_\ser - 1}\right)$, if each $\phi_i$ where $i \in [n_\ser]$ is i.i.d. sampled from $\mathcal{P}_\phi$, then the joint distribution of $\left(f(x_i, \sigma_i, \phi_i)\right)_{i\in [n_\ser]}$ is computationally indistinguishable from the Cartesian product of the distributions over all $\mathcal{F}(x_i)$s, i.e., $\bigotimes_{i \in [n_\ser]}\mathcal{F}(x_i)$.
\end{Lem}

\IfFull{\begin{proof}[Proof of Lemma~\ref{lem:rd-collusion}]
    Condition on any fixed tuple of \begin{equation}(\sigma_0, \sigma_1, \dots, \sigma_{n_\ser - 1}).\end{equation} By Definition~\ref{def:rd}, the conditional distribution of $\left(f(x_i, \sigma_i, \phi_i)\right)_{i=0}^{n_\ser-1}$ is identical to $C\left(x_i, \sf{LPRF}\left(s_i\right)\right)$ for $s_i \sample \F$. Therefore, for any joint distribution of $\sigma_i$s, the outputs still follow the same distribution, which is computationally indistinguishable from $\bigotimes_{i \in [n_\ser]}\mathcal{F}(x_i)$.
\end{proof}}

By Lemma~\ref{lem:rd-collusion}, even if malicious servers attempt to correlate their choices of obfuscation factors, the server function outputs remain mutually independent, as desired.
\subsection{I2DP: Interactive Distributed Proof of DP} \label{sec:def-vddp}

Given the RD described in Section~\ref{sec:def-rd}, which resolves \ref{challenge:randomness}, in this section, we define an \emph{Interactive Distributed Proof of Differential Privacy (I2DP)}, through which potentially malicious clients and servers prove to the semi-honest verifier the validity of their local databases and the correctness of their computations over the RDs. Before the execution of I2DP, the following prerequisites prepare the inputs to the protocol:

\begin{itemize}[leftmargin=*]
    \item $\sf{pp} \sample \sf{Setup}\left(1^\kappa \right)$ generates the public parameter $\sf{pp}$ required for the application of cryptographic primitives.
    \item $\left(\secretshare{D_j}_i\right)_{i \in I_n} \sample \sf{SecretShare}(D_j)$, as in Section~\ref{sec:def-setup}.
    \item $\secretshare{\sf{com}_j}_i \gets \sf{CommitShare}\left(\secretshare{D_j}_i, \secretshare{r_j}_i; \sf{pp}\right)$ computes a binding and hiding commitment of $\cli_j$'s share to $\ser_i$, i.e., $\secretshare{D_j}_i$, with randomness $r_j$, where $i \in I_j$\footnote{We also assume that the commitments of the original databases $D_j$s (known by the clients) and the aggregated shares $\secretshare{D}_i$s can be computed directly without revealing the committed values, using two deterministic functions $\sf{RecDataCom}$ and $\sf{AggrShareCom}$, respectively. Note that this assumption holds for common secret-sharing schemes, including additive secret-sharing and Shamir's secret-sharing~\cite{DBLP:journals/cacm/Shamir79, DBLP:conf/mark2/Blakley79}. Furthermore, we assume that the clients and servers agree on the commitments of the shares (and therefore the underlying committed values), which can be achieved using digital signatures on the commitments~\cite{bls, rsa, elgamal, eddsa, DBLP:books/crc/KatzLindell2014}.}.
    \item $\psi_i \gets \sf{CommitOb}\left(\sigma_i, \rho_i; \sf{pp}\right)$, where $\ser_i$ chooses the obfuscation term $\sigma_i$ and computes a binding and hiding commitment of it with randomness $\rho_i$.
    \item $\phi_i \sample \mathcal{P}_\phi$, where the verifier samples the public coins independently for each server. Note that this step is completed after $\sf{CommitOb}$ to prevent adversarial choices of $\sigma_i$s.
\end{itemize}

\begin{figure}[!t]
    \centering
    \procedureblock[space=auto, linenumbering]{$\Pi\left(\vec{\cli}, \vec{\ser}, \left(\secretshare{\sf{com}_j}_i\right)_{i, j}, \left(\psi_i\right)_i, \left(\phi_i\right)_i; \sf{pp}\right)$}{
        \pcfor j \in [n_\cli] \pcdo \\
            \label{pcln:I2DP-recdatacom} \sf{com}_j \gets \sf{RecDataCom}\left(\left(\secretshare{\sf{com}_j}_i\right)_{i\in I_j}; \sf{pp}\right)\\
            \label{pcln:I2DP-data} b_j^\cli \sample \Pi_\cli\left(\cli_j, \sf{com}_j; \sf{pp}\right) \\
        \pcendfor \pccomment{$J^* := \left\{j: b_j^\cli = 1\right\}$, i.e., accepted clients remain.}\\
        \pcfor i \in [n_\ser] \pcdo \\
            \label{pcln:I2DP-aggrsharecom} \secretshare{\sf{com}}_i \gets \sf{AggrShareCom}\left(\left(\secretshare{\sf{com}_j}_i\right)_{j\in J_i\cap J^*}; \sf{pp}\right)\\
            \label{pcln:I2DP-server-compute-y} \secretshare{y}_i \sample \ser_i((b_j)_{j \in J_i}, \phi_i; \sf{pp}) \\
            \label{pcln:I2DP-comp} b_i^\ser \sample \Pi_\ser\left(\ser_i, \secretshare{y}_i, \secretshare{\sf{com}}_i, \psi_i, \phi_i; \sf{pp}\right)\\
        \pcendfor \pccomment{$I^* := \left\{i: b_i^\ser = 1\right\}$, i.e., accepted servers remain.}\\
        \label{pcln:I2DP-id} (I, J) \gets \sf{IdUsable}\left(I^*, J^*\right)\\
        \label{pcln:I2DP-output} \pcreturn y \gets \sf{Aggr}\left(\secretshare{y}_i : i \in I\right)
    }
    \caption{Interactive distributed proof of differential privacy with potentially malicious clients and servers.}
    \label{fig:def-proof}
\end{figure}

An I2DP $\Pi$ is described in Figure~\ref{fig:def-proof}. Specifically, it consists of the following components:

\begin{itemize}[leftmargin=*]
    \item In Line~\ref{pcln:I2DP-recdatacom}, the verifier and each client $\cli_j$ recover a valid commitment of $D_j$, namely $\sf{com}_j$, leveraging the homomorphism of commitments under addition and scalar multiplication.
    \item In Line~\ref{pcln:I2DP-data}, $b_j^\cli \sample \Pi_\cli\left(\cli_j, \sf{com}_j; \sf{pp}\right)$ is an interactive proof of the validity of the data bound by $\sf{com}_j$, i.e., $D_j \in \mathcal{D}$. The output $b_j^\cli = 1$ if the proof is accepted, and $0$ otherwise (collectively denoted as $\mathbf{b}^\sf{\cli}$ for all $j$s). If $b_j^\cli = 0$, then $\cli_j$ and its data are excluded from all subsequent computations.
    \item In Line~\ref{pcln:I2DP-aggrsharecom}, the verifier computes the commitment to each server $\ser_i$'s input by aggregating the commitments of the shares.
    \item In Line~\ref{pcln:I2DP-server-compute-y}, each server $\ser_i$ (potentially maliciously) computes the output of the server function $\secretshare{y}_i$.
    \item In Line~\ref{pcln:I2DP-comp}, $b_i^\ser \sample \Pi_\ser\left(\ser_i, \secretshare{y}_i, \secretshare{\sf{com}}_i, \psi_i, \phi_i; \sf{pp}\right)$ is an interactive proof of the correctness of $\ser_i$'s computation, i.e., $\ser_i$ has honestly computed $\secretshare{y}_i = f\left(\secretshare{D}_i, \sigma_i, \phi_i\right)$. The output $b_i^\ser = 1$ if the proof is accepted, and $0$ otherwise (collectively denoted as $\mathbf{b}^\sf{\ser}$ for all $i$s).
    \item In Lines~\ref{pcln:I2DP-id} to \ref{pcln:I2DP-output}, the server computes $(I, J) \gets \sf{IdUsable}\left(I^*, J^*\right)$ and $y \gets \sf{Aggr}\left(\secretshare{y}_i : i \in I\right)$, as defined in Section~\ref{sec:def-setup}.
\end{itemize}
\subsubsection{Security Guarantees} \label{sec:def-vddp-sec}

In response to \ref{challenge:acceptance}, we quantify the interference among the honesty of the clients and servers, the values of $\adbf{b}^\cli$ and $\adbf{b}^\ser$, and the value of the protocol output. Ideally, a client's database should be included in the computation if and only if the client and the servers handling it are honest, and the output is correct with respect to the union of the databases from all such clients. These properties are captured by \emph{completeness} and \emph{(knowledge) soundness} under the single-prover setting, as introduced in Section~\ref{sec:prelim-crypto}, and have been adopted in previous studies on verifiable differential privacy~\cite{DBLP:conf/iclr/ShamsabadiTCBHP24, BC23, KCY21}. However, further adaptations of these notions are required to extend them to the client-server-verifier setting, which we present in this section. As $\sf{IdUsable}(\cdot)$ and $\sf{Aggr}(\cdot)$ are executed by the semi-honest verifier, we focus on the subset of servers $I$ and clients $J$ identified, and the correctness of $\left(\secretshare{y}_i\right)_{i \in I}$ with respect to $\left(D_j\right)_{j \in J}$.

Under the client-server-verifier setup, an honest client $\cli_j$ relies on the servers to engage in the protocol. In particular, in an extreme case, if all servers deviate from the protocol and are therefore removed by the verifier, no result computed on its local database $D_j$ may be accepted by the verifier. This interference makes it impossible to directly borrow the notion of completeness from the single-prover setting. Instead, we capture the interference in Definition~\ref{def:vddp-completeness}, which additionally requires a lower bound on the number of honest servers associated with each honest client.

\begin{Def}[$\theta$-completeness of I2DP]\label{def:vddp-completeness}
    An I2DP $\Pi$ is \emph{$\theta$-complete} iff, for any honest client $\cli_j$ with at least $|I_j| - \theta$ honest servers indexed by $I_j$, $j \in J$ with probability 1.
\end{Def}

\begin{Exm}
    In Figure~\ref{fig:setting}, when $\theta = 0$, the honest clients $\cli_0$ and $\cli_1$ can be included in $J$ since all of $\ser_0$, $\ser_1$, and $\ser_2$ are honest. However, $\cli_3$ cannot be included in $J$ since $\ser_3$ and $\ser_5$ are dishonest and cannot pass the test.
\end{Exm}

We further formalize in Proposition~\ref{prop:def-comp} that completeness under the single-prover setting translates to $\theta$-completeness, given a proper design of $\sf{IdUsable}$ that aligns with the secret-sharing scheme used.

\begin{Prop}\label{prop:def-comp}
    An I2DP $\Pi$ is $\theta$-complete if $\Pi_\cli$ and $\Pi_\ser$ are complete and $\forall j \in \left[n_\cli\right]$, $\sf{IdUsable}\left(I^*, J^*\right)$ outputs $j \in J$ when $j \in J^*$ and $\abs{I_j \cap I^*} \geq \abs{I_j} - \theta$.
\end{Prop}

As argued in Section~\ref{sec:prelim-crypto-prng}, soundness merely requires the existence of inputs and obfuscations that match the commitments and secret-shared outputs. Hence, it does not prevent the scenario where a malicious prover forges a result as long as the output is reachable due to randomness. Therefore, we enforce the requirement of knowledge soundness, which ensures knowledge of the local databases by the clients and of the obfuscation factors by the servers, all of which match the final aggregated results.

\begin{Def}[Knowledge soundness of I2DP] \label{def:vddp-knowledge-soundness}
    \VersionText{An I2DP $\Pi$ is \emph{knowledge sound} iff, except with negligible probability, all clients and servers, which may be colluding, collectively know: \begin{enumerate}[wide, labelwidth=!, labelindent=0pt, label=\textbf{(S\arabic*)}, ref=\textbf{(S\arabic*)}]
        \item \label{soundness:client-correct} for any $j \in J^*$, the local database $D_j \in \D_j$, and
        \item \label{soundness:server-correct} for any $i \in I^*$, the aggregated secret-share $\secretshare{D}_i$ and obfuscation factor $\sigma_i$, such that $\secretshare{y}_i = f\left(\secretshare{D}_i, \sigma_i, \phi_i\right)$,
    \end{enumerate} which match the commitments $\sf{com}_j$, $\secretshare{\sf{com}_i}$, and $\psi_i$, such that
    \begin{equation}
        \label{eq:ss-consistent} \sf{AggrDB}\left(D_j: j \in J\right) = \sf{RecDB}\left(\secretshare{D}_i: i \in I\right)
    \end{equation} for $\left(I, J\right) \gets \sf{IdUsable}\left(I^*, J^*\right)$.
    }{Fix any set of prover strategies $\mathfrak{C}^*$ of the possibly colluding clients and servers $\mathfrak{C} = \left(\vec{\cli}, \vec{\ser}\right)$. For any PPT knowledge extractor $\ext$ with rewinding access to $\mathfrak{C}^*$, let $E^{\ext, \mathfrak{C}^*}\left(I^*, J^*, \left(\secretshare{\sf{com}_j}_i\right)_{i, j}, \left(\psi_i\right)_i, \left(\phi_i\right)_i, \left(\secretshare{y}_i\right)_i\right)$ be the event that $\ext$ extracts from $\mathfrak{C}^*$: \begin{enumerate}[wide, labelwidth=!, labelindent=0pt, label=\textbf{(S\arabic*)}, ref=\textbf{(S\arabic*)}]
        \item \label{soundness:client-correct} for any $j \in J^*$, the local database $D_j \in \D_j$ with randomness term $r_j$ such that $\sf{com}_j = \sf{Commit}\left(D_j, r_j; \sf{pp}\right)$;
        \item \label{soundness:server-correct} for any $i \in I^*$, the aggregated secret-share $\secretshare{D}_i$ such that
            \begin{equation}
                \label{eq:ss-consistent} \sf{AggrDB}\left(D_j: j \in J\right) = \sf{RecDB}\left(\secretshare{D}_i: i \in I\right)
            \end{equation}
            for $\left(I, J\right) \gets \sf{IdUsable}\left(I^*, J^*\right)$; and the obfuscation factor $\sigma_i$ with randomness term $\rho_i$, such that
            \begin{equation}
                \secretshare{y}_i = f\left(\secretshare{D}_i, \sigma_i, \phi_i\right) \land \psi_i=\sf{CommitOb}\left(\sigma_i, \rho_i; \sf{pp}\right).\label{eq:comp-correct}
            \end{equation}
    \end{enumerate} An I2DP $\Pi$ is \emph{knowledge sound} if
    \begin{equation}
        \prob{E^{\ext, \mathfrak{C}^*}\left(I^*, J^*, \left(\secretshare{\sf{com}_j}_i\right)_{i, j}, \left(\psi_i\right)_i, \left(\phi_i\right)_i, \left(\secretshare{y}_i\right)_i\right)}\geq 1 - \negl[\kappa]. 
    \end{equation}}
\end{Def}

In Definition~\ref{def:vddp-knowledge-soundness}, Condition~\ref{soundness:client-correct} corresponds to the validity of the clients' local databases, and Condition~\ref{soundness:server-correct} corresponds to the correctness of the computations performed on the server side. Moreover, Equality~\eqref{eq:ss-consistent} ensures the consistency of the secret-shared databases. Furthermore, the binding property of the commitment schemes ensures the uniqueness of the committed values known by the clients and servers. Since $\sf{IdUsable}$ and $\sf{Aggr}$ are executed by the semi-honest verifier, the final output of the protocol $y$ also has the correct distribution. \IfFull{Note that the knowledge extractor is a hypothetical entity used only for theoretical analysis and cannot, and should not, be implemented. In particular, since its rewinding access cannot be realized in real-world executions of the protocols, it is not possible for the verifier to implement knowledge extraction in a way that violates the privacy guarantee.} Similar to completeness, knowledge soundness is also inherited from the single-prover subroutines, as stated in Theorem~\ref{thm:def-ext}.

\begin{Thm}\label{thm:def-ext}
    An I2DP $\Pi$ is knowledge sound if the sub-protocols $\Pi_\cli$ and $\Pi_\ser$ are knowledge sound.
\end{Thm}

\IfFull{\begin{proof}[Proof of Theorem~\ref{thm:def-ext}]
    First, note that when $\verifier$ interacts with a client or server during $\Pi$, due to potential collusive behaviour within the entire $\mathfrak{C}$, the internal state $\st$ of $\mathfrak{C}^*$ may be updated to facilitate the collusion. Therefore, we denote $\left.\mathfrak{C}^*\right|_{\cli_j}$ and $\left.\mathfrak{C}^*\right|_{\ser_i}$ as the interactive behaviours of $\cli_j$ and $\ser_i$ with $\verifier$ during the entire protocol. By design, all the interactions of these two prover strategies occur during $\Pi_\cli$ and $\Pi_\ser$, respectively. That is, $\left.\mathfrak{C}^*\right|_{\cli_j}$ and $\left.\mathfrak{C}^*\right|_{\ser_i}$ can also be regarded as prover strategies in $\Pi_\cli$ and $\Pi_\ser$. 

    By the knowledge soundness of $\Pi_\cli$ and $\Pi_\ser$, there exist knowledge extractors $\ext_\cli$ and $\ext_\ser$ such that for any $i\in I^*$ and $j\in J^*$,
    \begin{align}
        \prob{D_j\in \mathcal{D}_j\land \sf{com}_j = \sf{Commit}\left(D_j, r_j; \sf{pp}\right):\\D_j, r_j \sample \ext^{\left.\mathfrak{C}^*\right|_{\cli_j}}_\cli\left(\sf{com}_j; \sf{pp}\right)} &\geq 1-\negl[\kappa]\\
        \prob{\secretshare{y}_i = f\left(\secretshare{D}_i, \sigma_i, \phi_i\right) \\\land \psi_i=\sf{CommitOb}\left(\sigma_i, \rho_i; \sf{pp}\right):\\\secretshare{D}_i, \secretshare{r}_i, \sigma_i, \rho_i \\\sample \ext^{\left.\mathfrak{C}^*\right|_{\ser_i}}_\ser\left(\secretshare{\sf{com}}_i, \psi_i, \phi_i, \secretshare{y}_i; \sf{pp}\right)} &\geq 1-\negl[\kappa]
    \end{align}

    Therefore, the knowledge extractor $\ext^{\mathfrak{C}^*}\left(\left(\secretshare{\sf{com}_j}_i\right)_{i, j}, \left(\psi_i\right)_i, \left(\phi_i\right)_i, \left(\secretshare{y}_i\right)_i; \sf{pp}\right)$ can be constructed as follows:

    \procedureblock[space=auto, linenumbering]{$\ext^{\mathfrak{C}^*}\left(\left(\secretshare{\sf{com}_j}_i\right)_{i, j}, \left(\psi_i\right)_i, \left(\phi_i\right)_i, \left(\secretshare{y}_i\right)_i; \sf{pp}\right)$}{
        \pcfor j \in \left[n_\cli\right] \pcdo \pccomment{$\mathfrak{C}^*$ has initial state $\st^*$}\\
            \label{pcln:extract-data} D_j, r_j \sample \ext^{\left.\mathfrak{C}^*\right|_{\cli_j}}_\cli\left(\sf{com}_j; \sf{pp}\right)\\
            \label{pcln:extract-data-reset} \textbf{reset}~\mathfrak{C}^*~\textbf{to}~\st^*\\
        \pcendfor\\
        \pcfor i \in \left[n_\ser\right] \pcdo \\
            \label{pcln:extract-comp} \secretshare{D}_i, \secretshare{r}_i, \sigma_i, \rho_i \sample \ext^{\left.\mathfrak{C}^*\right|_{\ser_i}}_\ser\left(\secretshare{\sf{com}}_i, \psi_i, \phi_i, \secretshare{y}_i; \sf{pp}\right)\\
            \label{pcln:extract-comp-reset} \textbf{reset}~\mathfrak{C}^*~\textbf{to}~\st^*\\
        \pcendfor\\
        \pcreturn{\left(D_j\right)_j, \left(r_j\right)_j, \left(\secretshare{D}_i\right)_i, \left(\secretshare{r}_i\right)_i, \left(\sigma_i\right)_i, \left(\rho_i\right)_i}
    }

    Here, in Lines~\ref{pcln:extract-data} and \ref{pcln:extract-comp}, all other clients, servers, and the verifier engage in an entire execution of the I2DP, after which the state of $\mathfrak{C}^*$ is reset by the rewinding access in Lines~\ref{pcln:extract-data-reset} and \ref{pcln:extract-comp-reset}. 

    Therefore, by the inclusion-exclusion principle, except with negligible probability, Condition~\ref{soundness:client-correct} holds, and Equality~\eqref{eq:comp-correct} in Condition~\ref{soundness:server-correct} also holds. Moreover, by the homomorphism and binding properties of the commitment scheme, Equality~\eqref{eq:ss-consistent} also holds except with negligible probability. Therefore, the total probability that Conditions~\ref{soundness:client-correct} and \ref{soundness:server-correct} do not simultaneously hold is upper bounded by $\negl[\kappa]$.
\end{proof}}

\IfConference{\begin{proof}[Proof Sketch of Theorem~\ref{thm:def-ext}]
    Due to the potential collusive behaviour of the clients and servers, treat them collectively as one interactive PPT algorithm $\mathfrak{C}$. That is, $\mathfrak{C}$ interacts with $\verifier$ during the executions of $\Pi_\cli$ and $\Pi_\ser$. Therefore, for each accepted client $j\in J^*$ and server $i\in I^*$, by the knowledge soundness of the sub-protocols, $\mathfrak{C}$ collectively knows the corresponding values in Conditions~\ref{soundness:client-correct} and \ref{soundness:server-correct} indexed by $i$ and $j$, respectively. Additionally, Equality~\eqref{eq:ss-consistent} follows from the binding and homomorphic commitment scheme utilized.
\end{proof}}

\begin{Exm}
    In Figure~\ref{fig:setting}, with $J^* = \left\{0, 1, 3\right\}$ and $I^* = \left\{0, 1, 2, 4\right\}$, $\cli_0$ should know its own database $D_0$, and $\ser_0$ should know its chosen obfuscation factor $\sigma_0$, all of which should match the commitments of the secret-shares and the outputs. The same guarantees should also hold for other clients and servers in $J^*$ and $I^*$.
\end{Exm}
\subsubsection{Privacy Guarantees} \label{sec:def-vddp-priv}

In response to \ref{challenge:privacy}, we develop the end-to-end DP guarantee that incorporates privacy leakages from both within the execution of I2DP and its prerequisites.

\begin{Def}[$(\tau, \epsilon)$-verifiable distributed differential privacy]\label{def:vddp-dp}
    Given $\tau \in \N$ and $\epsilon: \N \to \R_+$, a family of I2DPs $\left\{\Pi_\kappa\right\}_{\kappa \in \N}$ is $\left(\tau, \epsilon\right)$-verifiably distributed differentially private (VDDP) if, for any public parameter $\sf{pp} \in \supp{\sf{Setup}(1^\kappa)}$, any honest client $\cli_j$, any set of honest servers $H_j \subset I_j$ such that $\abs{H_j} \geq \abs{I_j} - \tau$, and any $\mathfrak{C}$ (consisting of $\verifier$ and all clients and servers except $\cli_j$ and $H_j$) that follows a certain potentially malicious strategy as a PPT algorithm, $\sf{View}_{\Pi_\kappa}^\mathfrak{C}\left(\cdot; \sf{pp}\right)$ is $\epsilon$-IND-CDP, where $\sf{View}_{\Pi_\kappa}^\mathfrak{C}\left(D_j; \sf{pp}\right)$ is the resulting view of $\mathfrak{C}$ in the execution of $\Pi_\kappa$ (including its prerequisites), which is detailed in \VersionText{the full version}{Figure~\ref{fig:view-vddp}}.
\end{Def}

\IfFull{\begin{figure}[!t]
    \centering
    \procedureblock[space=auto, linenumbering]{$\sf{View}_\Pi^\mathfrak{C}\left(D_j; \sf{pp}\right)$}{
        \label{pcln:view-vddp-ss-begin} \left(\secretshare{D_j}_i\right)_{i \in I_n} \sample \sf{SecretShare}(D_j)\\
        \secretshare{r_j}_i \sample \mathcal{P}_r, \forall i \in I_j\\
        \secretshare{\sf{com}_j}_i \gets \sf{CommitShare}\left(\secretshare{D_j}_i, \secretshare{r_j}_i; \sf{pp}\right), \forall i \in I_j\\
        S:=\left(\secretshare{D_j}_i\right)_{i\in I_j \backslash H_j}\\
        R:=\left(\secretshare{r_j}_i\right)_{i\in I_j \backslash H_j}\\
        \label{pcln:view-vddp-ss-end} C:=\left(\secretshare{\sf{com}_j}_i\right)_{i\in H_j}\\
        \label{pcln:view-vddp-adv-ss-begin} \left(\secretshare{D_{j'}}_i, \secretshare{r_{j'}}_i\right)_{i \in H_j, j' \in J_{H_j} \backslash \left\{j\right\}}, \st \sample \mathfrak{C}_1\left(S, R, C; \sf{pp}\right)\\
        \pcforeach i\in H_j, j' \in J_{H_j} \backslash \left\{j\right\} \pcdo \\
            \secretshare{\sf{com}_{j'}}_i \gets \sf{CommitShare}\left(\secretshare{D_{j'}}_i, \secretshare{r_{j'}}_i; \sf{pp}\right)\\
        \pcendfor\\
        \left(\secretshare{\sf{com}_{j'}}_{i'}\right)_{i\notin H_j, j' \notin J_{H_j}}, \label{pcln:view-vddp-adv-ss-end} \st\sample \mathfrak{C}_2\left(\sf{st}; \sf{pp}\right)\\
        \label{pcln:view-vddp-sigma-begin} \pcforeach i \in H_j \pcdo\\
            \sigma_i \sample \mathcal{P}_\sigma, \rho_i \sample \mathcal{P}_\rho\\
            \psi_i \gets \sf{CommitOb}\left(\sigma_i, \rho_i; \sf{pp}\right)\\
        \pcendfor\\
        \label{pcln:view-vddp-sigma-end} \Psi := \left(\psi_i\right)_{i\in H_j}\\
        \label{pcln:view-vddp-adv-sigma}\left(\psi_i\right)_{i\notin H_j}, \st \sample \mathfrak{C}_3\left(\st, \Psi; \sf{pp}\right)\\
        \label{pcln:view-vddp-pc}\vec{\phi} \sample \mathcal{P}_\phi^{\otimes n_\ser}\\
        \label{pcln:view-vddp-pi-data}\pi_\cli\sample \sf{tr}\left[\Pi_\cli\left(\cli_j, \sf{com}_j; \sf{pp}\right)\right] \\
        \label{pcln:view-vddp-adv-pi-data-begin}\pcforeach j'\neq j \pcdo\\
            b_{j'}^\cli, \st\sample \Pi_\cli\left(\mathfrak{C}_4(\st, \vec{\phi}, \pi_\cli), \sf{com}_{j'}; \sf{pp}\right) \\
        \label{pcln:view-vddp-adv-pi-data-end}\pcendfor\\
        \label{pcln:view-vddp-adv-client}J^* \gets \left\{j': b_{j'}^\cli = 1\right\} \pccomment{$j\in J^*$}\\
        \label{pcln:view-vddp-ser-begin}\pcforeach i \in H_j \pcdo\\
            \secretshare{D}_i \gets \sf{AggrShare}\left(\secretshare{D_j}_i, J^*\cap J_i\right)\\
            \secretshare{y}_i \gets f\left(\secretshare{D}_i, \sigma_i, \phi_i\right)\\
            \pi_{\ser, i} \sample \sf{tr}\left[\Pi_\ser\left(\ser_i, \secretshare{y}_i, \secretshare{\sf{com}}_i, \psi_i, \phi_i; \sf{pp}\right)\right]\\
        \label{pcln:view-vddp-ser-end}\pcendfor\\
        \label{pcln:view-vddp-ser-y}Y:= \left(\secretshare{y}_i\right)_{i\in H_j}\\
        \label{pcln:view-vddp-ser-proof}\vec{\pi}_\ser := \left(\pi_{\ser, i}\right)_{i\in H_j} \\
        \label{pcln:view-vddp-return}\pcreturn \sf{st}, Y, \vec{\pi}_\ser
    }
    \caption{Adversaries' view in an I2DP $\Pi$. }
    \label{fig:view-vddp}
\end{figure}

In Figure~\ref{fig:view-vddp}, the adversaries $\mathfrak{C}$ update their collective internal state $\st$ upon receiving any additional information from the honest parties. In Lines~\ref{pcln:view-vddp-ss-begin} to \ref{pcln:view-vddp-ss-end}, the honest parties compute the secret-shares and their commitments, such that $\mathfrak{C}$ is transmitted the secret-shares from $j$ to all servers in $I_j \backslash H_j$, $S$, and the corresponding randomness terms $R$ used in the commitment schemes, in addition to the commitments $C$ of the secret-share from $\cli_j$ to each server in $I_j$, which have all been transmitted to the verifier. With additional information of $S, R, C$, in Lines~\ref{pcln:view-vddp-adv-ss-begin} to \ref{pcln:view-vddp-adv-ss-end}, $\mathfrak{C}$ determines the secret-shares transmitted back to the honest parties, and the commitments of the secret-shares from all clients other than $\cli_j$. Then, in Lines~\ref{pcln:view-vddp-sigma-begin} to \ref{pcln:view-vddp-sigma-end}, the servers in $H_j$ compute and commit to the obfuscation factors $\sigma_i$, such that $\mathfrak{C}$ has the knowledge of their commitments $\Psi$. In Line~\ref{pcln:view-vddp-adv-sigma}, $\mathfrak{C}$ computes the commitments of the obfuscation factors of all servers not in $H_j$. In Lines~\ref{pcln:view-vddp-pi-data} and \ref{pcln:view-vddp-adv-pi-data-begin} to \ref{pcln:view-vddp-adv-pi-data-end}, $\cli_j$ and the other clients conduct $\Pi_\cli$, respectively. Since the verifier is semi-honest, $\cli_j$'s proof $\pi_\cli$ is accepted, and the accepting bits $b_{j'}^\cli$ for other clients are also correctly determined by the verifier. In Lines~\ref{pcln:view-vddp-ser-begin} to \ref{pcln:view-vddp-ser-end}, each server in $H_j$ computes $f$ honestly, accompanied with the accepted proof, on the data aggregated over the clients $\cli_j$ and all other accepted clients in $J^*$. The final view of $\mathfrak{C}$ contains the output of the servers in $H_j$ (denoted $Y$ as in Line~\ref{pcln:view-vddp-ser-y}), the proofs (denoted $\vec{\pi}_\ser$ as in Line~\ref{pcln:view-vddp-ser-proof}), as well as the previous internal state $\st$ which may carry all the information transmitted from the honest parties.}

Note that in Definitions~\ref{def:vddp-completeness},~\ref{def:vddp-knowledge-soundness}, and \ref{def:vddp-dp}, we have described the security and privacy guarantees as properties of the I2DP. However, for an end-to-end mechanism/protocol that encloses I2DP and its prerequisites, we also describe it as satisfying these properties if the underlying I2DP does. Such a mechanism/protocol satisfying all three properties is also generically referred to as a \emph{VDDP mechanism/protocol} in the absence of ambiguity.

Compared with the adversary's view in DDP, $\Pi$ only gives the adversary the additional view of the commitments and proofs due to the added requirement of verifiability. Therefore, if these additional steps leak no information, the same degree of privacy protection can be achieved. We formalize this as Theorem~\ref{thm:zk-dp}:

\begin{Thm}\label{thm:zk-dp}
    A family of I2DPs is $\left(\tau, \epsilon\right)$-VDDP if the underlying server functions are $\left(\tau, \epsilon\left(\kappa\right), \delta\left(\kappa\right)\right)$-DDP such that $\delta\left(\kappa\right) \in \negl[\kappa]$, the commitment schemes are hiding, and the underlying sub-protocols $\Pi_\cli$ and $\Pi_\ser$ are both zero-knowledge.
\end{Thm}

\VersionText{\begin{proof}[Proof Sketch of Theorem~\ref{thm:zk-dp}]
    The server functions incur a $\left(\epsilon(\kappa), \negl[\kappa]\right)$ privacy cost against $\mathfrak{C}$. Meanwhile, the hiding commitments and zero-knowledge proofs incur an additional $\left(0, \negl[\kappa]\right)$ privacy cost, amounting to a total privacy cost of $\left(\epsilon(\kappa), \negl[\kappa]\right)$ as required in Definition~\ref{def:ind-cdp}.
\end{proof}}{\begin{proof}[Proof of Theorem~\ref{thm:zk-dp}]
    Given that commitment schemes are hiding, $\Pi_\cli$ and $\Pi_\ser$ are zero-knowledge, for any instantiation of $\mathfrak{C}$, there exist PPT algorithms $\mathfrak{C}_1$ and $\mathfrak{C}_2$ such that $\sf{View}_\Pi^\mathfrak{C}\left(D_j; \sf{pp}\right)$ is computationally equivalent (in terms of information leakage about $D_j$) to $\simulator_\Pi^\mathfrak{C}\left(D_j\right)$, where \procedureblock[space=auto, linenumbering]{$\simulator_\Pi^\mathfrak{C}\left(D_j; \sf{pp}\right)$}{
        \left(\secretshare{D_j}_i\right)_{i \in I_n} \sample \sf{SecretShare}(D_j)\\
        S_\text{out} := \left(\secretshare{D_j}_i\right)_{i \in I_j \backslash H_j}\\
        S_\text{in}, J^*, \st \sample \mathfrak{C}_1\left(S_\text{out}; \sf{pp}\right)\pccomment{$S_\text{in} = \left(\secretshare{D_{j'}}_i\right)_{i \in H_j, j' \in J_{H_j} \backslash \left\{j\right\}}$}\\
        \vec{\phi} \sample \mathcal{P}_\phi^{\otimes n_\ser}\\
        \pcforeach i \in H_j \pcdo\\
            \sigma_i \sample \mathcal{P}_\sigma\\
            \secretshare{D}_i \gets \sf{AggrShare}\left(\secretshare{D_j}_i, J^* \cap J_i\right)\\
            \secretshare{y}_i \gets f\left(\secretshare{D}_i, \sigma_i, \phi_i\right)\pccomment{$Y := \left(\secretshare{y}_i\right)_{i \in H_j}$}\\
        \pcendfor\\
        \pcreturn \st, Y, \vec{\phi}
    } which, by Definition~\ref{def:rd}, is further computationally indistinguishable from \procedureblock[space=auto, linenumbering]{$\simulator_\Pi^\mathfrak{C}\left(D_j; \sf{pp}\right)$}{
        \label{pcln:sim-ss} \left(\secretshare{D_j}_i\right)_{i \in I_n} \sample \sf{SecretShare}(D_j)\\
        \label{pcln:sim-ss-out} S_\text{out} := \left(\secretshare{D_j}_i\right)_{i \in I_j \backslash H_j}\\
        \label{pcln:sim-adv} S_\text{in}, J^*, \st \sample \mathfrak{C}\left(S_\text{out}; \sf{pp}\right)\pccomment{$S_\text{in} = \left(\secretshare{D_{j'}}_i\right)_{i \in H_j, j' \in J_{H_j} \backslash \left\{j\right\}}$}\\
        \label{pcln:sim-foreach} \pcforeach i \in H_j \pcdo\\
            \secretshare{D}_i \gets \sf{AggrShare}\left(\secretshare{D_j}_i, J^* \cap J_i\right)\\
            \secretshare{y}_i \gets \mathcal{F}\left(\secretshare{D}_i\right)\pccomment{$Y := \left(\secretshare{y}_i\right)_{i \in H_j}$}\\
        \pcendfor\\
        \label{pcln:sim-out} \pcreturn \st, Y := \left(\secretshare{y}_i\right)_{i \in H_j}
    }

    Note that in Lines~\ref{pcln:sim-ss} to \ref{pcln:sim-adv}, $\left(S_\text{out}, S_\text{in}, J^*\right)$ is $0$-DP by the definition of secret-sharing. Meanwhile, in Lines~\ref{pcln:sim-foreach} to \ref{pcln:sim-out}, the process where $Y$ is computed is equivalent to $\sf{View}_\mathcal{F}^\mathfrak{C}\left(D_j, S_\text{out}, S_\text{in}, J^*\right)$ and therefore $\left(\epsilon, \delta\right)$-DP. Therefore, $\simulator_\Pi^\mathfrak{C}\left(D_j; \sf{pp}\right)$, as the adaptive parallel composition between the two mechanisms, is $\left(\epsilon, \delta\right)$-DP. Therefore, when $\delta\left(\kappa\right) \in \negl[\kappa]$, $\sf{View}_\Pi^\mathfrak{C}\left(D_j; \sf{pp}\right)$ is $\epsilon$-SIM\textsubscript{$\forall\exists$}-CDP and therefore $\epsilon$-IND-CDP.
\end{proof}}

\begin{Exm}
    In Figure~\ref{fig:setting}, if the protocol is $\left(2, \epsilon\right)$-VDDP, the adversarial view of $\cli_0$'s data, with fully honest servers $\ser_0$ and $\ser_1$, should be $\epsilon$-computational DP. This can be achieved if the original DDP view, including the secret-shares to $\mathfrak{C}$, namely $\secretshare{D_0}_2$, and the servers' outputs $\secretshare{y}_0$ and $\secretshare{y}_1$, are DDP, and the additional view in VDDP, including the shared randomness $\secretshare{r_0}_2$ and the hiding commitments and zero-knowledge proofs, all leak no information.
\end{Exm}

\IfConference{We argue that the previous study on the Verifiable Distributed Binomial Mechanism (VDBM)~\cite{BC23}, which also follows the client-server-verifier setup, satisfies $0$-completeness, knowledge soundness, and VDDP. Meanwhile, zero-knowledge proofs are not necessary for achieving VDDP, and a counterexample can be constructed based on VDBM. The intuition behind this counterexample is that zero-knowledge is stronger than general DP (i.e., corresponds to $0$-DP), such that relaxing the proofs to $\epsilon'$-DP still preserves the adversary's view DP, albeit at a higher privacy cost. The full details of analysis of VDBM under the VDDP framework, as well as the construction of the counterexample, are available in the full version.

\begin{Thm}\label{thm:vdbm-var}
    With the server function being $\left(n_\ser - 1, \epsilon, \delta\right)$-DDP, where $\delta=\delta\left(\kappa\right)\in\negl[\kappa]$, VDBM~\cite{BC23} is $0$-complete, knowledge sound, and $(n_\ser - 1, \epsilon)$-VDDP. For any $\epsilon' > 0$, there exists a variation of VDBM that instead achieves $\left(\tau, \epsilon + \epsilon'\right)$-VDDP while $\Pi_\cli$ is not zero-knowledge.
\end{Thm}

\begin{proof}[Proof Sketch of Theorem~\ref{thm:vdbm-var}]
    By the construction of a variation of $\Pi_\cli$ which is $\epsilon'$-DP, i.e., whose transcript can be simulated by an $\epsilon'$-DP mechanism. Therefore, by sequential composition, the adversary's view is $\left(\epsilon + \epsilon'\right)$-DP.
\end{proof}}
\subsubsection{Analysis of VDBM} \label{sec:def-vddp-vdbm}

In VDBM~\cite{BC23}, each client possesses a local count $x_j \in \bin$, and shares it with all servers using an additive secret-sharing mechanism as $\secretshare{x_j}_i$. Each server then aggregates the secret-shares as $\secretshare{y}_i \gets \sum_{j \in J^*} \secretshare{x_j}_i + \text{Binom}\left(n_b, \frac{1}{2}\right)$. The RD $f$ is constructed by having the server and verifier each sample $n_b$ fair coins, denoted $\sigma_i = \left(\sigma_{i, k}\right)_{k \in [n_b]}$ and $\phi_i = \left(\phi_{i, k}\right)_{k \in [n_b]}$, respectively, where $f\left(\secretshare{x}_i\right) = \sum_{j \in J^*} \secretshare{x_j}_i + \sum_{k \in [n_b]} \left(\sigma_{i, k} \oplus \phi_{i, k}\right)$. Compatible with the additive secret-sharing scheme, $\sf{IdUsable}\left(I^*, J^*\right) = \case{\left(I^*, J^*\right) & \text{if } I^* = [n_\ser] \\ (\emptyset, \emptyset) & \text{otherwise}}$, and $\sf{Aggr}$ can be instantiated as simple summation. The server function can therefore achieve $(n_\ser - 1, \epsilon, \delta)$-DDP for the same $(\epsilon, \delta)$; that is, the same DP guarantee as in the central DP setting can be achieved when at least one server is fully honest. $\Pi_\ser$ can be constructed via the homomorphism of the commitments. Meanwhile, $\Pi_\cli$, where the clients act as the provers, is instantiated with $\Pi_\sf{Bin}$, which proves knowledge that the committed value is either 0 or 1~\cite{DBLP:conf/crypto/CramerDS94, DBLP:conf/eurocrypt/Damgard00, DBLP:journals/ftsec/Thaler22}. Therefore, by Proposition~\ref{prop:def-comp} and Theorems~\ref{thm:def-ext} and \ref{thm:zk-dp}, VDBM satisfies $0$-completeness, knowledge soundness, and $(n_\ser - 1, \epsilon)$-VDDP when $\delta\left(\kappa\right) \in \negl[\kappa]$.

We then formalize the translation from the central DP guarantee of the binomial mechanism to that of the distributed DP guarantee of VDBM as Lemma~\ref{lem:vdbm-ddp}. The proof of Lemma~\ref{lem:vdbm-ddp} is deferred and merged with that of Lemma~\ref{lem:vddlm-ddp}.

\begin{Lem}\label{lem:vdbm-ddp}
    The server function of VDBM is $(n_\ser - 1, \epsilon, \delta)$-DDP where $\epsilon = 10 \sqrt{\frac{1}{n_b}\ln \frac{2}{\delta}}$ and $\delta = o\left(\frac{1}{n_b}\right)$, when $n_b > 30$ is a constant.
\end{Lem}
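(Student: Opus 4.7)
The plan is to reduce Lemma \ref{lem:vdbm-ddp} to the classical $(\epsilon, \delta)$-DP analysis of the central binomial mechanism applied to $x_j$. I take the underlying DP mechanism to be $\mathcal{M}(x_j) := x_j + \mathrm{Binom}(n_b, 1/2)$, a single invocation of binomial noise, and inherit $\epsilon = 10\sqrt{\ln(2/\delta)/n_b}$ and $\delta = o(1/n_b)$ from the Dwork--Kenthapadi--McSherry--Mironov--Naor bound invoked for VDBM in \cite{BC23}. The remaining work is to build the PPT simulator $\mathfrak{S}$ required by Definition \ref{def:ddp-dp} when $\tau = n_\ser - 1$ (so $|H_j| \geq 1$) and to verify Equation \eqref{eq:ddp-dp}.

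For the simulator, I fix a designated honest server $i^* \in H_j$. For every other $i \in H_j \setminus \{i^*\}$, additive secrecy implies that $\secretshare{x_j}_i$ is uniform on $\F$ conditioned on $S_\text{out}$, so $\secretshare{y}_i = \sum_{j' \in J^*}\secretshare{x_{j'}}_i + b_i$ is itself uniform on $\F$ and $\mathfrak{S}$ samples it freely. For $i^*$, the additive-share constraint forces $\secretshare{x_j}_{i^*} = x_j - \sum_{i \in I_j \setminus \{i^*\}}\secretshare{x_j}_i$, so I set
\[
\secretshare{y}_{i^*} := \mathcal{M}(x_j) + \sum_{j' \in J^* \setminus \{j\}} \secretshare{x_{j'}}_{i^*} - \sum_{i \in I_j \setminus \{i^*\}} \secretshare{x_j}_i,
\]
where the missing shares are reconstructed from $S_\text{out}$ (for $i \notin H_j$), from the shares just sampled by $\mathfrak{S}$ (for $i \in H_j \setminus \{i^*\}$), and from $S_\text{in}$ together with $J^*$ (for the other clients' contributions at $i^*$). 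Identifying the fresh binomial noise absorbed inside $\mathcal{M}(x_j)$ with server $i^*$'s own noise $b_{i^*}$ yields equality in joint distribution between the simulated view $\mathfrak{S}(\mathcal{M}(x_j), S_\text{out}, S_\text{in}, J^*)$ and the real view $\sf{View}_\mathcal{F}^\mathfrak{C}$.

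Two subtleties need care. First, VDBM samples noise via $\sf{LPRF}$ applied to $\sigma_i + \sf{pc}_i$ rather than from genuine fair coins, so the equality in Equation \eqref{eq:ddp-dp} can only hold up to the LPRF pseudorandomness gap; this is exactly why DDP is stated as computational DP, and the reduction adds only a negligible term to any distinguisher's advantage. Second, although $\mathfrak{C}$ may correlate the obfuscations $\sigma_i$ across servers outside $H_j$, Lemma \ref{lem:rd-collusion} guarantees that, conditioned on the independently sampled $\sf{pc}_i$, each honest server's noise is an independent $\mathrm{Binom}(n_b, 1/2)$ (computationally), justifying the simulator's use of fresh independent samples.

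The step I expect to be the main obstacle is not the DP bound (which is a citation) but the careful joint-distribution bookkeeping: one has to verify that absorbing the additive-share constraint into $i^*$'s output does not leak any extra information about $x_j$, that $\mathfrak{S}$ is a PPT function of precisely $(\mathcal{M}(x_j), S_\text{out}, S_\text{in}, J^*)$ with no additional side inputs, and that all the fresh binomial samples added by the other honest servers are indeed absorbed into uniform shares and therefore consume no extra privacy budget beyond the single $\mathrm{Binom}(n_b, 1/2)$ inside $\mathcal{M}$.
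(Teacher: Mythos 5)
Your overall strategy (reduce to the central binomial bound, then exhibit the simulator $\mathfrak{S}$ of Definition \ref{def:ddp-dp}) is the same as the paper's, but the simulator as you describe it has a genuine gap at exactly the bookkeeping step you flagged. You sample the outputs $\secretshare{y}_i$ for $i \in H_j\setminus\{i^*\}$ directly as uniform field elements and then say the missing shares $\secretshare{x_j}_i$ are ``reconstructed from the shares just sampled'' --- but $\mathfrak{S}$ never sampled those shares, only the outputs, and $\secretshare{x_j}_i$ cannot be recovered from $\secretshare{y}_i$ without knowing the noise $b_i$ hidden inside it. If the reconstruction is taken to be the only computable one, $\secretshare{x_j}_i := \secretshare{y}_i - \sum_{j'\in J^*\setminus\{j\}}\secretshare{x_{j'}}_i$, then in your simulated view the conditional distribution of $\secretshare{y}_{i^*}$ given the other honest outputs carries a single $\mathrm{Binom}(n_b,\tfrac12)$ term, whereas in the real view it carries $\sum_{i\in H_j} b_i$, i.e.\ $\mathrm{Binom}(\abs{H_j}\,n_b,\tfrac12)$: the noises of the other honest servers are \emph{not} simply absorbed into uniform shares --- they re-enter $i^*$'s output through the additive-share constraint, correlating $\secretshare{y}_{i^*}$ with the other outputs. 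Hence the claimed equality of joint distributions in Equation \eqref{eq:ddp-dp} fails whenever $\abs{H_j}\geq 2$, and your closing assertion that the honest servers' fresh samples ``are indeed absorbed into uniform shares'' is precisely the step that breaks.

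The fix is small and brings you to the paper's proof: either let $\mathfrak{S}$ explicitly sample, for each $i \in H_j\setminus\{i^*\}$, both a uniform share and a fresh binomial noise (legitimate, since these are independent of $x_j$ and $\mathfrak{S}$ may be randomized), so that the designated output correctly accumulates all $\abs{H_j}$ noise copies; or, as the paper does, take the DP mechanism fed to $\mathfrak{S}$ to be the $\abs{H_j}$-fold self-composition of $\mathcal{M}$ (still $(\epsilon,\delta)$-DP by post-processing, so no extra privacy budget is consumed), sample the aggregate noisy value once, and split it into uniform shares, fixing the last share by the constraint. Two further minor points: VDBM's server function adds noise from genuinely sampled fair coins (the LPRF-based randomness is a feature of VDDLM), so for this lemma the simulation can be exact rather than up to a pseudorandomness gap; and the $(\epsilon,\delta)$ values themselves are, as in the paper, inherited by citation from the binomial-mechanism analysis, so no issue there.
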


Meanwhile, based on a counterexample constructed from VDBM, we argue that zero-knowledge proofs are not necessary for achieving VDDP. The intuition behind this counterexample is that zero-knowledge is stronger than general DP (i.e., corresponds to $0$-DP), such that relaxing the proofs to $\epsilon'$-DP still preserves the adversary's view DP, albeit at a higher privacy cost.

We further present the modification of $\Pi_\sf{Bin}$ that violates the zero-knowledge property yet preserves the DP property, as stated in Theorem~\ref{thm:vdbm-var}. First, the original version of $\Pi_\sf{Bin}$ is perfectly zero-knowledge, where the distribution of proof transcripts for $x = 0$ and $x = 1$ are statistically identical. By considering $0$ and $1$ as neighbouring inputs, this is equivalent to $0$-DP. Therefore, we aim to generalize the original version with privacy parameters $\epsilon' > 0$.

\begin{figure*}[!t]
    \centering
    \begin{pcvstack}[noindent]
    \pseudocode{
        \prover_\sf{Bin}\left(x=0, r_x, \sf{com}_x; g, h\right)  \< \< \verifier_\sf{Bin}\left(\sf{com}_x; g, h\right)\\[0.1 \baselineskip][\hline]
        \pcln b, e_1 \sample \F, v_1 \sample \mathcal{P}_{K, \xi} \< \< \\
        \pcln d_0 \gets h^b, d_1 \gets h^{v_1}\left(\frac{c}{g}\right)^{-e_1}  \< \sendmessageright*[2cm]{(d_0, d_1)} \< (d_0, d_1) \\
        \pcln e_0 \gets e - e_1, v_0 = b + e_0 r_x \< \sendmessageleft*[2cm]{e} \< e\sample \F\\
        \pcln \< \sendmessageright*[2cm]{(v_0, e_0, v_1, e_1)} \< e_0 + e_1 \checkeq e \land \sf{com}_x^{e_0} \checkeq h^{v_0} \land \sf{com}_x^{e_1} \checkeq g^{e_1}h^{v_1}
    }
    \pseudocode{
        \prover_\sf{Bin}\left(x=1, r_x, \sf{com}_x; g, h\right)  \< \< \verifier_\sf{Bin}\left(\sf{com}_x; g, h\right)\\[0.1 \baselineskip][\hline]
        \pcln b, e_0 \sample \F, v_0 \sample \mathcal{P}_{K, \xi} \< \< \\
        \pcln d_0 \gets h^{v_0}c^{-e_0}, d_1 \gets h^b \< \sendmessageright*[2cm]{(d_0, d_1)} \< (d_0, d_1) \\
        \pcln e_1 \gets e - e_0, v_1 = b + e_1 r_x \< \sendmessageleft*[2cm]{e} \< e\sample \F\\
        \pcln \<\sendmessageright*[2cm]{(v_0, e_0, v_1, e_1)} \< e_0 + e_1 \checkeq e \land \sf{com}_x^{e_0} \checkeq h^{v_0} \land \sf{com}_x^{e_1} \checkeq g^{e_1}h^{v_1}
    }
    \end{pcvstack}

    \caption{$\Pi^{K, \xi}_\sf{Bin}$, protocol for proving the knowledge of $x \in \bin$ and $r\in \F$ such that $\sf{com}_x = g^x h^r$. $\xi = 1$ corresponds to the original zero-knowledge version~\cite{DBLP:conf/crypto/CramerDS94, DBLP:conf/eurocrypt/Damgard00, DBLP:journals/ftsec/Thaler22}.}
    \label{fig:bin-modified}
\end{figure*}

We present the modified protocol $\Pi_{\sf{Bin}}^{K, \xi}$ in Figure~\ref{fig:bin-modified}, where $\mathcal{P}_{K, \xi}$ is defined for any integer $1 \leq K \leq \frac{\abs{\F} - 1}{2}$ and real number $\xi \geq 1$ such that 
\begin{equation}
    \Pr\left[k \sample \mathcal{P}_{K, \xi}\right] = \case{\frac{2}{\xi + 1}\frac{1}{\abs{\F}} & \text{if } k = 0, 2, 4, \dots, 2K - 2 \\ \frac{2\xi}{\xi + 1}\frac{1}{\abs{\F}} & \text{if } k = 1, 3, 5, \dots, 2K - 1 \\ \frac{1}{\abs{\F}} & \text{otherwise}}.
\end{equation} 
Note that $\xi = 1$ corresponds to the original zero-knowledge version, and setting $\xi > 1$ is the only modification made to the original $\Pi_\sf{Bin}$.

\begin{Lem}\label{lem:bin-modified-dp}
    There exists $\simulator_{\sf{Bin}}^{K, \xi}\left(x, \sf{com}_x; g, h\right)$ that perfectly simulates the proof transcript of $\prover_\sf{Bin}\left(x, r_x, \sf{com}_x; g, h\right)\leftrightarrow\verifier\left(\sf{com}_x; g, h\right)$, such that $\simulator_{\sf{Bin}}^{K, \xi}\left(\cdot, \sf{com}_x; g, h\right)$ is $\left(\log \xi, 0\right)$-DP.
\end{Lem}

\begin{proof}[Proof of Lemma~\ref{lem:bin-modified-dp}]
    $\simulator_{\sf{Bin}}^{K, \xi}$ can be constructed as
    \procedureblock[space=auto, linenumbering]{$\simulator_{\sf{Bin}}^{K, \xi}\left(x, \sf{com}_x; g, h\right)$}{
        \pcif x = 1 \pcdo\\
            (v_0, e_0, v_1, e_1) \sample \mathcal{P}_{K, \xi} \times \F \times \F \times \F\\
        \pcelseif x = 0 \pcdo\\
            (v_0, e_0, v_1, e_1) \sample \F \times \F \times \mathcal{P}_{K, \xi} \times \F\\
        \pcendif \\
        e \gets e_0 + e_1\\
        d_0 \gets \sf{com}_x^{-e_0} \cdot h^{v_0}\\
        d_1 \gets \sf{com}_x^{-e_1} \cdot g^{e_1} h^{v_1}\\
        \pcreturn{v_0, e_0, v_1, e_1, e, d_0, d_1}
    } 
    where the joint distribution of $\left(v_0, v_1\right)$ is $\F \times \mathcal{P}_{K, \xi}$ and $\mathcal{P}_{K, \xi} \times \F$ when $x = 0$ and $x = 1$, respectively. Meanwhile, for any $\left(a, b\right) \in \F^2$, 
    \begin{equation}
        \frac{\Pr[\mathcal{G}(0) = (a, b)]}{\Pr[\mathcal{G}(1) = (a, b)]} \leq \frac{\frac{1}{\abs{\F}} \cdot \frac{2\xi}{\xi + 1} \frac{1}{\abs{\F}}}{\frac{2}{\xi + 1} \frac{1}{\abs{\F}} \cdot \frac{1}{\abs{\F}}} = \xi.
    \end{equation} 
    Symmetrically, it also holds that $\frac{\Pr[\mathcal{G}(1) = (a, b)]}{\Pr[\mathcal{G}(0) = (a, b)]} \leq \xi$. Therefore, $\mathcal{G}$ is $\left(\log \xi, 0\right)$-DP. Consequently, the sampling of $\left(v_0, v_1\right)$ is $\left(\log \xi, 0\right)$-DP, and by post-processing, $\simulator_{\sf{Bin}}^{K, \xi}\left(\cdot, \sf{com}_x; g, h\right)$ is $\left(\log \xi, 0\right)$-DP.
\end{proof}

With the concrete construction and DP guarantee of the variation of $\Pi_\sf{Bin}$, we restate the formal version of Theorem~\ref{thm:vdbm-var}. It is also worth noting that completeness and knowledge soundness are preserved, as the modification only involves the shifting of the sampling distribution of either $v_0$ or $v_1$.

\begin{Thm}\label{thm:vdbm-var}
    A family of VDBM protocols is $\left(n_\ser - 1, \epsilon + \log \xi\right)$-VDDP if the $n_b\left(\kappa\right)$s are chosen such that the underlying binomial mechanisms satisfy $\left(\epsilon\left(\kappa\right), \delta\left(\kappa\right)\right)$-DP where $\delta\left(\kappa\right) \in \negl[\kappa]$, and $\Pi_\cli$ are instantiated by $\Pi^{K\left(\kappa\right), \xi\left(\kappa\right)}_\sf{Bin}$, respectively.
\end{Thm}

\begin{proof}[Proof of Theorem~\ref{thm:vdbm-var}]
    Note that in VDBM, the commitment schemes achieve perfect hiding, and $\Pi_\ser$ achieves perfect zero-knowledge. Therefore, by Theorem~\ref{lem:bin-modified-dp}, for any instantiation of $\mathfrak{C}$, there exist PPT algorithms $\mathfrak{C}_1$ and $\mathfrak{C}_2$ such that $\sf{View}_\Pi^\mathfrak{C}\left(x_j; \sf{pp}\right)$ is information-theoretically equivalent to $\simulator_\Pi^\mathfrak{C}\left(x_j\right)$, where 
    \procedureblock[space=auto, linenumbering]{$\simulator_\Pi^\mathfrak{C}\left(x_j; \sf{pp}\right)$}{
        \sf{com}_{x_j} \sample \G\\
        \left(\secretshare{x_j}_i\right)_{i \in I_n} \sample \sf{SecretShare}(x_j)\\
        S_\text{out} := \left(\secretshare{x_j}_i\right)_{i \in I_j \backslash H_j}\\
        S_\text{in}, J^*, \st \sample \mathfrak{C}_1\left(S_\text{out}; \sf{pp}\right)\pccomment{$S_\text{in} = \left(\secretshare{x_{j'}}_i\right)_{i \in H_j, j' \in J_{H_j} \backslash \left\{j\right\}}$}\\
        \vec{\phi} \sample \mathcal{P}_\phi^{\otimes n_\ser}\\
        \pi_\cli \sample \simulator\left(x_j, \sf{com}_{x_j}; \sf{pp}\right)\\
        \pcforeach i \in H_j \pcdo\\
            \sigma_i \sample \mathcal{P}_\sigma\\
            \secretshare{D}_i \gets \sf{AggrShare}\left(\secretshare{x_j}_i, J^* \cap J_i\right)\\
            \secretshare{y}_i \gets f\left(\secretshare{D}_i, \sigma_i, \phi_i\right)\pccomment{$Y := \left(\secretshare{y}_i\right)_{i \in H_j}$}\\
        \pcendfor\\
        \pcreturn \st, \pi_\cli, Y, \vec{\phi}
    }

    Also, by the design of RD in VDBM, $\simulator_\Pi^\mathfrak{C}\left(x_j; \sf{pp}\right)$ is further information-theoretically equivalent to 
    \procedureblock[space=auto, linenumbering]{$\simulator_\Pi^\mathfrak{C}\left(x_j; \sf{pp}\right)$}{
        \sf{com}_{x_j} \sample \G\\
        \left(\secretshare{x_j}_i\right)_{i \in I_n} \sample \sf{SecretShare}(x_j)\\
        S_\text{out} := \left(\secretshare{x_j}_i\right)_{i \in I_j \backslash H_j}\\
        S_\text{in}, J^*, \st \sample \mathfrak{C}_1\left(S_\text{out}; \sf{pp}\right)\pccomment{$S_\text{in} = \left(\secretshare{x_{j'}}_i\right)_{i \in H_j, j' \in J_{H_j} \backslash \left\{j\right\}}$}\\
        \pi_\cli \sample \simulator\left(x_j, \sf{com}_{x_j}; \sf{pp}\right)\\
        \pcforeach i \in H_j \pcdo\\
            \secretshare{D}_i \gets \sf{AggrShare}\left(\secretshare{x_j}_i, J^* \cap J_i\right)\\
            \secretshare{y}_i \gets \mathcal{F}\left(\secretshare{D}_i\right)\pccomment{$Y := \left(\secretshare{y}_i\right)_{i \in H_j}$}\\
        \pcendfor\\
        \pcreturn \st, \pi_\cli, Y, \vec{\phi}
    } 
    which is $\left(\epsilon + \log \xi, \delta\right)$-DP by sequential composition. Therefore, the family of mechanisms is $\left(\epsilon + \log \xi\right)$-IND-CDP if $\delta\left(\kappa\right) \in \negl[\kappa]$.
\end{proof}
\section{VDDLM: Verifiable Distributed Discrete Laplace Mechanism} \label{sec:vddlm}

In this section, we develop the \emph{Verifiable Distributed Discrete Laplace Mechanism (VDDLM)} for counting queries, which achieves a better privacy--utility trade-off and incurs lower overhead compared with VDBM. Similar to VDBM, clients secret-share their local counts with all servers using additive secret-sharing. Each server aggregates the secret-shares from all clients and adds an independent copy of a discrete Laplace noise elementwise to each dimension as its output. The outputs from all servers are then aggregated by the data analyst (verifier) to produce the final result. We first present the sampling circuit and the interactive proof protocol for VDDLM, and then rigorously analyze its privacy guarantees, utility bounds, and computational overhead.
\subsection{Construction of RD for VDDLM} \label{sec:vddlm-rd}

We utilize the RD described in Example~\ref{eg:rd-lprf} for VDDLM. In particular, since the noise is additive, given aggregated $d$-dimensional secret-shared input counts $\secretshare{\mathbf{x}}_i \in \F^d$ for each $\ser_i$, the RD can be constructed as $f\left(\secretshare{\mathbf{x}}_i, \sigma_i, \phi_i\right) := \secretshare{\mathbf{x}}_i + C_\sf{Lap}\left(\sf{LPRF}\left(\sigma_i + \phi_i\right)\right)$. We focus on the construction of $C_\sf{Lap}$ in the rest of this section, following a decomposition of $\sf{Lap}_\Z(t)$ from elementary probability theory, as formalized in Lemma~\ref{lem:dlap-decomp}~\cite{DBLP:conf/ccs/WeiYFCW23, DBLP:conf/eurocrypt/DworkKMMN06}.

\begin{Lem}[\cite{DBLP:conf/ccs/WeiYFCW23,DBLP:conf/eurocrypt/DworkKMMN06}]\label{lem:dlap-decomp}
    With independently sampled $b_z \sample \sf{Ber}(p_z^*)$, $s \sample \left\{-1, 1\right\}$, and $r_i \sample \sf{Ber}(p_i^*)$ for $i \geq 0$, where $p_z^* = \frac{e^{\frac{1}{t}} - 1}{e^{\frac{1}{t}} + 1}$ and $p_i^* = \frac{1}{1 + e^{\frac{2^i}{t}}}$, the distribution of $(1 - b_z) \cdot s \cdot \left(\sum_{i \geq 0} 2^i r_i + 1\right)$ is $\sf{Lap}_\Z(t)$.
\end{Lem}

We construct $C_\sf{Lap}$ as shown in Algorithm~\ref{alg:dlap-circuit}, directly based on Lemma~\ref{lem:dlap-decomp}. Since $p_i^*$ becomes negligible for large $i$, we retain only the least significant $\gamma$ bits, where $\gamma$ is the range parameter. We also employ a subroutine $C_\sf{Ber}(\cdot; p^*, \nu)$ described in \VersionText{the full version}{Section~\ref{sec:vddlm-ber}} to sample from $\sf{Ber}(p^*)$ for any $p^* \in (0,1)$ with precision parameter $\nu$, such that given $\adbf{b} \sample \left\{0,1\right\}^\nu$, we have $C_\sf{Ber}(\adbf{b}; p^*, \nu) \sim \sf{Ber}\left(\frac{\round{2^\nu p^*}}{2^\nu}\right)$. For $p_z^*$ and each $p_i^*$, we denote by $p_z$ and $p_i$ the corresponding realized parameters from $C_\sf{Ber}$.

\begin{algorithm}[!htbp]
\caption{Sampling from $\sf{Lap}_\mathbb{Z}(t)$}\label{alg:dlap-circuit}
\begin{algorithmic}[1]
    \Require Range parameter $\gamma$; precision parameters $\nu_z$ and $\left(\nu_i\right)_{i \in [\gamma]}$; precomputed $p_z^*$ and $\left(p_i^*\right)_{i \in [\gamma]}$; random bits $\mathbf{b}_z \in \left\{0,1\right\}^{\nu_z}$, $b_s \in \left\{0,1\right\}$, and $\mathbf{b}_i \in \left\{0,1\right\}^{\nu_i}$ for each $i \in [\gamma]$
    \Function{$C_\sf{Lap}$}{$\adbf{b}_z, b_s, \left(\adbf{b}_i\right)_{i \in [\gamma]}$}
        \State $b_z \gets C_\sf{Ber}\left(\mathbf{b}_z; p_z^*, \nu_z\right)$ \Comment{$p_z^* := \frac{e^{1/t} - 1}{e^{1/t} + 1}$} \label{algln:dlap-z}
        \State $s \gets 2 \cdot b_s - 1$ \label{algln:dlap-sign}
        \State $a \gets \sum_{i \in [\gamma]} 2^i \cdot C_\sf{Ber}\left(\mathbf{b}_i; p_i^*, \nu_i\right) + 1$ \Comment{$p_i^* = \frac{1}{1 + e^{\frac{2^i}{t}}}$} \label{algln:dlap-geom}
        \State \Return $r \gets (1 - b_z) \cdot s \cdot a$ \label{algln:dlap-return}
    \EndFunction
\end{algorithmic}
\end{algorithm}

When there is no ambiguity about the hyperparameters used in Algorithm~\ref{alg:dlap-circuit}, we denote by $\mathcal{C}_\sf{Lap}$ the output distribution of $C_\sf{Lap}$ with uniformly random input bits. Although Algorithm~\ref{alg:dlap-circuit} is unidimensional, it can be trivially extended to the multidimensional setting by running multiple independent instances of $C_\sf{Lap}$ in parallel.

\IfFull{\subsubsection{Verifiable Sampling from Bernoulli Distributions} \label{sec:vddlm-ber}

To realize verifiable sampling of the Bernoulli distribution with any parameter $0 < p^* < 1$, we utilize Algorithm~\ref{alg:ber-circuit} to convert generated fair coins into unfair ones. We precompute an approximation of the real number $p^*$ as a binary representation $p=0.\overline{\beta_0\beta_1\dots \beta_{\nu-1}}$. Without loss of generality, we assume $\beta_{\nu-1} = 1$, or the trailing zero can be removed. The correctness of Algorithm~\ref{alg:ber-circuit} is stated in Lemma~\ref{lem:ber}.

\begin{algorithm}[!htbp]
\caption{Approximate sampling of $\sf{Ber}(p^*)$}\label{alg:ber-circuit}
\begin{algorithmic}[1]
    \Require $p^*\in (0, 1)$; precision parameter $\nu$; precomputed $p=0.\overline{\beta_0\beta_1\dots \beta_{\nu-1}}$; $\beta_{\nu-1} = 1$; input $\mathbf{b} \in \left\{0, 1\right\}^\nu$
    \Function{$C_\sf{Ber}$}{$\mathbf{b}; p^*,\nu$}
        \State $r \gets \mathbf{b}_{\nu-1}$
        \For{$i\gets \nu-2, \nu-3, \dots, 0$}
            \State \algorithmicif\ $\beta_i = 1$\ \algorithmicthen\ $r \gets r \lor \mathbf{b}_i$\ \algorithmicelse\ $r \gets r \land \mathbf{b}_i$
        \EndFor 
        \State \Return $r$
    \EndFunction
\end{algorithmic}
\end{algorithm}

\begin{Lem} \label{lem:ber}
    With $\mathbf{b} \sample \left\{0, 1\right\}^\nu$, $C_\sf{Ber}\left(\mathbf{b}; p^*, \nu\right)$ follows the distribution of $\sf{Ber}\left(0.\overline{\beta_0\beta_1\dots \beta_{\nu-1}}\right)$.
\end{Lem}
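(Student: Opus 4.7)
The plan is to analyze Algorithm \ref{alg:ber-circuit} via backward induction on the loop index, tracking how the probability that the running value $r$ equals $1$ evolves as each bit $\beta_i$ of the target binary expansion is processed. Concretely, I will let $r_i$ denote the value of $r$ immediately after the iteration handling $\beta_i$ (with $r_{\nu-1} := b_{\nu-1}$ being the initial assignment), and let $q_i := \Pr[r_i = 1]$ where the probability is over the fair coins $\mathbf{b} \sample \{0,1\}^\nu$. The claim to be established is $q_0 = 0.\overline{\beta_0 \beta_1 \dots \beta_{\nu-1}}$.

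The key observation is that at each iteration, $b_i$ is independent of $r_{i+1}$ (since $r_{i+1}$ depends only on $b_{i+1}, \dots, b_{\nu-1}$). Hence the update rule gives two clean recursions: if $\beta_i = 1$, then $r_i = r_{i+1} \lor b_i$, so $q_i = \tfrac{1}{2} + \tfrac{1}{2} q_{i+1}$; and if $\beta_i = 0$, then $r_i = r_{i+1} \land b_i$, so $q_i = \tfrac{1}{2} q_{i+1}$. Both cases can be combined as $q_i = \tfrac{\beta_i}{2} + \tfrac{q_{i+1}}{2}$.

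I will then introduce the auxiliary quantity $f(i) := \sum_{j=i}^{\nu-1} \beta_j \cdot 2^{-(j-i+1)}$, which is precisely the binary fraction represented by the tail $\beta_i \beta_{i+1} \dots \beta_{\nu-1}$. The base case $f(\nu-1) = \beta_{\nu-1}/2 = 1/2 = q_{\nu-1}$ holds by the assumption $\beta_{\nu-1} = 1$ together with $q_{\nu-1} = \Pr[b_{\nu-1} = 1]$. The recursion $f(i) = \tfrac{\beta_i}{2} + \tfrac{f(i+1)}{2}$ is identical to that for $q_i$, so a straightforward induction gives $q_i = f(i)$ for all $i$, and in particular $q_0 = f(0) = 0.\overline{\beta_0\beta_1\dots\beta_{\nu-1}}$, as desired.

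I anticipate no substantive obstacle here: the only subtlety is ensuring the independence used in the recursion is argued cleanly (i.e.\ that $b_i$ is fresh at iteration $i$ because the bits are processed in decreasing order and $r_{i+1}$ is a function of strictly later indices), and noting that the assumption $\beta_{\nu-1} = 1$ makes the initialization $r \gets b_{\nu-1}$ consistent with the general recurrence (otherwise the base case would need to be adjusted, or equivalently the trailing zeros stripped as the paper already remarks).
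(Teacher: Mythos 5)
Your proof is correct and follows essentially the same route as the paper's: both arguments proceed bit by bit using the per-iteration update $q \mapsto \tfrac{q}{2}$ (for $\beta_i = 0$, via $\land$) or $q \mapsto \tfrac{1+q}{2}$ (for $\beta_i = 1$, via $\lor$), justified by the freshness of $\mathbf{b}_i$, with the only difference being that you phrase it as a backward loop invariant on the tail fraction while the paper phrases it as induction on $\nu$. Your handling of the base case via $\beta_{\nu-1} = 1$ matches the paper's convention as well.
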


\begin{proof}[Proof of Lemma~\ref{lem:ber}]
    Clearly, when $\nu = 1$, $\mathbf{b}_0$ is a fair coin. Inductively, if Lemma~\ref{lem:ber} holds for $\nu = N$, then when $\nu = N+1$, $r \sim \sf{Ber}\left(p\right)$ where $p=0.\overline{\beta_1\beta_2\dots \beta_N}$ at the end of iteration $i = 1$. Then if $\beta_0=0$, by the $\land$ operation and uniform randomness of $\mathbf{b}_0$, after iteration $i=0$, $r \sim \sf{Ber}\left(\frac{p}{2}\right)$ such that $\frac{p}{2} = 0.\overline{0\beta_1\beta_2\dots \beta_N}$. Similarly, if $\beta_0=1$, by the $\lor$ operation and uniform randomness of $\mathbf{b}_0$, after iteration $i=0$, $r \sim \sf{Ber}\left(\frac{1+p}{2}\right)$ such that $\frac{1+p}{2} = 0.\overline{1\beta_1\beta_2\dots \beta_N}$.
\end{proof}}
\subsection{Design of the VDDLM Protocol} \label{sec:vddlm-protocol}

Building on the RD construction described in Section~\ref{sec:vddlm-rd}, we now present the end-to-end VDDLM protocol under the VDDP framework. The instantiation of $\Pi_\cli$ depends on the specific application scenario and can be interchanged between mechanisms. For example, when each client holds a single record $x_j \in \bin$ (e.g., indicating a positive or negative outcome) and the analyst aims to verify the count of positive records, $\Pi_\cli$ may be instantiated using an OR-proof protocol~\cite{DBLP:conf/crypto/CramerDS94, DBLP:conf/eurocrypt/Damgard00, DBLP:journals/ftsec/Thaler22, BC23}. We additionally employ a complete, knowledge-sound, and zero-knowledge proof system to verify the correct computation of the randomized function $f\left(\secretshare{\mathbf{x}}_i, \sigma_i, \phi_i\right)$, as defined in Section~\ref{sec:vddlm-rd}. For compatibility with the additive secret-sharing scheme, we define $\sf{IdUsable}(I^*, J^*) = \case{(I^*, J^*) & \text{if } I^* = [n_\ser] \\ (\emptyset, \emptyset) & \text{otherwise}}$ and set $\sf{Aggr}$ as simple summation, consistent with the design of VDBM.

\begin{figure}[!t]
    \centering
    \autofitpcb{
        \begin{pcmbox}
            \begin{array}{@{}r@{}}
                \ser_i\left(\left(\secretshare{\mathbf{x}_j}_{i}\right)_j, \left(\secretshare{\mathbf{r}_j}_i\right)_j, \sigma_i, \right.\\
                \rho_i,\left.\left(\secretshare{\sf{com}_j}_i\right)_j, \psi_i, \phi_i; \sf{pp}\right)
            \end{array}\end{pcmbox}  \< \< \begin{pcmbox}\begin{array}{@{}r@{}}
                    \verifier\left(\left(\secretshare{\sf{com}_j}_i\right)_{i,j}, \right.\\
                    \left.\left(\psi_i\right)_i,\left(\phi_i\right)_i; \sf{pp}\right)
            \end{array}\end{pcmbox} \\[0.1\baselineskip][\hline] 
        \pcln \label{pcln:vddlm-data-fuse-coin} \sigma_i' \gets \sigma_i + \phi_i \< \< J^* \gets \Pi_\cli \\
        \pcln \label{pcln:vddlm-fuse-coin-com} \psi_i'\gets \psi_i \cdot g^{\phi_i} \< \< \psi_i'\gets \psi_i \cdot g^{\phi_i}\\
        \pcln \label{pcln:vddlm-lprf} \mathbf{z}_i \gets \sf{LPRF}(\sigma_i)\\
        \pcln \label{pcln:vddlm-lprf-rand} \mathbf{t}_i \sample \F^d\\
        \pcln \label{pcln:vddlm-com-rand} \zeta_i \gets \sf{Commit}\left(\mathbf{z}_i, \mathbf{t}_i; \sf{pp}\right) \< \sendmessageright*[0.8cm]{\zeta_i} \< \zeta_i\\
        \pcln \label{pcln:vddlm-prove-rand} \begin{pcmbox}\begin{array}{@{}r}\prover_\sf{LPRF}\left(\sigma_i', \rho_i, \mathbf{z}_i, \right.\\\left.\psi_i', \zeta_i; \sf{pp}\right)\end{array} \end{pcmbox} \< \sendmessagerightleft*[0.8cm]{\pi_\sf{LPRF}} \< \verifier_\sf{LPRF}\left(\psi_i', \zeta_i; \sf{pp}\right) \\
        \pcln \label{pcln:vddlm-ser-aggr-x} \secretshare{\mathbf{x}}_i \gets \sum_{j\in J^*}\secretshare{\mathbf{x}_j}_i\< \< \\
        \pcln \label{pcln:vddlm-ser-aggr-r}  \secretshare{\mathbf{r}}_i \gets \sum_{j\in J^*}\secretshare{\mathbf{r}_j}_i\< \< \\
        \pcln \label{pcln:vddlm-com-aggr} \secretshare{\sf{com}}_i \gets \sum_{j\in J^*}\secretshare{\sf{com}_j}_i\< \< \secretshare{\sf{com}}_i \gets \sum_{j\in J^*}\secretshare{\sf{com}_j}_i\\
        \pcln \label{pcln:vddlm-release-out} \secretshare{\mathbf{y}}_i \gets \secretshare{\mathbf{x}}_i + C_\sf{Lap}(\mathbf{z}_i)\< \sendmessageright*[0.8cm]{\secretshare{\mathbf{y}}_i} \< \secretshare{\mathbf{y}}_i \\
        \pcln \label{pcln:vddlm-prove-out} \begin{pcmbox}\begin{array}{@{}r}
                 \prover_\sf{Lap}\left(\secretshare{\mathbf{x}}_i, \mathbf{z}_i, \secretshare{\mathbf{y}}_i, \right.\\\left.\secretshare{\sf{com}}_i, \zeta_i; \sf{pp}\right)
            \end{array} \end{pcmbox} \< \sendmessagerightleft*[0.8cm]{\pi_\sf{Lap}} \< \verifier_\sf{Lap}\left(\secretshare{\sf{com}}_i, \zeta_i; \sf{pp}\right)
    }
    \caption{The instantiation of I2DP in VDDLM.}
    \label{fig:vddlm}
\end{figure}

We utilize the cryptographic primitives introduced in Section~\ref{sec:prelim-crypto} to instantiate the prerequisites and describe the I2DP of VDDLM in Figure~\ref{fig:vddlm}. The protocol terminates with no output if any server fails the proofs for either $\sf{LPRF}$ or $C_\sf{Lap}$, which collectively act as $\Pi_\cli$. The protocol proceeds as follows:

\begin{itemize}[leftmargin=*]
    \item In Line~\ref{pcln:vddlm-data-fuse-coin}, the verifier executes $\Pi_\cli$ with each client and identifies the subset of clients $J^*$ remaining in the protocol. Meanwhile, each server fuses the random seed $\sigma_i'$ of LPRF using $\sigma_i$ and the public coin $\phi_i$ determined by the verifier.
    \item In Line~\ref{pcln:vddlm-fuse-coin-com}, both parties compute $\psi_i'$, a valid commitment to the aggregated random seed $\sigma_i'$.
    \item In Line~\ref{pcln:vddlm-lprf}, the random bits $\mathbf{z}_i$ are generated from $\sigma_i'$, which are committed in Lines~\ref{pcln:vddlm-lprf-rand} and \ref{pcln:vddlm-com-rand}, and proved in Line~\ref{pcln:vddlm-prove-rand}.
    \item In Lines~\ref{pcln:vddlm-ser-aggr-x} and \ref{pcln:vddlm-ser-aggr-r}, each server aggregates the secret-shares $\secretshare{\mathbf{x}_j}_i$ and the commitment randomnesses $\secretshare{\mathbf{r}_j}_i$ from the clients that have passed $\Pi_\cli$ (i.e., $j\in J^*$) as $\secretshare{\mathbf{x}}_i$ and $\secretshare{\mathbf{r}}_i$. Both parties also aggregate the commitments of these values in Line~\ref{pcln:vddlm-com-aggr}.
    \item In Line~\ref{pcln:vddlm-release-out}, the secret-share of each client is perturbed using the Laplace mechanism realized by $C_\sf{Lap}(\cdot)$. Each server must add its own copy of the noise due to the potential risk of collusion between the verifier and some clients.  
    \item In Line~\ref{pcln:vddlm-prove-out}, the correctness of $\secretshare{\mathbf{y}}_i$ is established using the sub-protocol of the proof over the arithmetic circuit of $C_\sf{Lap}(\cdot)$ before $\verifier$ runs $\sf{IdUsable}$ and $\sf{Aggr}$.
\end{itemize}

\subsubsection{Extension to the \emph{Verifiable Distributed Discrete Gaussian Mechanism (VDDGM)}} Extending the VDDP framework to support a discrete Gaussian mechanism (DGM) follows a similar high-level construction. However, the state-of-the-art rejection sampling method for discrete Gaussian noise translates into circuits of unbounded depth, which is naturally incompatible with ZKP back-ends based on definitive circuits. We circumvent this issue following prior MPC mechanisms for DGM~\cite{DBLP:conf/ccs/WeiYFCW23}, by revealing acceptance bits in each iteration to enable dynamic loop termination. 

However, we acknowledge that adopting such a workaround within the VDDP framework would significantly increase verifier overhead by inflating the number of proof instances. Additionally, the accumulation of numerical errors from fixed-point arithmetic during the iterative process introduces further challenges to tight privacy analysis. Therefore, we argue that better instantiations of VDDGM under the VDDP framework will likely require the development of alternative sampling techniques that avoid rejection sampling altogether.
\subsection{Analysis of VDDLM}\label{sec:vddlm-analysis}

\subsubsection{Security and Privacy} The truncations and approximations in Algorithm~\ref{alg:dlap-circuit} may cause additional privacy leakages and need to be considered in the privacy analysis. However, unlike the previous analysis on a similar circuit for MPC~\cite{DBLP:conf/ccs/WeiYFCW23}, our analysis does not take the detour via the statistical distance from the original discrete Laplace distribution, which enables us to provide a tight bound on the privacy cost.

\begin{Thm}\label{thm:vddlm-dp}
    Given a query $q: \mathcal{D} \to \mathbb{Z}$ with sensitivity $1$, the modified discrete Laplace mechanism as in Algorithm~\ref{alg:dlap-circuit}, $\mathcal{M}(D) := q(D) + \mathcal{C}_\sf{Lap}$, satisfies $(\epsilon, \delta)$-DP such that
    \begin{equation}
        \label{eq:vddlm-eps-delta}
        \epsilon = \log\left(\max\left\{a_z, a_0, \dots, a_{\gamma-1}\right\}\right), \quad
        \delta = \frac{2p_z}{(1-p_z)\prod_{i=0}^{\gamma-1}(1-p_i)},
    \end{equation}
    where $a_z = \frac{2p_z}{(1-p_z)\prod_{i=0}^{\gamma-1}(1-p_i)}$ and $a_i = \frac{1-p_z}{2}\prod_{i=0}^{\gamma-1}p_i$ for each $i \in [\gamma]$. Moreover, the values of $(\epsilon, \delta)$ are tight.
\end{Thm}

\begin{proof}[Proof of Theorem~\ref{thm:vddlm-dp}]
    For any subset $S \subset \mathbb{Z}$ and neighbouring databases $D, D'$, we consider \begin{align}
        S_+:=& S \cap \left(\supp{\mathcal{M}\left(D\right)}\cap \supp{\mathcal{M}\left(D'\right)}\right)\\
        S_-:=&S \cap \left(\supp{\mathcal{M}
    \left(D\right)}\backslash \supp{\mathcal{M}\left(D'\right)}\right),
    \end{align} such that \begin{align}
        \Pr[\mathcal{M} \left(D\right) \in S] =& \Pr[\mathcal{M} \left(D\right) \in S_+] + \Pr[\mathcal{M}\left(D\right) \in S_-]
    \end{align}

    WLOG, assuming $q(D') - q(D) = 1$, then \begin{equation}\supp{\mathcal{M}
    \left(D\right)}\backslash \supp{\mathcal{M}\left(D'\right)} = \left\{q(D) - 2^\gamma\right\}.\end{equation} Therefore, \begin{align}
        \Pr[\mathcal{M}(D) \in S_-] \leq & \Pr[\mathcal{M}(D) = q(D) - 2^\gamma]\\
        = & \Pr[-2^\gamma\gets \mathcal{C}_\sf{Lap}] \\
        = & \Pr[b_z = 1, s = -1, a = 2^\gamma] \\
        = & (1-p_z)\cdot \frac{1}{2} \cdot \prod_{i=0}^{\gamma - 1}p_i= \delta.
    \end{align} Note that the equality holds iff $q(D) - 2^\gamma \in S$. 

    Furthermore, given any $r \in \left\{-2^\gamma + 1, \dots,  2^\gamma\right\}$, the probability
    \begin{align}
        \Pr[\mathcal{M}(D) = q(D) + r] &= \Pr[r\gets \mathcal{C}_\sf{Lap}]\\
        \Pr[\mathcal{M}(D') = q(D) + r] &= \Pr[r-1 \gets \mathcal{C}_\sf{Lap}]
    \end{align} As we aim at upper bounding $\Pr[\mathcal{M}(D) = q(D) + r]$ by $\Pr[\mathcal{M}(D') = q(D) + r]$, we only need to consider the case where $r \leq 0$.

    If $r = 0$, the ratio of the two probabilities \begin{equation}
        \frac{\Pr[\mathcal{M}(D) = q(D) + r]}{\Pr[\mathcal{M}(D') = q(D) + r]} = \frac{p_z}{\frac{1-p_z}{2}\prod_{i=0}^{\gamma-1} (1-p_i)} = a_z. 
    \end{equation} Otherwise, by representing \begin{equation}
        r = -\overline{\mathfrak{b}1\underbrace{00\dots0}_{i\times}} \quad r+1 = -\overline{\mathfrak{b}0\underbrace{11\dots1}_{i\times}}
    \end{equation}for any bit string $\mathfrak{b}$, the ratio between the two probabilities becomes \begin{align}
        \frac{\Pr[\mathcal{M}(D) = q(D) + r]}{\Pr[\mathcal{M}(D') = q(D) + r]} = \frac{p_i\prod_{j=0}^{i-1}(1-p_j)}{(1-p_i)\prod_{j=0}^{i-1}p_j} = a_i
    \end{align}

    Therefore, \begin{equation}
        \Pr[\mathcal{M} \left(D\right) \in S_+] \leq e^\epsilon \Pr[\mathcal{M} \left(D'\right) \in S],
    \end{equation} where the equality holds when $S_+$ is a subset of the maximizer in Equation~\eqref{eq:vddlm-eps-delta}. 

    Summarizing all the above, \begin{align}
        \Pr[\mathcal{M} \left(D\right) \in S] \leq e^\epsilon \Pr[\mathcal{M} \left(D'\right) \in S] + \delta, 
    \end{align} the equality holds for $S = S_+\cup S_-$ where $S_+$ and $S_-$ satisfy the aforementioned equality conditions.
\end{proof}

\VersionText{The generalized version of Theorem~\ref{thm:vddlm-dp} for queries with generic sensitivities is presented in the full version. }{Theorem~\ref{thm:vddlm-dp-generalized} generalizes Theorem~\ref{thm:vddlm-dp} to queries with arbitrarily large sensitivity. \begin{Thm}[Theorem~\ref{thm:vddlm-dp}, generalized] \label{thm:vddlm-dp-generalized}
        Given query $q: \mathcal{D} \to \mathbb{Z}$ with sensitivity $\Delta \geq 1$, the modified discrete Laplace mechanism defined in Theorem~\ref{thm:vddlm-dp} satisfies $(\epsilon, \delta)$-differential privacy, where \begin{align}
            \label{eq:vddlm-eps-generalized}\epsilon &= \Delta \cdot \log\left(\max\left\{a_z, a_0, \dots, a_{\gamma-1}\right\}\right),\\
            \label{eq:vddlm-delta-generalized}\delta &= \sum_{i\in [\Delta]}\Pr[-2^\gamma + i \gets \mathcal{C}_\sf{Lap}], 
        \end{align} where $a_z$ and $a_i$s are defined in Theorem~\ref{thm:vddlm-dp-generalized}.
    \end{Thm}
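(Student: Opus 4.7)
The plan is to leverage the sensitivity-1 case already proved in Theorem~\ref{thm:vddlm-dp} and extend it to general $\Delta$ through a telescoping product, together with a support-boundary bookkeeping argument for the $\delta$ term. Without loss of generality, assume $q(D') - q(D) = \Delta$: the case of opposite sign follows from the symmetry of the support of $\mathcal{C}_\sf{Lap}$ around zero (see Algorithm~\ref{alg:dlap-circuit}, where $r = (1-b_z)\cdot s\cdot a$ is invariant in distribution under $s\mapsto -s$), and neighboring databases with $|q(D)-q(D')| < \Delta$ only yield a weaker requirement.

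First I would partition any $S \subseteq \Z$ as in the sensitivity-1 proof into $S_+ := S \cap (\supp{\mathcal{M}(D)} \cap \supp{\mathcal{M}(D')})$ and $S_- := S \cap (\supp{\mathcal{M}(D)} \setminus \supp{\mathcal{M}(D')})$. Because the support of $\mathcal{C}_\sf{Lap}$ is $\{-2^\gamma, \dots, 2^\gamma\}$, the set $\supp{\mathcal{M}(D)} \setminus \supp{\mathcal{M}(D')}$ is exactly $\{q(D) - 2^\gamma + i : i \in [\Delta]\}$ (assuming $\Delta \leq 2^{\gamma+1}$; otherwise the $\delta$ bound is vacuous and the statement is trivial). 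Summing the $\Delta$ point masses gives
\[
\Pr[\mathcal{M}(D) \in S_-] \leq \sum_{i\in[\Delta]} \Pr[-2^\gamma + i \gets \mathcal{C}_\sf{Lap}],
\]
matching Equation~\eqref{eq:vddlm-delta-generalized}, with equality when $S$ contains all such points.

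For the main multiplicative bound on $S_+$, the key idea is the telescoping
\[
\frac{\Pr[\mathcal{M}(D) = q(D)+r]}{\Pr[\mathcal{M}(D') = q(D)+r]} = \prod_{j=0}^{\Delta-1} \frac{\Pr[r-j\gets \mathcal{C}_\sf{Lap}]}{\Pr[r-j-1 \gets \mathcal{C}_\sf{Lap}]}
\]
for any $y = q(D)+r$ with $r \in [\Delta - 2^\gamma, 2^\gamma]$ (the set of $r$'s for which $y$ lies in both supports). The sensitivity-1 analysis from Theorem~\ref{thm:vddlm-dp} shows each factor is bounded by $\max\{a_z, a_0, \dots, a_{\gamma-1}\}$, so the product is at most this quantity raised to the $\Delta$, giving $e^\epsilon$ with $\epsilon$ as in Equation~\eqref{eq:vddlm-eps-generalized}. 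Combining with the $S_-$ bound and summing over $S_+$ yields $(\epsilon,\delta)$-DP.

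The main obstacle is the bookkeeping to ensure the telescope stays inside the support: one must verify that for every $j \in [\Delta]$, both $r-j$ and $r-j-1$ lie in $[-2^\gamma, 2^\gamma]$, otherwise the per-step ratio from Theorem~\ref{thm:vddlm-dp} is not directly applicable. This is immediate from $r \geq \Delta - 2^\gamma$ and $j \leq \Delta - 1$, giving $r - j - 1 \geq -2^\gamma$, together with $r - j \leq r \leq 2^\gamma$. The remaining task is to argue tightness by exhibiting an extremal $S_+$ that attains the per-step bound on every factor of the telescope (which is possible since the maximizer index in $\{z,0,\dots,\gamma-1\}$ can be reached simultaneously by choosing $r$ appropriately), mirroring the tightness argument in the proof of Theorem~\ref{thm:vddlm-dp}.
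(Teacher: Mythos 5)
Your proposal follows essentially the same route as the paper's proof in Appendix~\ref{appendix:vddlm-generalized}: the same split of $S$ into $S_+$ and $S_-$, the same identification of the $\Delta$ boundary points $\left\{q(D)-2^\gamma+i : i\in[\Delta]\right\}$ giving $\delta$, and your telescoping product is exactly what the paper means by ``repeatedly applying the argument'' of Theorem~\ref{thm:vddlm-dp}, with your support bookkeeping being a welcome explicit check. The only caveat is your closing tightness step, which is neither required (Theorem~\ref{thm:vddlm-dp-generalized}, unlike Theorem~\ref{thm:vddlm-dp}, does not assert tightness) nor correct as sketched, since the maximizing index (e.g., $a_z$, attainable only at the single step through $0$) generally cannot be hit at all $\Delta$ consecutive factors of the telescope; dropping that remark leaves a valid proof of the stated $(\epsilon,\delta)$ bound.
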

    
    \begin{proof}[Proof of Theorem~\ref{thm:vddlm-dp-generalized}]
        For any subset $S \subset \mathbb{Z}$ and neighbouring databases $D, D'$, we consider \begin{align}
            S_+:=& S \cap \left(\supp{\mathcal{M}\left(D\right)}\cap \supp{\mathcal{M}\left(D'\right)}\right)\\
            S_-:=&S \cap \left(\supp{\mathcal{M}\left(D\right)}\backslash \supp{\mathcal{M}\left(D'\right)}\right),
        \end{align} such that \begin{align}
            \Pr[\mathcal{M} \left(D\right) \in S] =& \Pr[\mathcal{M} \left(D\right) \in S_+] + \Pr[\mathcal{M}\left(D\right) \in S_-].
        \end{align}
    
        WLOG, assuming $q(D') - q(D) = \Delta$, then \begin{equation}\supp{\mathcal{M}\left(D\right)} \backslash \supp{\mathcal{M}\left(D'\right)} = \left\{q(D) - 2^\gamma + i: i \in [\Delta]\right\}.\end{equation} Therefore, \begin{equation}
            \Pr[\mathcal{M}(D) \in S_-] \leq \sum_{i=0}^{\Delta - 1}\Pr[\mathcal{M}(D) = q(D) - 2^\gamma + i] = \delta.
        \end{equation} 
    
        Furthermore, given any $r \in \left\{-2^\gamma + \Delta, \dots, 2^\gamma\right\}$, the probability
        \begin{align}
            \Pr[\mathcal{M}(D) = q(D) + r] =& \Pr[r\gets \mathcal{C}_\sf{Lap} ],\\
            \Pr[\mathcal{M}(D') = q(D) + r] =& \Pr[r - \Delta \gets \mathcal{C}_\sf{Lap} ].
        \end{align} As we aim at upper bounding $\Pr[\mathcal{M}(D) = q(D) + r]$ by $\Pr[\mathcal{M}(D') = q(D) + r]$, by repeatedly applying the argument in the proof of Theorem~\ref{thm:vddlm-dp}, the ratio of the probabilities is upper bounded by \begin{equation}\left(\max\left\{a_z, a_0, \dots, a_{\gamma-1}\right\}\right)^\Delta,\end{equation} such that \begin{equation}
            \Pr[\mathcal{M} \left(D\right) \in S_+] \leq e^\epsilon \Pr[\mathcal{M} \left(D'\right) \in S].
        \end{equation} 
    
        Summarizing all the above, \begin{align}
            \Pr[\mathcal{M} \left(D\right) \in S] \leq e^\epsilon \Pr[\mathcal{M} \left(D'\right) \in S] + \delta.
        \end{align} 
    \end{proof}
}

Moreover, similar to VDBM~\cite{BC23}, its DP guarantee extends to distributed settings when at least one server is fully honest:

\VersionText{\begin{Prop}\label{prop:vddlm-ddp}
    The server function of VDDLM is $(n_\ser - 1, \epsilon, \delta)$-DDP for $\epsilon$ and $\delta$ defined in Theorem~\ref{thm:vddlm-dp}.
\end{Prop}}{\begin{Lem}\label{lem:vddlm-ddp}
    The server function of VDDLM is $(n_\ser - 1, \epsilon, \delta)$-DDP for $\epsilon$ and $\delta$ defined in Theorem~\ref{thm:vddlm-dp}.
\end{Lem}

\begin{proof}[Proof of Lemmas~\ref{lem:vdbm-ddp} and \ref{lem:vddlm-ddp}]
    WLOG, consider $\cli_0$ with $H_0 = \left[\tau'\right]$, where $\tau' \leq n_\ser - 1$, and any single-dimensional counting query $\mathcal{M}(x) := x + r$ (where $x \in \F$ is the unperturbed count, and $r$ is additive noise such that $\supp{r} \subset \F$) that achieves $(\epsilon, \delta)$-DP (which is the case for both the binomial mechanism and the modified Laplace mechanism). The extension of $\M$ to $d$ dimensions for any $d \geq 1$ applies $\M$ elementwise, i.e.,
    \begin{equation}
        \mathcal{M}\left(x_0, x_1, \dots, x_{d-1}\right) := \left(\mathcal{M}\left(x_0\right), \mathcal{M}\left(x_1\right), \dots, \mathcal{M}\left(x_{d-1}\right)\right).
    \end{equation}
    By independence among all $d$ dimensions, for any $\mathbf{x}, \mathbf{x}' \in \F^d$ such that $\norm{\mathbf{x} - \mathbf{x}'}_1 \leq 1$,
    \begin{equation}
        \Pr\left[\mathcal{M}(\mathbf{x}) \in S\right] \leq e^\epsilon \Pr\left[\mathcal{M}(\mathbf{x}') \in S\right] + \delta.
    \end{equation}
    Therefore, $\M$ achieves $(\epsilon, \delta)$-DP.

    Consider
    \procedureblock[space=auto, linenumbering]{$\simulator\left(\mathbf{x}_0, S_\text{out}, S_\text{in}, J^*\right)$}{
        \mathbf{y}_0 \sample \underbrace{\mathcal{M} \circ \dots \circ \mathcal{M}}_{\tau' \times} (\mathbf{x}_0)\\
        \pcfor i \gets 1, 2, \dots, \tau' - 1 \pcdo\\
            \secretshare{\mathbf{y}_0}_i \sample \F^d\\
            \secretshare{\mathbf{y}}_i \gets \secretshare{\mathbf{y}_0}_i + \sum_{j \in J^* \backslash \left\{0\right\}} \secretshare{\mathbf{x}_j}_i\\
        \pcendfor\\
        \secretshare{\mathbf{y}_0}_0 \gets \mathbf{y}_0 - \sum_{i = 1}^{\tau' - 1} \secretshare{\mathbf{y}_0}_i - \sum_{i = \tau'}^{n_\ser - 1} \secretshare{\mathbf{x}_0}_i\\
        \secretshare{\mathbf{y}}_0 \gets \secretshare{\mathbf{y}_0}_0 + \sum_{j \in J^* \backslash \left\{0\right\}} \secretshare{\mathbf{x}_j}_0\\
        \pcreturn \secretshare{\mathbf{y}}_0
    }
    which perfectly simulates $\sf{View}_\mathcal{F}^\mathfrak{C}\left(D_j, S_\text{out}, S_\text{in}, J^*\right)$ for any
    \begin{align}
        S_\text{out} := &\left(\secretshare{\mathbf{x}_0}_{i}\right)_{\tau' \leq i \leq n_\ser - 1}\\
        S_\text{in} := &\left(\secretshare{\mathbf{x}_j}_i, \secretshare{\mathbf{x}_j}_i, \dots, \secretshare{\mathbf{x}_j}_i\right)_{\substack{i \in [\tau']\\1 \leq j \leq n_\cli - 1}}
    \end{align}
    and $J^* \subseteq [n_\cli] \backslash \left\{0\right\}$. By post-processing, $\simulator\left(\mathbf{x}_0, S_\text{out}, S_\text{in}, J^*\right)$ is $\left(\epsilon, \delta\right)$-DP. Therefore, for both VDBM and VDDLM, the underlying $\mathcal{F}$ is $(\tau', \epsilon, \delta)$-DDP for any $\tau \leq n_\ser - 1$.
\end{proof}}

Therefore, with complete, knowledge-sound, and ZK $\Pi_\cli$ and $\Pi_\ser$, the desired security and privacy guarantees can be achieved as stated in Proposition~\ref{prop:vddlm-sp}\footnote{Similar to VDBM~\cite{BC23}, VDDLM can be constructed using any linear secret-sharing scheme (e.g., Shamir's secret-sharing~\cite{DBLP:journals/cacm/Shamir79}), which achieves $\theta$-completeness with $\theta > 0$.}.

\begin{Prop} \label{prop:vddlm-sp}
    A family of VDDLM is $0$-complete, knowledge-sound, and $\left(n_\ser - 1, \epsilon\right)$-VDDP when the parameters are chosen such that the modified discrete Laplace mechanisms satisfy $\left(\epsilon\left(\kappa\right), \delta\left(\kappa\right)\right)$-DP as described in Theorem~\ref{thm:vddlm-dp}, where $\delta\left(\kappa\right) \in \negl[\kappa]$.
\end{Prop}

\subsubsection{Utility and Overhead} The utility of the modified discrete Laplace mechanism can be measured by the L1 error, as stated in Theorem~\ref{thm:modified-dl-l1}. Given that the modified discrete Laplace mechanism is designed to closely approximate its original version, the L1 error remains $\bigTheta{\frac{1}{\epsilon}}$~\cite{DBLP:journals/siamcomp/GhoshRS12, DBLP:conf/innovations/BalcerV18}. For $n_\ser$ servers and $d$ dimensions, the total expected L1 error is $\Theta\left(\frac{\sqrt{n_\ser}d}{\epsilon}\right)$. On the other hand, each server's overhead is proportional to the product of the total number of fair coins generated, $n_\sf{Lap}$, and the dimensionality $d$. However, $n_\sf{Lap} = \nu_z + 1 + \sum_{i=0}^{\gamma - 1}\nu_i$ results from the range and precision parameters in Algorithm~\ref{alg:dlap-circuit}, and does not directly depend on the privacy parameters $\left(\epsilon, \delta\right)$.

\begin{Thm} \label{thm:modified-dl-l1}
    In the modified discrete Laplace mechanism as in Algorithm~\ref{alg:dlap-circuit}, the expected L1 error is given by $\mathbb{E}\abs{\mathcal{M}(D) - q(D)} = (1-p_z)\left(1+\sum_{i=0}^{\gamma-1}2^i p_i\right) \in \Theta\left(\frac{1}{\epsilon}\right)$.
\end{Thm}

\IfConference{\begin{proof}[Proof Sketch of Theorem~\ref{thm:modified-dl-l1}]
    By conditioning on $b_z$ and the linearity of expectation on $r_i$s.
\end{proof}}

\IfFull{\begin{proof}[Proof of Theorem~\ref{thm:modified-dl-l1}]
    We follow the notation in Section~\ref{sec:vddlm-rd}. Conditioning on $b_z$, \begin{align}
        ~ & \mathbb{E}\abs{\mathcal{M}(D) - q(D)}\\
        = & p_z \cdot 0 + (1-p_z) \mathbb{E}a\\
        = & (1-p_z) \left(1+ \sum_{i=0}^{\gamma - 1}2^i \Pr\left[r_i = 1\right]\right)\\
        = & (1-p_z) \left(1+\sum_{i=0}^{\gamma - 1}2^i p_i\right)
    \end{align}
\end{proof}}

\begin{table}[!t]
    \centering
    \caption{Utility (expected \textbf{L1} error) and overhead (each server's total running time, total communication, and verifier's total running time) comparison between VDBM~\cite{BC23} and VDDLM during the execution of $\Pi_\ser$.}
    \label{tab:vddlm-overhead}
    \autofit
    {\begin{tabular}{@{}l@{\,\,}c@{\,\,}c@{\,\,}c@{}}
    \toprule
          & L1 & Server & Comm. \& Verifier \\ \hline
    VDBM~\cite{BC23} & $\Theta\left(\frac{\sqrt{n_\ser}d}{\epsilon}\sqrt{\log \frac{1}{\delta}}\right)$ & $\bigTheta{\frac{d}{\epsilon^2}\log\frac{1}{\delta}}$ & $\bigTheta{\frac{n_\ser d}{\epsilon^2}\log\frac{1}{\delta}}$ \\
    \textbf{VDDLM}  & $\Theta\left(\frac{\sqrt{n_\ser}d}{\epsilon}\right)$ & $\bigTheta{dn_\sf{Lap}}$  & $\bigTheta{n_\ser n_\sf{Lap}}$ \\ \bottomrule
    \end{tabular}}
\end{table}

\subsubsection{Comparison with VDBM} Given $(\epsilon, \delta)$-DP achieved in the central DP setting with negligible $\delta$, both VDBM and VDDLM achieve $0$-completeness, knowledge soundness, and $(n_\ser - 1, \epsilon)$-VDDP. Note that for a smaller number of colluding servers $n' \leq n_\ser - 1$, smaller values of $\epsilon$ and $\delta$ can also be achieved. However, as semi-honest servers cannot be detected, for the most robust privacy guarantee, we focus on the case where at most $n_\ser - 1$ servers may collude. Meanwhile, the utility and overhead of the two mechanisms differ mainly in $\Pi_\ser$ only, as the other components of the two protocols are transferable. Therefore, the two protocols can be compared fairly under the same privacy parameters, as illustrated in Table~\ref{tab:vddlm-overhead}. Moreover, we plot the L1 error and the number of coins under different $\epsilon$ values in the central-DP and uni-dimensional case. It can be observed that VDDLM achieves a 5--10x reduction in error compared with the binomial mechanism. Furthermore, the server's overhead grows significantly more slowly under VDDLM, making scenarios with smaller $\epsilon$ (e.g., $\epsilon = 10^{-3}$) much more feasible.

\begin{figure}[!t]
    \centering
    \includegraphics[width=\linewidth]{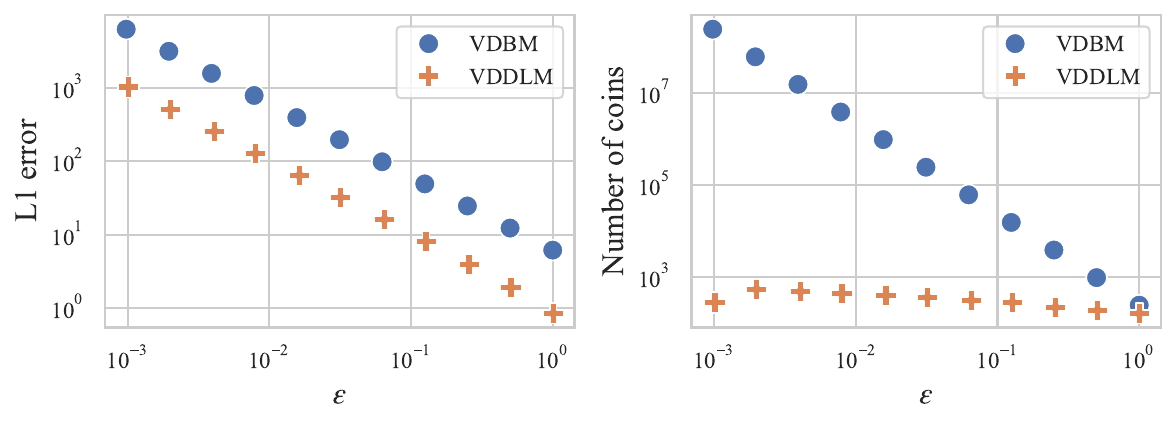}
    \caption{Utility and overhead comparison between VDBM (fixed $\delta = 10^{-10}$) and VDDLM ($\delta < 10^{-10}$).}
    \label{fig:vddlm-anal}
\end{figure}
\section{VRR: Verifiable Randomized Response} \label{sec:vrr}

In this section, we construct the verifiable randomized response (VRR) scheme. As a special case of distributed DP, in this local DP mechanism with a pure DP guarantee, a client also takes the role of the only server that executes the perturbation on its output. Therefore, we aim to achieve $0$-completeness as well as $(0, \epsilon)$-VDDP (the client is accepted and its privacy is preserved as long as it and its own server perform the computation correctly), in addition to knowledge soundness. Due to the low interference among different clients, we omit their indices in this section.
\subsection{Construction of RD for VRR} \label{sec:vrr-rd}

We begin by constructing an RD that is compatible with the cryptographic primitives applied, while adhering to the original protocol of RR as introduced in Section~\ref{sec:prelim-dp}. First, we map the input space $[K]$ to a multiplicative cyclic subgroup $\mathcal{X}$ of a prime order finite field $\F$, where $\mathcal{X}$ has order $K$ and has a generator $\chi$, such that $i \mapsto \chi^i$ forms an isomorphism between $[K]$ and $\mathcal{X}$. Therefore, by Equation~\eqref{eq:rr}, given any input $x \in \mathcal{X}$ held by a client, it is supposed to submit $y = x \cdot \chi^k$ with probability $p_k$ for any $k \in [K]$.

Moreover, to represent the probability distribution defined by the $p_k$s, we utilize another cyclic multiplicative subgroup $\Omega$ with generator $\omega$, and construct a cyclic subgroup such that $p_k = \frac{A_k}{\abs{\Omega}}$ for each $0 \leq k \leq K-1$, where all $A_k$s are integers.

Therefore, there exists a degree-$(\abs{\Omega} - 1)$ polynomial $F$ such that the multiset $\left\{F(\omega^i): i \in \left[\abs{\Omega}\right] \right\}$ has $A_k$ copies of $\chi^k$ for each $k \in \left[K\right]$. That is, the probability space is quantized with a precision of $\frac{1}{\abs{\Omega}}$. Moreover, with $i \sample [\abs{\Omega}]$, the probability that $F\left(\omega^i\right) = \chi^k$ is $p_k$ for each $k \in [K]$. Therefore, $x F\left(\omega^i\right)$ has exactly the same distribution as the client's output $y$.

We further note that the sampling of $xF(\omega^i)$, where $i \sample \left[\abs{\Omega}\right]$, can be decomposed between the client and verifier, such that by having $\mathcal{P}_\sigma$ (under the control of the client) and $\mathcal{P}_\phi$ (under the control of the verifier) as the uniform distribution over $\Omega$, a valid RD can be constructed as
\begin{equation} \label{eq:vrr-rd}
    f(x, \sigma, \phi) := xF\left(\sigma \cdot \phi\right). 
\end{equation}
For simplicity, we also use the indices $i_\sigma$ and $i_\phi$ to identify $\sigma$ and $\phi$, such that $\sigma = \omega^{i_\sigma}$ and $\phi = \omega^{i_\phi}$, where $i_\sigma$ and $i_\phi$ can be uniformly sampled from $\left[\abs{\Omega}\right]$.

Therefore, in Section~\ref{sec:vrr-protocol}, we focus on establishing the proof protocol for the validity of data (i.e., $x \in \mathcal{X}$) and the correctness of the computation over $f$, as the components of $\Pi_\cli$ and $\Pi_\ser$ described in Figure~\ref{fig:def-proof} of Section~\ref{sec:def-vddp}.
\subsection{Design of the VRR Protocol} \label{sec:vrr-protocol}

To instantiate $\Pi_\cli$ and $\Pi_\ser$ as in Figure~\ref{fig:def-proof}, both executed between the client (which acts as the server that handles its own data) and the verifier, we first identify the arithmetic relations to be proved. Note that for $\Pi_\ser$, in addition to the correctness over Equation~\eqref{eq:vrr-rd}, it is also necessary to verify that $\sigma \in \Omega$, which is equivalent to $\sigma \cdot \phi \in \Omega$ since $\phi \in \Omega$. On the other hand, for $\Pi_\cli$, the client needs to prove $x \in \mathcal{X}$. Since $F(\sigma \cdot \phi) \in \mathcal{X}$ for any $\sigma \cdot \phi \in \Omega$, $x \in \mathcal{X}$ iff $y = xF(\sigma \cdot \phi) \in \mathcal{X}$. Therefore, it suffices for the verifier to directly check that $y \in \mathcal{X}$. By elementary algebra, membership in the cyclic multiplicative subgroups $\mathcal{X}$ and $\Omega$ is equivalent to the vanishing polynomials $F_\mathcal{X}(X) = X^K - 1$ and $F_\Omega(X) = X^{\abs{\Omega}} - 1$ evaluating to 0. Therefore, the I2DP between a client $\cli$ and a verifier $\verifier$ proves that
\begin{equation}
    F_\mathcal{X}(y) = 0 \land y = f(x, \sigma, \phi) \land F_\Omega(\sigma \cdot \phi) = 0. 
\end{equation}

As shown in Figure~\ref{fig:vrr}, the I2DP begins after the client commits to its data $x$ and obfuscation $i_\sigma$, as $\sf{com} = g^x h^{r_x}$ and $\psi = g^{\omega^{i_\sigma}} h^{r_\sigma}$ with randomness $r_x, r_\sigma \sample \F$, and the verifier decides on the public coin $i_\phi$. The protocol proceeds as follows:

\begin{figure}[!t]
    \centering
    \autofitpcb{
        \begin{pcmbox}\begin{array}{@{}r@{}}\cli\left(x, i_\sigma, r_x, r_\sigma,\right.\\\left. \sf{com}, \psi, i_\phi; \sf{pp} \right)\end{array}\end{pcmbox}  \< \< \verifier\left(\sf{com}, \psi, i_\phi; \sf{pp}\right)\\[0.1 \baselineskip][\hline]
        \pcln \label{pcln:vrr-y} y \gets xF\left(\omega^{i_\sigma + i_\phi}\right)\< \sendmessageright*[0.8cm]{y} \< y^K \checkeq 1 \\
        \pcln \label{pcln:vrr-it} i_t \gets i_\sigma + i_\phi\\
        \pcln \label{pcln:vrr-rt}r_t \gets r_\sigma \omega^{i_\phi} \< \< \\
        \pcln \label{pcln:vrr-com-it} \sf{com}_t \gets \psi^{\omega^{i_\phi}} \< \< \sf{com}_t \gets \psi^{\omega^{i_\phi}}\\
        \pcln \label{pcln:vrr-z} z = yx^{-1}, r_z \sample \F \< \< \\
        \pcln \label{pcln:vrr-com-z} \sf{com}_z \gets g^z h^{r_z} \< \sendmessageright*[0.8cm]{\sf{com}_z} \< \sf{com}_z \\
        \pcln \label{pcln:vrr-evsc-alpha} \alpha \< \sendmessageleft*[0.8cm]{\alpha} \< \alpha \sample \F \\
        \pcln \label{pcln:vrr-evsc-alpha-comb} F_\alpha \gets F + \alpha F_\Omega \< \< g_\alpha \gets g^{F(\tau)}\cdot \left(g^{F_\Omega(\tau)}\right)^\alpha \\
        \pcln \label{pcln:vrr-evsc}\begin{pcmbox}
            \begin{array}{@{}r@{}}
                \prover_\sf{EvSc}\left(z, r_z, \omega^{i_t}, r_t, F_\alpha,\right.\\ \left.\sf{com}_z, \sf{com}_t; \sf{pp}\right)
            \end{array}
        \end{pcmbox} \< \sendmessagerightleft*[0.8cm]{\pi_\sf{EvSc}} \< \begin{pcmbox}
            \begin{array}{@{}r@{}}
                 \verifier_\sf{EvSc}\left(\sf{com}_z, \sf{com}_t,\right.\\ \left.g_\alpha; \sf{pp}\right)
            \end{array}
        \end{pcmbox} \\
        \pcln \label{pcln:vrr-prod} \begin{pcmbox}
            \begin{array}{@{}r@{}}
                 \prover_\sf{Prod}\left(y, z, x, 0, r_z, r_x,\right.\\ \left.g^y, \sf{com}_z, \sf{com}; \sf{pp}\right)
            \end{array}
        \end{pcmbox} \< \sendmessagerightleft*[0.8cm]{\pi_\sf{Prod}} \< \verifier_\sf{Prod}\left(g^y, \sf{com}_z, \sf{com}; \sf{pp}\right)
    }
    \caption{The instantiation of I2DP in VRR.}
    \label{fig:vrr}
\end{figure}

\begin{itemize}[leftmargin=*]
    \item In Line~\ref{pcln:vrr-y}, the client computes the output $y$ as in Equation~\eqref{eq:vrr-rd}, and sends $y$ to the verifier. The verifier immediately checks if $y^K = 1$, i.e., $F_\mathcal{X}(y) = 0$.
    \item In Lines~\ref{pcln:vrr-it} to \ref{pcln:vrr-com-it}, the client fuses the obfuscation $i_t$ and public coin $i_\phi$ as $i_t \gets i_\sigma + i_\phi$, or equivalently, $t \gets \omega^{i_\sigma} \omega^{i_\phi} = \sigma \cdot \phi$. Both parties compute $\sf{com}_t$, a valid commitment of $t$ using the homomorphic property of the commitment scheme.
    \item In Lines~\ref{pcln:vrr-z} and \ref{pcln:vrr-com-z}, the client computes and commits to the intermediate value $z = y x^{-1} = F(\sigma \cdot \phi)$, and sends the commitment $\sf{com}_z$ to the verifier.
    \item In Lines~\ref{pcln:vrr-evsc-alpha} to \ref{pcln:vrr-evsc}, the verifier chooses $\alpha \sample \F$ and transmits it to the client, and asks the client to prove that $F(t) + \alpha F_\Omega(t) = z$. By the Schwartz--Zippel Lemma~\cite{DBLP:journals/jacm/Schwartz80, DBLP:conf/eurosam/Zippel79}, this is equivalent to $F(t) = z \land F_\Omega(t) = 0$ with overwhelming probability over the randomness of $\alpha$. Here, $\prover_\sf{EvSc} \leftrightarrow \verifier_\sf{EvSc}$ is a complete, knowledge-sound, and zero-knowledge interactive proof of the evaluation of a public polynomial with secret input and output (see Section~\ref{sec:evsc}).
    \item In Line~\ref{pcln:vrr-prod}, the client proves to the verifier that $y = zx$ (i.e., $z = y x^{-1}$) as elements of $\F$ bound by their commitments, via a folklore complete, knowledge-sound, and zero-knowledge interactive proof $\prover_\sf{Prod} \leftrightarrow \verifier_\sf{Prod}$~\cite{DBLP:conf/africacrypt/Maurer09, DBLP:books/crc/KatzLindell2014}.
\end{itemize}

\subsubsection{Proof of polynomial evaluation} \label{sec:evsc}

As the proposed scheme involves evaluations of publicly known polynomials at private values, we describe in Figure~\ref{fig:evsc} the corresponding protocol where the prover $\prover_\sf{EvSc}$ proves to the verifier $\verifier_\sf{EvSc}$ that $y = F(x)$ for a publicly known polynomial $y$ included in the public parameters as $g^{F(\tau)}$ and private values $x, y \in \F$ committed as $\sf{com}_x$ and $\sf{com}_y$, respectively.

\begin{figure*}
    \centering
    \autofitpcb{
        \prover_\sf{EvSc}\left(y, r_y, x, r_x, F, \sf{com}_y, \sf{com}_x; \sf{pp}\right)  \< \< \verifier_\sf{EvSc}\left(\sf{com}_y, \sf{com}_x, g^{F(\tau)}; \sf{pp}\right)\\[0.1 \baselineskip][\hline]
        \pcln F'(X) \gets \frac{y - F(X)}{(x - X)}, R_F'(X)\sample \F_{\leq K}[X] \< \< \\
        \pcln \sf{com}_{F'}\gets \sf{Commit_{KZG}}\left(F', R_F'; \sf{pp}\right) \< \sendmessageright*[1cm]{\sf{com}_{F'}} \< \sf{com}_{F'}\\
        \label{ln:evsc-u}\pcln z \gets F(u), z'\gets F'(u), r_{z}'\sample \F \< \sendmessageleft*[1cm]{u} \< u\sample \F\\
        \pcln \sf{com}_{z'}\gets g^{z'}h^{r_z'} \< \sendmessageright*[1cm]{\sf{com}_{z'}} \< \sf{com}_{z'} \\
        \pcln \prover_\sf{Prod}\left(y - z, r_y, x-u, r_x, z', r_z', \sf{com}_y\cdot g^{-z}, \sf{com}_x\cdot g^{-u}, \sf{com}_{z'}; \sf{pp}\right) \< \sendmessagerightleft*[1cm]{\pi_\sf{Prod}} \< \verifier_\sf{Prod}\left(\sf{com}_y\cdot g^{-z}, \sf{com}_x\cdot g^{-u}, \sf{com}_{z'}; \sf{pp}\right) \\
        \pcln \pi_\sf{KZG} \gets \prover_\sf{KZG}\left(z, u, F; \sf{pp}\right) \< \sendmessageright*[1cm]{\pi_\sf{KZG}} \< \verifier_\sf{KZG}\left(z, u, \pi_\sf{KZG}, \sf{com}_F; \sf{pp}\right) \\
        \label{ln:evsc-piev_}\pcln \pi_\sf{KZG}' \gets \prover_\sf{KZG}\left(0, u, F' - z', R_F' - r_z'; \sf{pp}\right)\< \sendmessageright*[1cm]{\pi_\sf{KZG}'} \< \verifier_\sf{KZG}\left(0, u, \pi_\sf{KZG}', \sf{com}_{F'}\cdot \sf{com}_{z'}^{-1}; \sf{pp}\right)\\
    }
    \caption{Protocol for proving $y=F(x)$ for public $F \in \F[X]$ and secret $x, y \in \F$.}
    \label{fig:evsc}
\end{figure*}

\begin{Lem} \label{lem:evsc-ks}
    Given $2(K+2)$ accepting transcripts, namely \begin{equation}
        \pi^{(k, i)} = \left(\sf{com}_{F'}, u^{(k)}, \sf{com}_{z'}^{(k)}, \pi_\sf{Prod}^{(k, i)}, \pi_\sf{KZG}^{(k)}, {\pi_\sf{KZG}'}^{(k)}\right)
    \end{equation} for $1 \leq k \leq K+2$ and $i \in \{1, 2\}$, it can be extracted from $x, r_x, y, r_y \in \F$ such that \begin{equation}\sf{com}_x = g^x h^{r_x} \land  \sf{com}_y = g^y h^{r_y} \land y = F(x).\end{equation}
\end{Lem}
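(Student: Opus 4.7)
The plan is to combine the knowledge extractors of the sub-protocols $\Pi_\sf{Prod}$ and $\Pi_\sf{KZG}$ with a Schwartz--Zippel-style polynomial identity argument over the $K+2$ independent evaluation points $u^{(1)}, \ldots, u^{(K+2)}$. The overall strategy is: extract scalar openings of the relevant commitments from the $\Pi_\sf{Prod}$ transcripts, extract the polynomial behind $\sf{com}_{F'}$ from the KZG transcripts, and finally argue that the induced polynomial identity forces $y = F(x)$.

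First, for each $k \in \{1, \ldots, K+2\}$, I would feed the two accepting sub-transcripts $\pi_\sf{Prod}^{(k,1)}, \pi_\sf{Prod}^{(k,2)}$ into the knowledge extractor of $\Pi_\sf{Prod}$. This yields, except with negligible probability, openings $(a^{(k)}, r_a^{(k)})$, $(b^{(k)}, r_b^{(k)})$, $({z'}^{(k)}, r_{z'}^{(k)})$ of the commitments $\sf{com}_y \cdot g^{-z^{(k)}}$, $\sf{com}_x \cdot g^{-u^{(k)}}$, $\sf{com}_{z'}^{(k)}$ respectively, together with the product relation $a^{(k)} \cdot b^{(k)} = {z'}^{(k)}$. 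Setting $y := a^{(k)} + z^{(k)}$, $r_y := r_a^{(k)}$ and $x := b^{(k)} + u^{(k)}$, $r_x := r_b^{(k)}$ gives valid openings of $\sf{com}_y$ and $\sf{com}_x$; the binding of the Pedersen commitments forces these values to coincide across all $k$, yielding a single consistent pair $(x, r_x, y, r_y)$ and the scalar identities
\begin{equation}
y - z^{(k)} = (x - u^{(k)}) \cdot {z'}^{(k)} \qquad \text{for all } k.
\end{equation}

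Next, I would invoke the KZG knowledge extractor on $\sf{com}_{F'}$ to obtain a polynomial $F'^*$ of degree at most the scheme's degree bound $K$. The accepting $\pi_\sf{KZG}^{(k)}$ proofs, which refer to the publicly known commitment $g^{F(\tau)}$, guarantee $z^{(k)} = F(u^{(k)})$ by KZG soundness; the accepting $\pi_\sf{KZG}'^{(k)}$ proofs, read against the homomorphically combined commitment $\sf{com}_{F'}\cdot \sf{com}_{z'}^{(k)\,-1}$, guarantee $F'^*(u^{(k)}) - {z'}^{(k)} = 0$ for every $k$. Substituting into the scalar identity yields
\begin{equation}
y - F(u^{(k)}) = (x - u^{(k)}) \cdot F'^*(u^{(k)}) \qquad \text{for all } k.
\end{equation}
Defining $Q(X) := y - F(X) - (x - X)\cdot F'^*(X)$, we have $\deg Q \leq K+1$, yet $Q$ vanishes at the $K+2$ points $u^{(k)}$, which are pairwise distinct with overwhelming probability since each is drawn uniformly from $\F$. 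Hence $Q \equiv 0$, and evaluating at $X = x$ gives $y = F(x)$, completing the extraction.

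The main obstacle will be the careful accounting of the aggregate soundness error: the extractor invokes the $\Pi_\sf{Prod}$ sub-extractor $K+2$ times, the KZG evaluation-soundness check $2(K+2)$ times, and the KZG knowledge extractor once, on top of the bad event that the $u^{(k)}$ collide. Since $K$ is polynomial in $\lambda$ and each individual error term lies in $\negl[\lambda]$, a union bound preserves negligibility of the total error. The subtlest step is reconciling the $K+2$ independently extracted openings of $\sf{com}_x$ and $\sf{com}_y$ into a single witness; this is where computational binding of the Pedersen commitments does the heavy lifting, and it must be invoked uniformly across all $k$ before the polynomial identity argument becomes meaningful.
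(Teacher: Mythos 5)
Your proposal is correct and follows essentially the same route as the paper's proof: extract the scalar openings from the paired $\Pi_\sf{Prod}$ transcripts and unify $x, r_x, y, r_y$ via binding, extract the degree-$\leq K$ polynomial behind $\sf{com}_{F'}$ from the KZG proofs so that $F'(u^{(k)}) = {z'}^{(k)}$ and $z^{(k)} = F(u^{(k)})$, and conclude from the degree-$(K+1)$ identity vanishing at $K+2$ distinct points that $y - F(X) = (x - X)F'(X)$, hence $y = F(x)$. The only differences are cosmetic: the paper extracts a shifted polynomial $G^{(k)}$ per transcript from $\sf{com}_{F'}\cdot\bigl(\sf{com}_{z'}^{(k)}\bigr)^{-1}$ and identifies them by binding, whereas you extract a single $F'^*$ from $\sf{com}_{F'}$ and handle distinctness of the $u^{(k)}$ explicitly.
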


\begin{proof}[Proof of Lemma~\ref{lem:evsc-ks}]
    For each $1\leq k \leq K+2$, by the knowledge soundness of $\prover_\sf{Prod}\leftrightarrow\verifier_\sf{Prod}$, since $u^{(k)}$ and $z^{(k)} = F\left(u^{(k)}\right)$ are both public, it can be extracted from $\pi^{(k, 1)}$ and $\pi^{(k, 2)}$ the values $x, r_x, y, r_y, {z'}^{(k)}, {r_z'}^{(k)}$ such that \begin{align*}
         & y - F\left(u^{(k)}\right) = \left(x-u^{(k)}\right)z' \land  \\
         & \sf{com}_x = g^xh^{r_x} \land \sf{com}_y = g^{y}h^{r_y} \land \sf{com}_{z'} = g^{z'}h^{r_z'}.
    \end{align*} By the Diffie--Hellman assumption, the extracted $x, r_x, y, r_y$ are the same.

    Furthermore, in Line~\ref{ln:evsc-piev_}, by the knowledge of exponent assumption, it can be extracted $G^{(k)}, R_G^{(k)}\in \F_{\leq K}[X]$ such that \begin{equation}G^{(k)}(u) = 0 \land  \sf{com}_{F'}\cdot \left({\sf{com}_{z'}^{(k)}}\right)^{-1} = g^{G^{(k)}(\tau)}h^{R_G^{(k)}(\tau)}, \end{equation} therefore, for $F'(X) \gets G^{(k)}(X) + z'$, $R_F' \gets R_G^{(k)}(X) + r_z'$, \begin{equation}
        {F'}\left(u^{(k)}\right) = z' \land \sf{com}_{F'} = g^{F'(\tau)}h^{R_F'(\tau)},
    \end{equation} where the uniqueness of $F'$ and $R_F'$ are also by the Diffie--Hellman assumption. 
    
    Therefore, \begin{equation}y - F\left(u^{(k)}\right) = \left(x-u^{(k)}\right) {F'}\left(u^{(k)}\right), \end{equation} for all $K+2$ different $u^{(k)}$s. Since the LHS and RHS are both of degree at most $K + 1$, it must hold that $y - F(X) = (x - X) F'(X)$, such that $y = F(x)$.
\end{proof}

\begin{Lem} \label{lem:evsc-zk}
    For any $\sf{pp}\in \supp{\sf{Setup}\left(1^\kappa\right)}$, $y, r_y, x, r_x \in \F$, $F \in \F_{\leq K}[X]$ and $\sf{com}_x \in \G$ such that \begin{equation}y = F(x) \land \sf{com}_x = g^xh^{r_x} \land \sf{com}_y = g^yh^{r_y},\end{equation} there exists a simulator \procedureblock{$\simulator_\sf{EvSc}\left(\sf{com}_y, \sf{com}_x, F; \sf{pp}\right)$}{
        \sf{com}_z' \sample \G \\
        u\sample \F\\
        \pi_\sf{Prod} \sample \simulator_\sf{Prod}\left(\sf{com}_z, \sf{com}_x\cdot g^{-u}, \sf{com}_{z'}; \sf{pp}\right) \\
        \left(c', \pi_\sf{KZG}'\right) \sample \simulator_\sf{KZG}\left(0, u; \sf{pp}\right) \\
        \pi_\sf{KZG} \gets \prover_\sf{KZG}\left(F(u), u, F; \sf{pp}\right)\\
        \sf{com}_{F'} \sample c'\cdot \sf{com}_{z'} \\
        \pcreturn{\sf{com}_{F'}, u, \sf{com}_{z'}, \pi_\sf{Prod}, \pi_\sf{KZG}, \pi_\sf{KZG}'}
    } that perfectly simulates the view of $\verifier_\sf{EvSc}$, that is \begin{multline}\label{eq:evsc-zk}
        \sf{View}\left[\begin{array}{c}
            \prover_\sf{EvSc}\left(y, r_y, x, r_x, F, \sf{com}_y, \sf{com}_x; \sf{pp}\right) \\\leftrightarrow\verifier_\sf{EvSc}\left(\sf{com}_y, \sf{com}_x, g^{F(\tau)}; \sf{pp}\right) 
        \end{array}\right] \\ = \simulator_\sf{EvSc}\left(\sf{com}_y, \sf{com}_x, F; \sf{pp}\right),
    \end{multline}
\end{Lem}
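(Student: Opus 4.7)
The plan is to establish the equality in Equation~\eqref{eq:evsc-zk} by a hybrid argument that walks the real transcript from bottom to top, at each step replacing a message with its simulated counterpart using either the perfect hiding of the Pedersen/KZG commitment (under uniformly sampled randomness) or the perfect zero-knowledge of the invoked sub-protocols $\prover_\sf{Prod} \leftrightarrow \verifier_\sf{Prod}$ and $\prover_\sf{KZG} \leftrightarrow \verifier_\sf{KZG}$.

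First I would record the marginal distributions in the real execution: since $R_F' \sample \F_{\leq K}[X]$ is uniform, $\sf{com}_{F'} = g^{F'(\tau)}h^{R_F'(\tau)}$ is uniform in the commitment space; since $r_z' \sample \F$ is independently uniform, $\sf{com}_{z'}$ is uniform in $\G$; and the challenge $u$ is uniform in $\F$. Next I would construct hybrids. In $H_1$, replace $\pi_\sf{Prod}$ with the output of $\simulator_\sf{Prod}$ on the public inputs $(\sf{com}_y \cdot g^{-z}, \sf{com}_x \cdot g^{-u}, \sf{com}_{z'})$, which is identically distributed by perfect ZK since the relation $y-z = (x-u)z'$ holds on the honest witness. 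In $H_2$, replace the pair $(\sf{com}_{F'} \cdot \sf{com}_{z'}^{-1}, \pi_\sf{KZG}')$ by $\simulator_\sf{KZG}(0, u; \sf{pp})$, which is identically distributed by perfect ZK of the KZG evaluation proof, using that the commitment to $F'-z'$ employs fresh randomness $R_F'-r_z'$ and that $F'(u)-z' = 0$ in the real execution. The transcript can then be equivalently sampled by first drawing $u$, then $\sf{com}_{z'}$ uniformly, deriving $\sf{com}_{F'} := c' \cdot \sf{com}_{z'}$, and running the honest $\prover_\sf{KZG}(F(u), u, F; \sf{pp})$ since $F$ is public---exactly matching $\simulator_\sf{EvSc}$.

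The hard part will be justifying the reordering of the sampling rigorously: the real protocol sends $\sf{com}_{F'}$ before $u$ is drawn, whereas the simulator draws $u$ first and derives $\sf{com}_{F'}$ from the simulated $c'$ and the uniform $\sf{com}_{z'}$. This reduces to verifying that the joint distribution of $(\sf{com}_{F'}, u, \sf{com}_{z'})$ is the product of the three uniform marginals, which holds because $R_F'$ and $r_z'$ are independently uniform and $u$ is sampled independently; from there, all downstream messages are functions of these three commitments through simulators that are perfect. Everything else is routine bookkeeping---checking that $\pi_\sf{KZG}$ is distributed identically in both executions (computed deterministically from the public $F$, $u$, and $z=F(u)$) and that the bijective homomorphism $c \mapsto c \cdot \sf{com}_{z'}$ on $\G$ preserves distributions, so setting $\sf{com}_{F'} = c' \cdot \sf{com}_{z'}$ in the simulator reproduces the real marginal of $\sf{com}_{F'}$ together with the correct consistency of $\pi_\sf{KZG}'$.
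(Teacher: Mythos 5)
Your proposal is correct and follows essentially the same route as the paper's proof: both arguments rest on the observation that, thanks to the independent uniform randomness $R_F'$, $r_z'$, and $u$, the components $\sf{com}_{z'}$, $u$, and the randomized KZG opening of $F'-z'$ are jointly distributed as in the simulator (the paper phrases this as $(\sf{com}_{z'}, u, \pi_\sf{KZG}')$ being uniform over $\G\times\F\times(\F\times\G)$, while you phrase it as a hybrid step invoking perfect simulation of the pair $(\sf{com}_{F'}\cdot\sf{com}_{z'}^{-1},\pi_\sf{KZG}')$ plus a reordering argument), combined with perfect simulation of $\pi_\sf{Prod}$ by $\simulator_\sf{Prod}$ and the fact that $\pi_\sf{KZG}$ (and, in the paper's phrasing, $\sf{com}_{F'}$) is deterministically recoverable from the public data and the remaining transcript. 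The hybrid presentation is only a repackaging of the paper's direct computation of the joint transcript distribution, so no substantive difference or gap.
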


\begin{proof}[Proof of Lemma~\ref{lem:evsc-zk}]
    We first consider the joint distribution of transcript $\pi$ in $\prover_\sf{EvSc}\leftrightarrow\verifier_\sf{EvSc}$ described in Figure~\ref{fig:evsc}. Clearly, the marginal distribution of $\left(u, \sf{com}_{z'}\right)$ is the uniform distribution in $\F \times \G$. Moreover, due to the uniform randomness of $R_F'$, the conditional distribution of $\pi_\sf{KZG}'$ on $\left(u, \sf{com}_{z'}\right)$, where \begin{align}
        \pi_\sf{KZG}' &= \left(\rho':= R_F'(u) - r_z', \gamma':= g^{\frac{F'(\tau) - F'(u)}{\tau - u}} h^{\frac{R_F'(u) - R_F'(\tau)}{\tau - u}}\right),
        \end{align} is the uniform distribution over $\F \times \G$. Therefore, the aforementioned joint distribution is the uniform distribution over \begin{equation}\G \times \F \times (\F \times \G),\end{equation} which is perfectly simulated by $\simulator_\sf{EvSc}$.
        
    Moreover, $\sf{Sim}_\sf{EvSc}$ correctly captures that \begin{itemize}[leftmargin=*]
        \item $\pi_\sf{Prod}$ can be perfectly simulated using \begin{equation}\simulator_\sf{Prod}\left(\sf{com}_z, \sf{com}_x\cdot g^{-u}, \sf{com}_{z'}; \sf{pp}\right),\end{equation}
        \item $\sf{com}_{F'}$ and $\pi_\sf{KZG}$ can be deterministically computed from \begin{equation}\left(u, \sf{com}_z', \pi_\sf{KZG}'\right)\end{equation} and the only inputs to $\simulator_\sf{EvSc}$, i.e., $\sf{com}_y, \sf{com}_x, F, \sf{pp}$. 
    \end{itemize} 

    This concludes the proof of the equality of distributions in Equation~\eqref{eq:evsc-zk}.
\end{proof}

\subsection{Analysis of VRR} \label{sec:vrr-analysis}

\begin{table}[!t]
    \centering
        \caption{Overhead comparison (each client's total running time, total communication, and verifier's total running time) between KCY21~\cite{KCY21} and our solution.}
    \label{tab:vrr-overhead}
    \begin{tabular}{lcc}
    \toprule
          & Client                     & Comm.\& Verifier                         \\ \hline
    KCY21 & $\bigTheta{\abs{\Omega}K}$ & $\bigTheta{n_\cli\abs{\Omega}K}$ \\
    \textbf{Ours}  & $\bigTheta{\abs{\Omega}}$  & $\bigTheta{n_\cli}$ \\ \bottomrule
    \end{tabular}
\end{table}

\subsubsection{Security and Privacy} The security and privacy guarantees of VRR are formalized in Theorem~\ref{thm:vrr}. The completeness and knowledge soundness are inherited from the subroutines of the protocol. Meanwhile, without the verifications, the only information that a prover obtains from a client (and its own server) is the output $y$, which is $\left(\epsilon, 0\right)$-DP and therefore $\left(0, \epsilon, 0\right)$-DDP. Hence, the ZK subroutines make VRR $\left(0, \epsilon\right)$-VDDP.

\begin{Thm}\label{thm:vrr} The VRR protocol is $0$-complete, knowledge-sound, and $\left(0, \epsilon\right)$-VDDP.\end{Thm}

By Theorem~\ref{thm:zk-dp}, to prove Theorem~\ref{thm:vrr}, it suffices to prove the completeness, soundness, and zero-knowledge of the protocol described in Figure~\ref{fig:vrr}.

\begin{proof}[Proof of Theorem~\ref{thm:vrr} (completeness)]
    The completeness of Figure~\ref{fig:vrr} directly follows from that of the subroutines involved.
\end{proof}

\begin{proof}[Proof of Theorem~\ref{thm:vrr} (knowledge soundness)]
    We prove that Figure~\ref{fig:vrr} satisfies the stronger notion of special soundness~\cite{DBLP:books/crc/KatzLindell2014, DBLP:conf/sp/WahbyTSTW18}. Consider $4\left(\abs{\Omega} + 2\right)$ accepting transcripts \begin{equation}\left(\sf{com}_z, \alpha^{(j)}, \pi_\sf{EvSc}^{(j, k)}, \pi_\sf{Prod}^{(j)}\right) \end{equation} where $j \in \{1, 2\}$ and $1\leq k \leq 2\left(\abs{\Omega} + 2\right)$. By Lemma~\ref{lem:evsc-ks}, for each $j \in \{1, 2\}$, it can extract $z, r_z, t, r_t$ from $\pi_\sf{Prod}^{(j, k)}$s such that \begin{equation}z = F(t) + \alpha^{(j)} F_\Omega(t) \land g^zh^{r_z} = \sf{com}_z \land g^th^{r_t} = \sf{com}_t. \end{equation} With $a^{(1)} \neq a^{(2)}$, it must hold that \begin{equation}F(t) = z \land F_\Omega(t) = 0.\end{equation} 
    
    Therefore, for $i_t \gets \log_\omega t$, and $i_\sigma \gets i_t - i_\phi$, we have $z = F(\omega^{i_\sigma + i_\phi})$. Moreover, by \begin{equation}g^{\omega^{i_\sigma + i_\phi}}h^{r_t} = \sf{com}_t = \psi^{\omega^{i_\phi}}, \end{equation} it holds that for $r_\sigma \gets r_t \omega^{-i_\phi}$, $g^{\omega^{i_\sigma}}h^{r_\sigma} = \psi$. Finally, from $\pi_\sf{Prod}^{(1)}$ and $\pi_\sf{Prod}^{(2)}$, it can extract $x, r_x$ such that $\sf{com} = g^xh^{r_x}$ and $y = xz = xF(\omega^{i_\sigma + i_r})$. Moreover, since $y \in \mathcal{X}$ and $F(\omega^{i_\sigma + i_r})$, $x = y {F(\omega^{i_\sigma + i_r})}^{-1} \in \mathcal{X}$.
\end{proof}

\begin{proof}[Proof of Theorem~\ref{thm:vrr} (zero-knowledge)] Given output $y$, hiding commitments $\sf{com}, \psi$, and the public coin $i_\phi$, the view of the verifier in Figure~\ref{fig:vrr} can be simulated by \procedureblock{$\simulator_\sf{RR}\left(\sf{com}, \psi, y, i_\phi; \sf{pp}\right)$}{
        \sf{com}_z \sample \G \\
        \alpha \sample \F \\ 
        \pi_\sf{EvSc} \gets \simulator_\sf{EvSc}\left(\sf{com}_z, \psi^{\omega^{i_\phi}}, F + \alpha F_\Omega; \sf{pp}\right) \\
        \pi_\sf{Prod} \gets \simulator_\sf{Prod}\left(g^y, \sf{com}_z, \sf{com}; \sf{pp}\right) \\
        \pcreturn{\sf{com}_z, \alpha, \pi_\sf{EvSc}, \pi_\sf{Prod}}
    }
\end{proof}

\subsubsection{Utility and Overhead} As explained in Section~\ref{sec:vrr-rd}, the output distribution of the server function of VRR is exactly the same as the original version of RR without the additional verifications, given the same set of parameters $A_k$s and therefore $p_k$s. Therefore, VRR can achieve the same utility. Moreover, due to the crafted quantization of the probability space of RR and the compatible protocol design using highly efficient cryptographic primitives, our solution to VRR achieves a significant reduction in overhead compared with the previous solution (KCY21~\cite{KCY21}), as shown in Table~\ref{tab:vrr-overhead}. In particular, the total running time of each client is linear in $\abs{\Omega}$ and does not depend on the number of classes $K$ or the privacy parameter $\epsilon$. Furthermore, the communication and verifier's overhead are only constant per client, significantly improving scalability.

\section{Experiments}\label{sec:experiments}

We implement \VersionText{VDDLM and VDDGM}{VDDLM, VDDGM, and VRR} in C++ using the MCL cryptography library~\cite{mcl}. We use the BLS12-381 curve, one of the most prevalent elliptic curves in modern ZKP systems, which provides 128-bit security~\cite{bls}. We compare the overheads (clients', servers', and verifier's running times, and communication sizes) and the utilities (L1 error) with the baselines (VDBM~\cite{BC23}, KCY21~\cite{KCY21}) and the original non-verifiable versions. The purpose of these comparisons is to demonstrate the significant reduction in overheads and the limited impact of numerical issues caused by adaptations to the cryptographic tools. The experiments are run with 16 cores allocated from an Intel Xeon Platinum 8358 CPU @ 2.60GHz and 64 GB of memory. The open-source implementations are available at \url{https://github.com/jvhs0706/VDDP}, where the raw experimental logs and scripts for reproduction are included. Given that the behaviour of semi-honest and malicious parties, and their collusion, is arbitrary and cannot be systematically monitored, we follow the standards of previous studies and focus on simulating the overhead of the honest executions of the protocol~\cite{BC23, DBLP:conf/eurosys/NarayanFPH15, dprio, KCY21, DBLP:conf/iclr/ShamsabadiTCBHP24}.

\subsection{Experiments on VDDLM}

\begin{figure*}[!t]
    \centering
    \includegraphics[width=\linewidth]{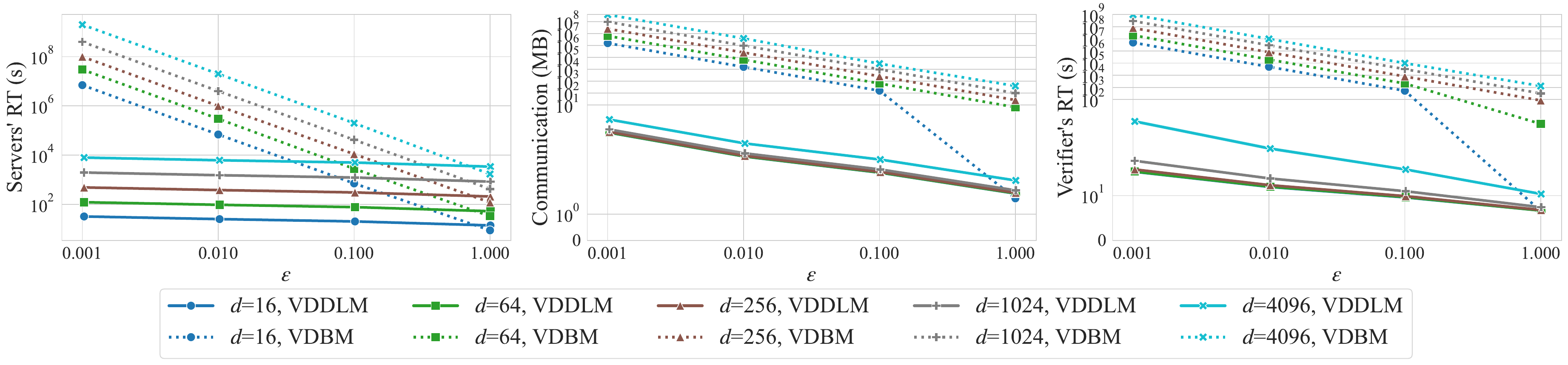}
    \caption{Comparison between VDDLM and VDBM. The hyperparameters (e.g., $n_b$) of VDBM are computed from the desired $\left(\epsilon^*, \delta^* = 10^{-10}\right)$, and all configurations of VDDLM result in $\delta < 10^{-10}$. The exact $(\epsilon, \delta)$ values are computed directly from the configuration of each experiment using Theorem~\ref{thm:vddlm-dp}. All values larger than $10^4$ are estimated. We fix $n_\ser = 2$ following VDBM~\cite{BC23}.}
    \label{fig:vddlm-exp}
\end{figure*}

\begin{figure}[!t]
    \centering
    \includegraphics[width=\linewidth]{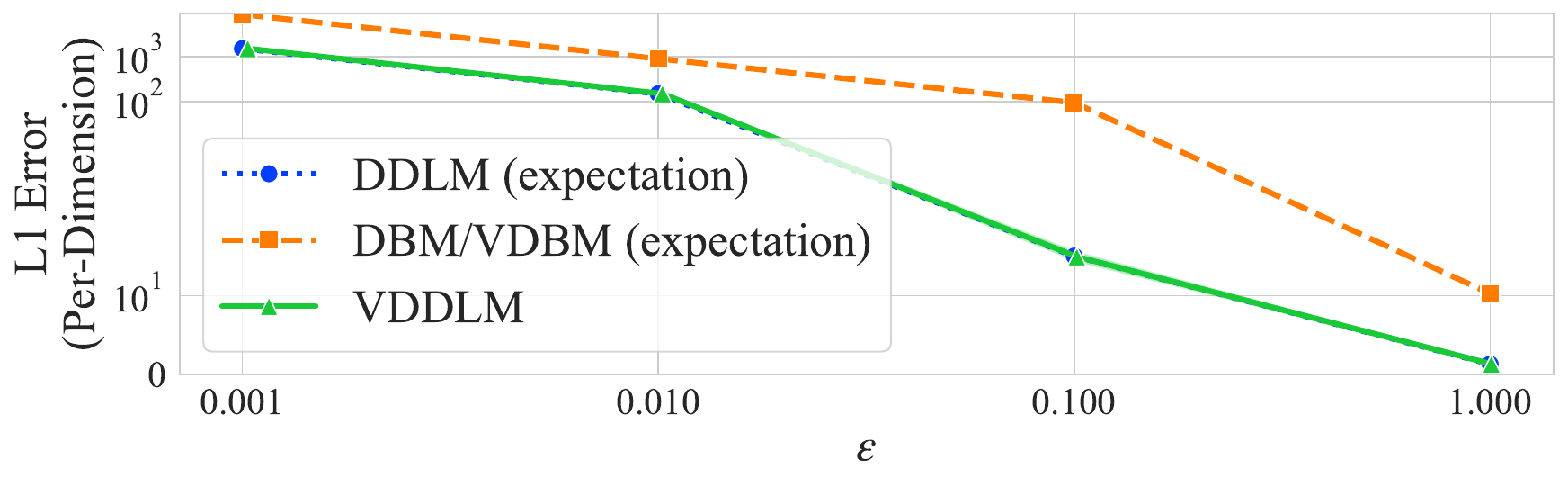}
    \caption{The L1 error of VDDLM, compared with the expected error of the original non-verifiable version (DDLM), which utilizes unmodified discrete Laplace noise, and that of VDBM. Note that VDBM has the same output distribution and utility as its original non-verifiable version (DBM), while VDDLM and DDLM differ slightly due to numerical errors caused by the adaptations in Algorithm~\ref{alg:dlap-circuit}.}
    \label{fig:vddlm-l1}
\end{figure}

In addition to the advantages analyzed from theoretical and numerical perspectives in Section~\ref{sec:vddlm-analysis}, we present an experimental comparison between VDDLM and VDBM. The comparisons focus on $\Pi_\ser$, the major difference between VDDLM and VDBM, as the instantiation of $\Pi_\cli$ in VDDLM (Line~\ref{pcln:vddlm-data-fuse-coin} of Figure~\ref{fig:vddlm-exp}) is specific to the exact scenario, and a valid $\Pi_\cli$ for VDDLM is also valid for VDBM under the same setting. As the original implementation of VDBM focuses on a one-dimensional setting, we execute it element-wise to produce the experimental figures in the multi-dimensional setting.

As shown in Figure~\ref{fig:vddlm-exp}, VDDLM achieves a remarkable reduction in overhead from almost all perspectives. The improvement is particularly significant when the problem dimension is large (e.g., 1024) and only a small privacy budget (e.g., $\epsilon = 10^{-3}$) is available: the server's and verifier's running times improve by $4\times 10^5$ and $3\times 10^7$ times, respectively, and the communication cost is reduced by a factor of $4\times 10^7$. Moreover, as shown in Figure~\ref{fig:vddlm-l1}, VDDLM incurs only 0.1--0.2x the numerical error compared with VDBM, which aligns with the theoretical analysis in Table~\ref{tab:vddlm-overhead} and Figure~\ref{fig:vddlm-anal}. The error also matches that of the unmodified distributed Laplace mechanism, indicating that the numerical errors do not cause a significant decrease in utility.

\subsection{Experiments on VDDGM}

\begin{figure}[!htbp]
    \centering
    \includegraphics[width=\linewidth]{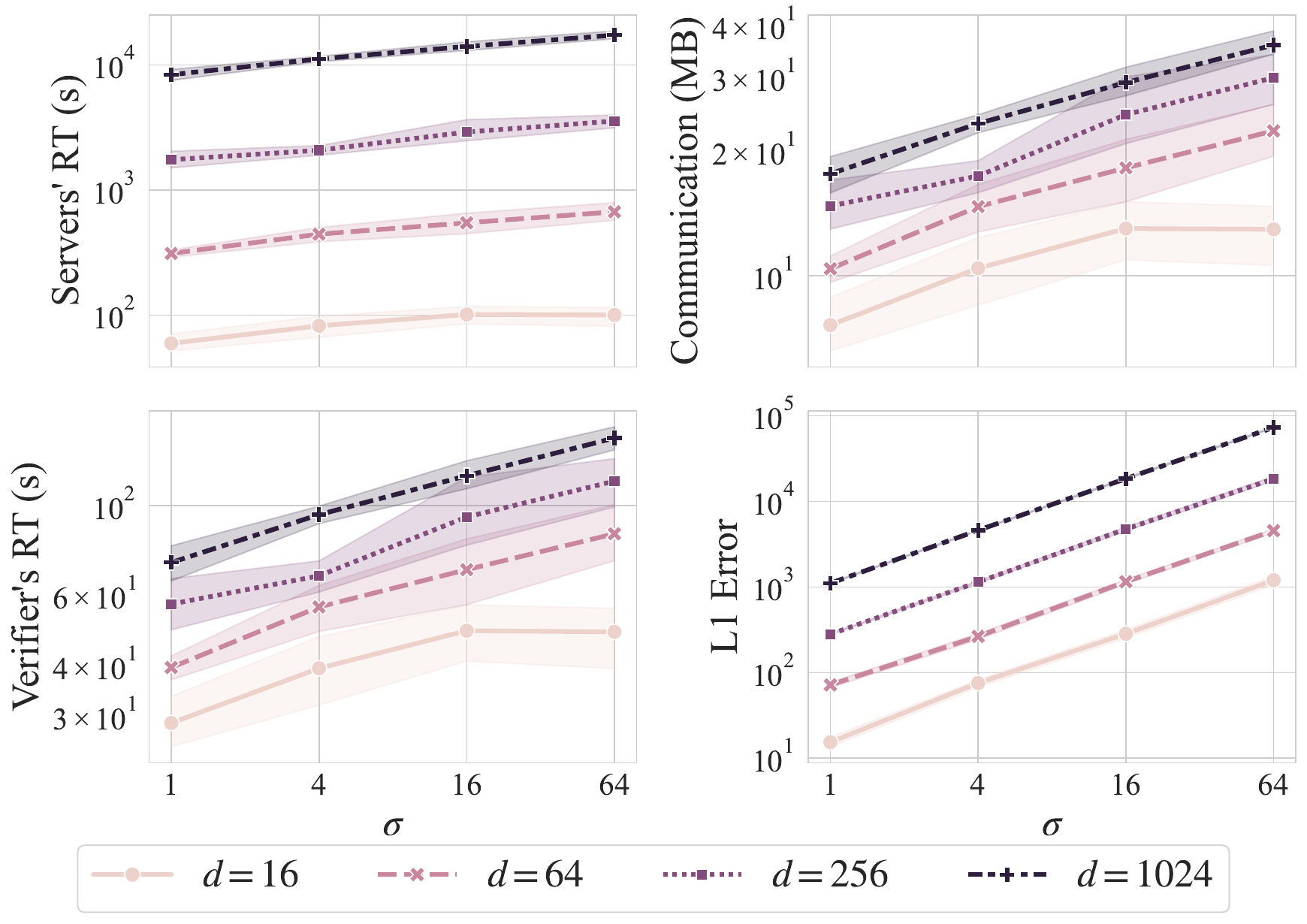}
    \caption{The overheads and utilities of VDDGM.}
    \label{fig:vddgm-exp}
\end{figure}

We further implement VDDGM based on VDDLM and the rejection sampling methods, and present the results in Figure~\ref{fig:vddgm-exp}. We observe that the overheads generally increase with respect to the scale parameter of the noise $\sigma$ and the output dimension $d$. Due to the iterative rejection sampling process, the overhead is approximately 10x larger than that of VDDLM and exhibits greater variance.

\IfFull{\subsection{Experiments on VRR}

\begin{figure}[!htbp]
    \centering
    \includegraphics[width=\linewidth]{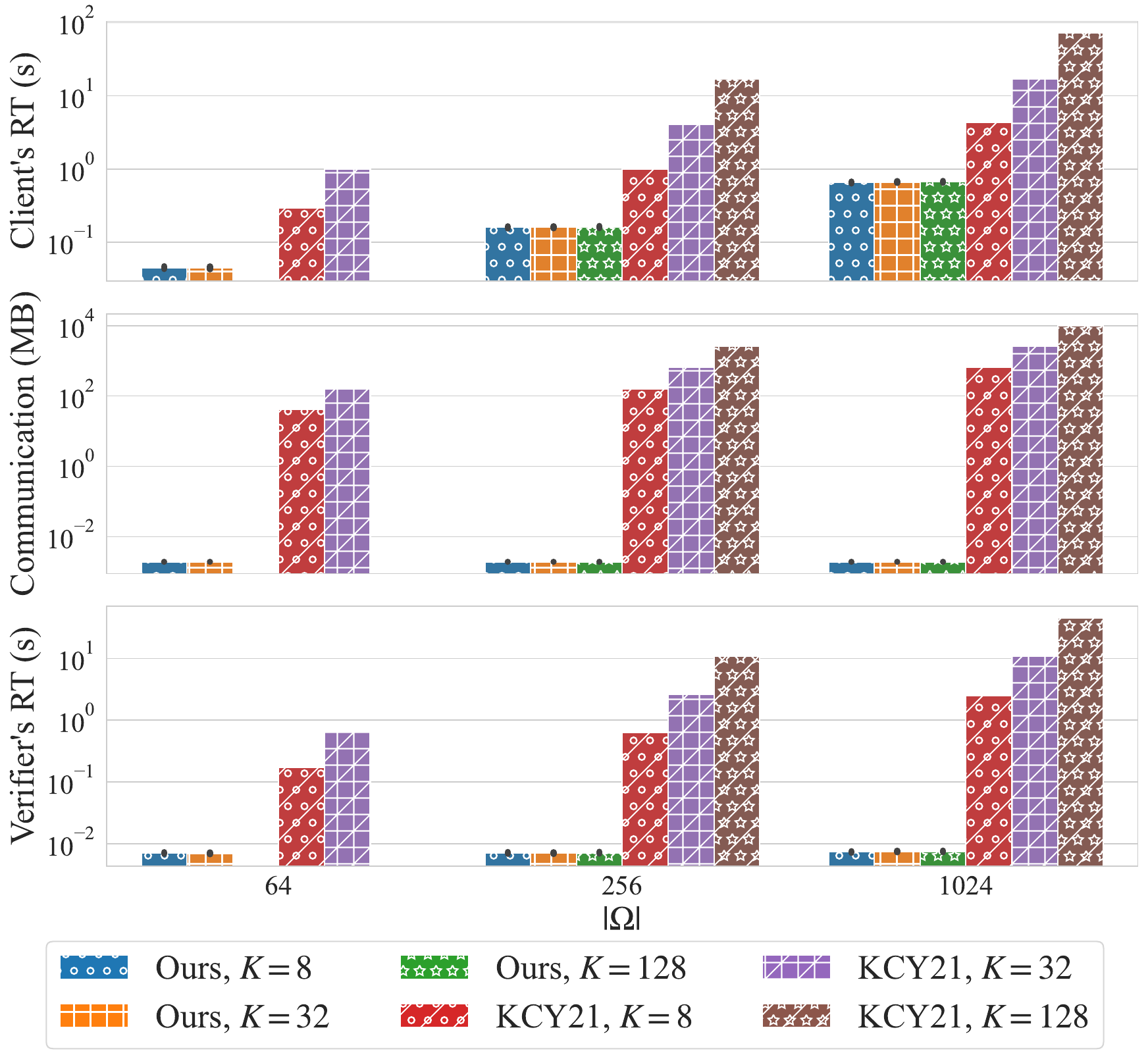}
    \caption{Per-client overhead of VRR.}
    \label{fig:vrr-exp}
\end{figure}

\begin{figure}[!htbp]
    \centering
    \includegraphics[width=\linewidth]{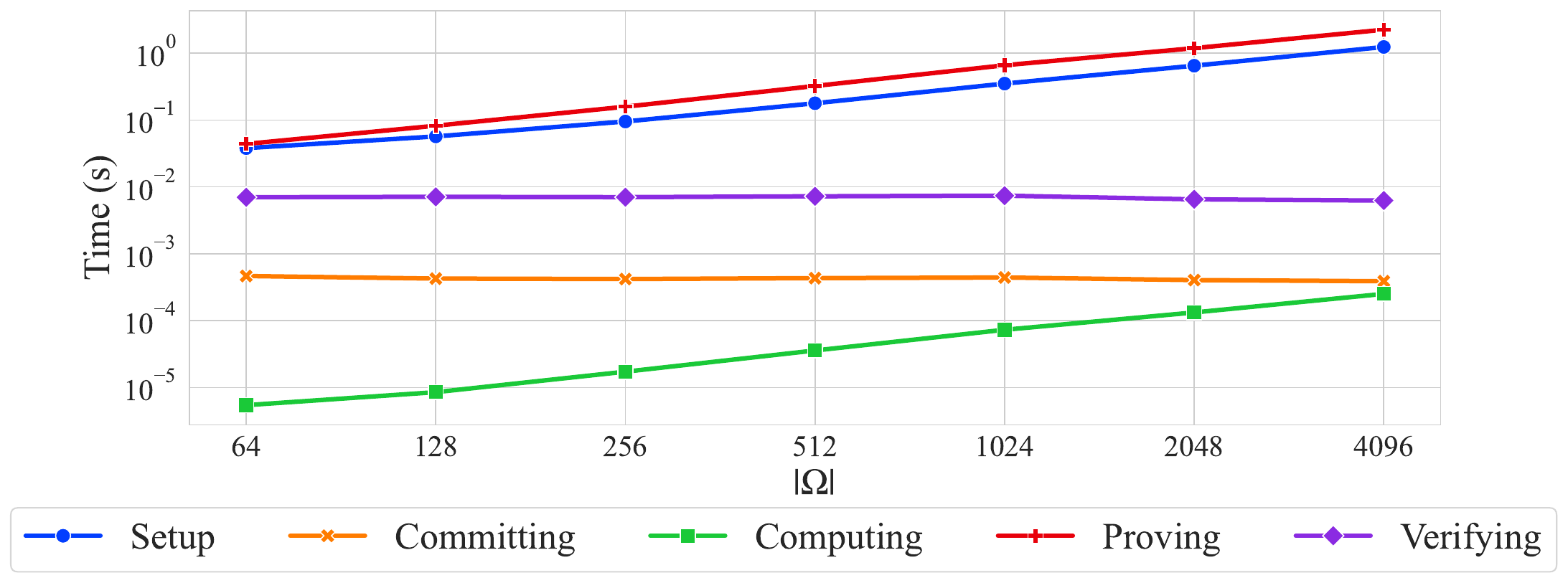}
    \caption{Running time of each component of VRR.}
    \label{fig:vrr-component-rt}
\end{figure}

We compare our new VRR mechanism in Section~\ref{sec:vrr} with the previous solution~\cite{KCY21}. The decisive factor determining the overhead is the quantization precision of the probability space, i.e., $\abs{\Omega}$. When using the same quantization of the probability space, the privacy cost and utility remain the same. However, as introduced in Section~\ref{sec:vrr-analysis}, our solution demonstrates an asymptotic improvement in both the client and verifier overheads, as well as in communication cost.

Figure~\ref{fig:vrr-exp} demonstrates the improvement achieved by our novel VRR solution under different configurations of quantizing the probability space. We observe that, using our solution, the communication and running times are constant and fixed at approximately 2.0~kB and 8~ms per client, which significantly reduces the verifier's overhead, enabling it to organize data collection from a much larger population of clients. Furthermore, due to the improvement in each client's overhead from $\bigO{\abs{\Omega}K}$ to $\bigO{\abs{\Omega}}$, the clients also benefit from a 5--100x speedup under different configurations. 

We conduct further experiments to measure the running times of the different components in VRR and record the results in Figure~\ref{fig:vrr-component-rt}. It can be observed that the proving time is the most time-consuming component and increases linearly with respect to $\abs{\Omega}$ (the quantization accuracy of the probability space), as projected in Section~\ref{sec:vrr-analysis}. Meanwhile, the setup of the public parameters takes a similar amount of time but is executed only once across the entire system. The computing times exhibit a similar increase as the proving time but remain negligible compared with the time required for proofs by the same clients. Since only a constant number of values need to be committed, each client's committing time is mostly constant. Additionally, the verifier requires only constant time to authenticate each client's output, as described in Section~\ref{sec:vrr-analysis}, thereby significantly increasing the scalability of VRR.}
\section{Conclusions and Limitations}

In this study, we have rigorously defined verifiable distributed differential privacy and systematically explored its relationship with zero-knowledge. We further proved the feasibility of VDDP by providing \VersionText{two concrete instantiations, VDDLM and VDDGM}{three concrete instantiations, VDDLM, VDDGM, and VRR}, which significantly outperform previous state-of-the-art solutions in both utility and efficiency. Meanwhile, in future studies, VDDP could be extended to more general and complex settings, e.g., where extensive communication among servers is involved and needs to be audited by the verifier, and where, under the relaxed assumption that a certain number of servers are honest, the magnitude of noise may be reduced to achieve better utility.

Based on this study, we raise two open questions: \textbf{1)}~instantiations of VDDP with non-zero-knowledge proofs, such that the overhead can be reduced asymptotically in either the client, server, or verifier's running time, or the communication cost; and \textbf{2)}~alternative sampling methods for the discrete Gaussian mechanism that are compatible with existing or novel proof protocols, enabling efficient constructions of VDDGM.


\bibliographystyle{IEEEtran}
\bibliography{reference}

\begin{thebibliography}{10}
\providecommand{\url}[1]{#1}
\csname url@samestyle\endcsname
\providecommand{\newblock}{\relax}
\providecommand{\bibinfo}[2]{#2}
\providecommand{\BIBentrySTDinterwordspacing}{\spaceskip=0pt\relax}
\providecommand{\BIBentryALTinterwordstretchfactor}{4}
\providecommand{\BIBentryALTinterwordspacing}{\spaceskip=\fontdimen2\font plus
\BIBentryALTinterwordstretchfactor\fontdimen3\font minus \fontdimen4\font\relax}
\providecommand{\BIBforeignlanguage}[2]{{%
\expandafter\ifx\csname l@#1\endcsname\relax
\typeout{** WARNING: IEEEtran.bst: No hyphenation pattern has been}%
\typeout{** loaded for the language `#1'. Using the pattern for}%
\typeout{** the default language instead.}%
\else
\language=\csname l@#1\endcsname
\fi
#2}}
\providecommand{\BIBdecl}{\relax}
\BIBdecl

\bibitem{privex}
\BIBentryALTinterwordspacing
T.~Elahi, G.~Danezis, and I.~Goldberg, ``Privex: Private collection of traffic statistics for anonymous communication networks,'' in \emph{Proceedings of the 2014 {ACM} {SIGSAC} Conference on Computer and Communications Security, Scottsdale, AZ, USA, November 3-7, 2014}, G.~Ahn, M.~Yung, and N.~Li, Eds.\hskip 1em plus 0.5em minus 0.4em\relax {ACM}, 2014, pp. 1068--1079. [Online]. Available: \url{https://doi.org/10.1145/2660267.2660280}
\BIBentrySTDinterwordspacing

\bibitem{prio}
\BIBentryALTinterwordspacing
H.~Corrigan{-}Gibbs and D.~Boneh, ``Prio: Private, robust, and scalable computation of aggregate statistics,'' in \emph{14th {USENIX} Symposium on Networked Systems Design and Implementation, {NSDI} 2017, Boston, MA, USA, March 27-29, 2017}, A.~Akella and J.~Howell, Eds.\hskip 1em plus 0.5em minus 0.4em\relax {USENIX} Association, 2017, pp. 259--282. [Online]. Available: \url{https://www.usenix.org/conference/nsdi17/technical-sessions/presentation/corrigan-gibbs}
\BIBentrySTDinterwordspacing

\bibitem{secagg}
\BIBentryALTinterwordspacing
K.~A. Bonawitz, V.~Ivanov, B.~Kreuter, A.~Marcedone, H.~B. McMahan, S.~Patel, D.~Ramage, A.~Segal, and K.~Seth, ``Practical secure aggregation for federated learning on user-held data,'' \emph{CoRR}, vol. abs/1611.04482, 2016. [Online]. Available: \url{http://arxiv.org/abs/1611.04482}
\BIBentrySTDinterwordspacing

\bibitem{DP}
\BIBentryALTinterwordspacing
C.~Dwork, F.~McSherry, K.~Nissim, and A.~D. Smith, ``Calibrating noise to sensitivity in private data analysis,'' in \emph{Theory of Cryptography, Third Theory of Cryptography Conference, {TCC} 2006, New York, NY, USA, March 4-7, 2006, Proceedings}, ser. Lecture Notes in Computer Science, S.~Halevi and T.~Rabin, Eds., vol. 3876.\hskip 1em plus 0.5em minus 0.4em\relax Springer, 2006, pp. 265--284. [Online]. Available: \url{https://doi.org/10.1007/11681878\_14}
\BIBentrySTDinterwordspacing

\bibitem{DR14}
\BIBentryALTinterwordspacing
C.~Dwork and A.~Roth, ``The algorithmic foundations of differential privacy,'' \emph{Found. Trends Theor. Comput. Sci.}, vol.~9, no. 3-4, pp. 211--407, 2014. [Online]. Available: \url{https://doi.org/10.1561/0400000042}
\BIBentrySTDinterwordspacing

\bibitem{DBLP:conf/eurocrypt/CheuSUZZ19}
\BIBentryALTinterwordspacing
A.~Cheu, A.~D. Smith, J.~R. Ullman, D.~Zeber, and M.~Zhilyaev, ``Distributed differential privacy via shuffling,'' in \emph{Advances in Cryptology - {EUROCRYPT} 2019 - 38th Annual International Conference on the Theory and Applications of Cryptographic Techniques, Darmstadt, Germany, May 19-23, 2019, Proceedings, Part {I}}, ser. Lecture Notes in Computer Science, Y.~Ishai and V.~Rijmen, Eds., vol. 11476.\hskip 1em plus 0.5em minus 0.4em\relax Springer, 2019, pp. 375--403. [Online]. Available: \url{https://doi.org/10.1007/978-3-030-17653-2\_13}
\BIBentrySTDinterwordspacing

\bibitem{DBLP:journals/tifs/WeiJWHDLCPW24}
\BIBentryALTinterwordspacing
Y.~Wei, J.~Jia, Y.~Wu, C.~Hu, C.~Dong, Z.~Liu, X.~Chen, Y.~Peng, and S.~Wang, ``Distributed differential privacy via shuffling versus aggregation: {A} curious study,'' \emph{{IEEE} Trans. Inf. Forensics Secur.}, vol.~19, pp. 2501--2516, 2024. [Online]. Available: \url{https://doi.org/10.1109/TIFS.2024.3351474}
\BIBentrySTDinterwordspacing

\bibitem{DBLP:conf/podc/Dwork21}
\BIBentryALTinterwordspacing
C.~Dwork, ``Differential privacy in distributed environments: An overview and open questions,'' in \emph{{PODC} '21: {ACM} Symposium on Principles of Distributed Computing, Virtual Event, Italy, July 26-30, 2021}, A.~Miller, K.~Censor{-}Hillel, and J.~H. Korhonen, Eds.\hskip 1em plus 0.5em minus 0.4em\relax {ACM}, 2021, p.~5. [Online]. Available: \url{https://doi.org/10.1145/3465084.3467482}
\BIBentrySTDinterwordspacing

\bibitem{DBLP:journals/popets/MeisingsethR25}
\BIBentryALTinterwordspacing
F.~Meisingseth and C.~Rechberger, ``Sok: Computational and distributed differential privacy for {MPC},'' \emph{Proc. Priv. Enhancing Technol.}, vol. 2025, no.~1, pp. 420--439, 2025. [Online]. Available: \url{https://doi.org/10.56553/popets-2025-0023}
\BIBentrySTDinterwordspacing

\bibitem{DBLP:conf/uss/BohlerK20}
\BIBentryALTinterwordspacing
J.~B{\"{o}}hler and F.~Kerschbaum, ``Secure multi-party computation of differentially private median,'' in \emph{29th {USENIX} Security Symposium, {USENIX} Security 2020, August 12-14, 2020}, S.~Capkun and F.~Roesner, Eds.\hskip 1em plus 0.5em minus 0.4em\relax {USENIX} Association, 2020, pp. 2147--2164. [Online]. Available: \url{https://www.usenix.org/conference/usenixsecurity20/presentation/boehler}
\BIBentrySTDinterwordspacing

\bibitem{DBLP:conf/ccs/BohlerK21}
\BIBentryALTinterwordspacing
------, ``Secure multi-party computation of differentially private heavy hitters,'' in \emph{{CCS} '21: 2021 {ACM} {SIGSAC} Conference on Computer and Communications Security, Virtual Event, Republic of Korea, November 15 - 19, 2021}, Y.~Kim, J.~Kim, G.~Vigna, and E.~Shi, Eds.\hskip 1em plus 0.5em minus 0.4em\relax {ACM}, 2021, pp. 2361--2377. [Online]. Available: \url{https://doi.org/10.1145/3460120.3484557}
\BIBentrySTDinterwordspacing

\bibitem{DBLP:journals/corr/abs-2109-10074}
\BIBentryALTinterwordspacing
A.~Davidson, P.~Snyder, E.~B. Quirk, J.~Genereux, and B.~Livshits, ``{STAR:} distributed secret sharing for private threshold aggregation reporting,'' \emph{CoRR}, vol. abs/2109.10074, 2021. [Online]. Available: \url{https://arxiv.org/abs/2109.10074}
\BIBentrySTDinterwordspacing

\bibitem{DBLP:conf/ccs/FuW24}
\BIBentryALTinterwordspacing
Y.~Fu and T.~Wang, ``Benchmarking secure sampling protocols for differential privacy,'' in \emph{Proceedings of the 2024 on {ACM} {SIGSAC} Conference on Computer and Communications Security, {CCS} 2024, Salt Lake City, UT, USA, October 14-18, 2024}, B.~Luo, X.~Liao, J.~Xu, E.~Kirda, and D.~Lie, Eds.\hskip 1em plus 0.5em minus 0.4em\relax {ACM}, 2024, pp. 318--332. [Online]. Available: \url{https://doi.org/10.1145/3658644.3690257}
\BIBentrySTDinterwordspacing

\bibitem{DBLP:journals/tdsc/GoryczkaX17}
\BIBentryALTinterwordspacing
S.~Goryczka and L.~Xiong, ``A comprehensive comparison of multiparty secure additions with differential privacy,'' \emph{{IEEE} Trans. Dependable Secur. Comput.}, vol.~14, no.~5, pp. 463--477, 2017. [Online]. Available: \url{https://doi.org/10.1109/TDSC.2015.2484326}
\BIBentrySTDinterwordspacing

\bibitem{DBLP:conf/nips/HeikkilaLKSTH17}
\BIBentryALTinterwordspacing
M.~A. Heikkil{\"{a}}, E.~Lagerspetz, S.~Kaski, K.~Shimizu, S.~Tarkoma, and A.~Honkela, ``Differentially private bayesian learning on distributed data,'' in \emph{Advances in Neural Information Processing Systems 30: Annual Conference on Neural Information Processing Systems 2017, December 4-9, 2017, Long Beach, CA, {USA}}, I.~Guyon, U.~von Luxburg, S.~Bengio, H.~M. Wallach, R.~Fergus, S.~V.~N. Vishwanathan, and R.~Garnett, Eds., 2017, pp. 3226--3235. [Online]. Available: \url{https://proceedings.neurips.cc/paper/2017/hash/dfce06801e1a85d6d06f1fdd4475dacd-Abstract.html}
\BIBentrySTDinterwordspacing

\bibitem{DBLP:journals/corr/abs-2405-08299}
\BIBentryALTinterwordspacing
J.~Fu, Y.~Hong, X.~Ling, L.~Wang, X.~Ran, Z.~Sun, W.~H. Wang, Z.~Chen, and Y.~Cao, ``Differentially private federated learning: {A} systematic review,'' \emph{CoRR}, vol. abs/2405.08299, 2024. [Online]. Available: \url{https://doi.org/10.48550/arXiv.2405.08299}
\BIBentrySTDinterwordspacing

\bibitem{DBLP:conf/ccs/Liu00L024}
\BIBentryALTinterwordspacing
J.~Liu, J.~Lou, L.~Xiong, J.~Liu, and X.~Meng, ``Cross-silo federated learning with record-level personalized differential privacy,'' in \emph{Proceedings of the 2024 on {ACM} {SIGSAC} Conference on Computer and Communications Security, {CCS} 2024, Salt Lake City, UT, USA, October 14-18, 2024}, B.~Luo, X.~Liao, J.~Xu, E.~Kirda, and D.~Lie, Eds.\hskip 1em plus 0.5em minus 0.4em\relax {ACM}, 2024, pp. 303--317. [Online]. Available: \url{https://doi.org/10.1145/3658644.3670351}
\BIBentrySTDinterwordspacing

\bibitem{DBLP:conf/ccs/XuZH24}
\BIBentryALTinterwordspacing
S.~Xu, Y.~Zheng, and Z.~Hua, ``Camel: Communication-efficient and maliciously secure federated learning in the shuffle model of differential privacy,'' in \emph{Proceedings of the 2024 on {ACM} {SIGSAC} Conference on Computer and Communications Security, {CCS} 2024, Salt Lake City, UT, USA, October 14-18, 2024}, B.~Luo, X.~Liao, J.~Xu, E.~Kirda, and D.~Lie, Eds.\hskip 1em plus 0.5em minus 0.4em\relax {ACM}, 2024, pp. 243--257. [Online]. Available: \url{https://doi.org/10.1145/3658644.3690200}
\BIBentrySTDinterwordspacing

\bibitem{DBLP:conf/ccs/PeinemannKSCM24}
\BIBentryALTinterwordspacing
T.~Peinemann, M.~Kirschte, J.~Stock, C.~Cotrini, and E.~Mohammadi, ``{S-BDT:} distributed differentially private boosted decision trees,'' in \emph{Proceedings of the 2024 on {ACM} {SIGSAC} Conference on Computer and Communications Security, {CCS} 2024, Salt Lake City, UT, USA, October 14-18, 2024}, B.~Luo, X.~Liao, J.~Xu, E.~Kirda, and D.~Lie, Eds.\hskip 1em plus 0.5em minus 0.4em\relax {ACM}, 2024, pp. 288--302. [Online]. Available: \url{https://doi.org/10.1145/3658644.3690301}
\BIBentrySTDinterwordspacing

\bibitem{DBLP:conf/pst/AnandanC15}
\BIBentryALTinterwordspacing
B.~Anandan and C.~Clifton, ``Laplace noise generation for two-party computational differential privacy,'' in \emph{13th Annual Conference on Privacy, Security and Trust, {PST} 2015, Izmir, Turkey, July 21-23, 2015}, A.~A. Ghorbani, V.~Torra, H.~Hisil, A.~Miri, A.~Koltuksuz, J.~Zhang, M.~Sensoy, J.~Garc{\'{\i}}a{-}Alfaro, and I.~Zincir, Eds.\hskip 1em plus 0.5em minus 0.4em\relax {IEEE} Computer Society, 2015, pp. 54--61. [Online]. Available: \url{https://doi.org/10.1109/PST.2015.7232954}
\BIBentrySTDinterwordspacing

\bibitem{DBLP:conf/eurocrypt/DworkKMMN06}
\BIBentryALTinterwordspacing
C.~Dwork, K.~Kenthapadi, F.~McSherry, I.~Mironov, and M.~Naor, ``Our data, ourselves: Privacy via distributed noise generation,'' in \emph{Advances in Cryptology - {EUROCRYPT} 2006, 25th Annual International Conference on the Theory and Applications of Cryptographic Techniques, St. Petersburg, Russia, May 28 - June 1, 2006, Proceedings}, ser. Lecture Notes in Computer Science, S.~Vaudenay, Ed., vol. 4004.\hskip 1em plus 0.5em minus 0.4em\relax Springer, 2006, pp. 486--503. [Online]. Available: \url{https://doi.org/10.1007/11761679\_29}
\BIBentrySTDinterwordspacing

\bibitem{DBLP:conf/ccs/ChampionSU19}
\BIBentryALTinterwordspacing
J.~Champion, A.~Shelat, and J.~R. Ullman, ``Securely sampling biased coins with applications to differential privacy,'' in \emph{Proceedings of the 2019 {ACM} {SIGSAC} Conference on Computer and Communications Security, {CCS} 2019, London, UK, November 11-15, 2019}, L.~Cavallaro, J.~Kinder, X.~Wang, and J.~Katz, Eds.\hskip 1em plus 0.5em minus 0.4em\relax {ACM}, 2019, pp. 603--614. [Online]. Available: \url{https://doi.org/10.1145/3319535.3354256}
\BIBentrySTDinterwordspacing

\bibitem{DBLP:conf/ccs/WeiYFCW23}
\BIBentryALTinterwordspacing
C.~Wei, R.~Yu, Y.~Fan, W.~Chen, and T.~Wang, ``Securely sampling discrete gaussian noise for multi-party differential privacy,'' in \emph{Proceedings of the 2023 {ACM} {SIGSAC} Conference on Computer and Communications Security, {CCS} 2023, Copenhagen, Denmark, November 26-30, 2023}, W.~Meng, C.~D. Jensen, C.~Cremers, and E.~Kirda, Eds.\hskip 1em plus 0.5em minus 0.4em\relax {ACM}, 2023, pp. 2262--2276. [Online]. Available: \url{https://doi.org/10.1145/3576915.3616641}
\BIBentrySTDinterwordspacing

\bibitem{KCY21}
\BIBentryALTinterwordspacing
F.~Kato, Y.~Cao, and M.~Yoshikawa, ``Preventing manipulation attack in local differential privacy using verifiable randomization mechanism,'' in \emph{Data and Applications Security and Privacy {XXXV} - 35th Annual {IFIP} {WG} 11.3 Conference, DBSec 2021, Calgary, Canada, July 19-20, 2021, Proceedings}, ser. Lecture Notes in Computer Science, K.~Barker and K.~Ghazinour, Eds., vol. 12840.\hskip 1em plus 0.5em minus 0.4em\relax Springer, 2021, pp. 43--60. [Online]. Available: \url{https://doi.org/10.1007/978-3-030-81242-3\_3}
\BIBentrySTDinterwordspacing

\bibitem{DBLP:conf/iclr/ShamsabadiTCBHP24}
\BIBentryALTinterwordspacing
A.~S. Shamsabadi, G.~Tan, T.~Cebere, A.~Bellet, H.~Haddadi, N.~Papernot, X.~Wang, and A.~Weller, ``Confidential-dpproof: Confidential proof of differentially private training,'' in \emph{The Twelfth International Conference on Learning Representations, {ICLR} 2024, Vienna, Austria, May 7-11, 2024}.\hskip 1em plus 0.5em minus 0.4em\relax OpenReview.net, 2024. [Online]. Available: \url{https://openreview.net/forum?id=PQY2v6VtGe}
\BIBentrySTDinterwordspacing

\bibitem{BC23}
\BIBentryALTinterwordspacing
A.~Biswas and G.~Cormode, ``Interactive proofs for differentially private counting,'' in \emph{Proceedings of the 2023 {ACM} {SIGSAC} Conference on Computer and Communications Security, {CCS} 2023, Copenhagen, Denmark, November 26-30, 2023}, W.~Meng, C.~D. Jensen, C.~Cremers, and E.~Kirda, Eds.\hskip 1em plus 0.5em minus 0.4em\relax {ACM}, 2023, pp. 1919--1933. [Online]. Available: \url{https://doi.org/10.1145/3576915.3616681}
\BIBentrySTDinterwordspacing

\bibitem{DBLP:conf/crypto/BellareG92}
\BIBentryALTinterwordspacing
M.~Bellare and O.~Goldreich, ``On defining proofs of knowledge,'' in \emph{Advances in Cryptology - {CRYPTO} '92, 12th Annual International Cryptology Conference, Santa Barbara, California, USA, August 16-20, 1992, Proceedings}, ser. Lecture Notes in Computer Science, E.~F. Brickell, Ed., vol. 740.\hskip 1em plus 0.5em minus 0.4em\relax Springer, 1992, pp. 390--420. [Online]. Available: \url{https://doi.org/10.1007/3-540-48071-4\_28}
\BIBentrySTDinterwordspacing

\bibitem{DBLP:conf/crypto/FiatS86}
\BIBentryALTinterwordspacing
A.~Fiat and A.~Shamir, ``How to prove yourself: Practical solutions to identification and signature problems,'' in \emph{Advances in Cryptology - {CRYPTO} '86, Santa Barbara, California, USA, 1986, Proceedings}, ser. Lecture Notes in Computer Science, A.~M. Odlyzko, Ed., vol. 263.\hskip 1em plus 0.5em minus 0.4em\relax Springer, 1986, pp. 186--194. [Online]. Available: \url{https://doi.org/10.1007/3-540-47721-7\_12}
\BIBentrySTDinterwordspacing

\bibitem{DBLP:conf/crypto/GoldreichMW86}
\BIBentryALTinterwordspacing
O.~Goldreich, S.~Micali, and A.~Wigderson, ``How to prove all np-statements in zero-knowledge, and a methodology of cryptographic protocol design,'' in \emph{Advances in Cryptology - {CRYPTO} '86, Santa Barbara, California, USA, 1986, Proceedings}, ser. Lecture Notes in Computer Science, A.~M. Odlyzko, Ed., vol. 263.\hskip 1em plus 0.5em minus 0.4em\relax Springer, 1986, pp. 171--185. [Online]. Available: \url{https://doi.org/10.1007/3-540-47721-7\_11}
\BIBentrySTDinterwordspacing

\bibitem{DBLP:conf/stoc/GoldwasserMR85}
\BIBentryALTinterwordspacing
S.~Goldwasser, S.~Micali, and C.~Rackoff, ``The knowledge complexity of interactive proof-systems (extended abstract),'' in \emph{Proceedings of the 17th Annual {ACM} Symposium on Theory of Computing, May 6-8, 1985, Providence, Rhode Island, {USA}}, R.~Sedgewick, Ed.\hskip 1em plus 0.5em minus 0.4em\relax {ACM}, 1985, pp. 291--304. [Online]. Available: \url{https://doi.org/10.1145/22145.22178}
\BIBentrySTDinterwordspacing

\bibitem{DBLP:journals/ftsec/Thaler22}
\BIBentryALTinterwordspacing
J.~Thaler, ``Proofs, arguments, and zero-knowledge,'' \emph{Found. Trends Priv. Secur.}, vol.~4, no. 2-4, pp. 117--660, 2022. [Online]. Available: \url{https://doi.org/10.1561/3300000030}
\BIBentrySTDinterwordspacing

\bibitem{DBLP:conf/sp/JinMRO22}
\BIBentryALTinterwordspacing
J.~Jin, E.~McMurtry, B.~I.~P. Rubinstein, and O.~Ohrimenko, ``Are we there yet? timing and floating-point attacks on differential privacy systems,'' in \emph{43rd {IEEE} Symposium on Security and Privacy, {SP} 2022, San Francisco, CA, USA, May 22-26, 2022}.\hskip 1em plus 0.5em minus 0.4em\relax {IEEE}, 2022, pp. 473--488. [Online]. Available: \url{https://doi.org/10.1109/SP46214.2022.9833672}
\BIBentrySTDinterwordspacing

\bibitem{DBLP:conf/ccs/Mironov12}
\BIBentryALTinterwordspacing
I.~Mironov, ``On significance of the least significant bits for differential privacy,'' in \emph{the {ACM} Conference on Computer and Communications Security, CCS'12, Raleigh, NC, USA, October 16-18, 2012}, T.~Yu, G.~Danezis, and V.~D. Gligor, Eds.\hskip 1em plus 0.5em minus 0.4em\relax {ACM}, 2012, pp. 650--661. [Online]. Available: \url{https://doi.org/10.1145/2382196.2382264}
\BIBentrySTDinterwordspacing

\bibitem{dprio}
\BIBentryALTinterwordspacing
D.~Keeler, C.~Komlo, E.~Lepert, S.~Veitch, and X.~He, ``Dprio: Efficient differential privacy with high utility for prio,'' \emph{Proc. Priv. Enhancing Technol.}, vol. 2023, no.~3, pp. 375--390, 2023. [Online]. Available: \url{https://doi.org/10.56553/popets-2023-0086}
\BIBentrySTDinterwordspacing

\bibitem{shrinkwrap}
\BIBentryALTinterwordspacing
J.~Bater, X.~He, W.~Ehrich, A.~Machanavajjhala, and J.~Rogers, ``Shrinkwrap: Efficient {SQL} query processing in differentially private data federations,'' \emph{Proc. {VLDB} Endow.}, vol.~12, no.~3, pp. 307--320, 2018. [Online]. Available: \url{http://www.vldb.org/pvldb/vol12/p307-bater.pdf}
\BIBentrySTDinterwordspacing

\bibitem{DBLP:books/cu/Goldreich2001}
\BIBentryALTinterwordspacing
O.~Goldreich, \emph{The Foundations of Cryptography - Volume 1: Basic Techniques}.\hskip 1em plus 0.5em minus 0.4em\relax Cambridge University Press, 2001. [Online]. Available: \url{http://www.wisdom.weizmann.ac.il/\%7Eoded/foc-vol1.html}
\BIBentrySTDinterwordspacing

\bibitem{DBLP:conf/scn/BaumDO14}
\BIBentryALTinterwordspacing
C.~Baum, I.~Damg{\aa}rd, and C.~Orlandi, ``Publicly auditable secure multi-party computation,'' in \emph{Security and Cryptography for Networks - 9th International Conference, {SCN} 2014, Amalfi, Italy, September 3-5, 2014. Proceedings}, ser. Lecture Notes in Computer Science, M.~Abdalla and R.~D. Prisco, Eds., vol. 8642.\hskip 1em plus 0.5em minus 0.4em\relax Springer, 2014, pp. 175--196. [Online]. Available: \url{https://doi.org/10.1007/978-3-319-10879-7\_11}
\BIBentrySTDinterwordspacing

\bibitem{DBLP:conf/eurosp/KanjalkarZGM21}
\BIBentryALTinterwordspacing
S.~Kanjalkar, Y.~Zhang, S.~Gandlur, and A.~Miller, ``Publicly auditable mpc-as-a-service with succinct verification and universal setup,'' in \emph{{IEEE} European Symposium on Security and Privacy Workshops, EuroS{\&}P 2021, Vienna, Austria, September 6-10, 2021}.\hskip 1em plus 0.5em minus 0.4em\relax {IEEE}, 2021, pp. 386--411. [Online]. Available: \url{https://doi.org/10.1109/EuroSPW54576.2021.00048}
\BIBentrySTDinterwordspacing

\bibitem{DBLP:conf/uss/OzdemirB22}
\BIBentryALTinterwordspacing
A.~Ozdemir and D.~Boneh, ``Experimenting with collaborative zk-snarks: Zero-knowledge proofs for distributed secrets,'' in \emph{31st {USENIX} Security Symposium, {USENIX} Security 2022, Boston, MA, USA, August 10-12, 2022}, K.~R.~B. Butler and K.~Thomas, Eds.\hskip 1em plus 0.5em minus 0.4em\relax {USENIX} Association, 2022, pp. 4291--4308. [Online]. Available: \url{https://www.usenix.org/conference/usenixsecurity22/presentation/ozdemir}
\BIBentrySTDinterwordspacing

\bibitem{DBLP:conf/eurosys/NarayanFPH15}
\BIBentryALTinterwordspacing
A.~Narayan, A.~Feldman, A.~Papadimitriou, and A.~Haeberlen, ``Verifiable differential privacy,'' in \emph{Proceedings of the Tenth European Conference on Computer Systems, EuroSys 2015, Bordeaux, France, April 21-24, 2015}, L.~R{\'{e}}veill{\`{e}}re, T.~Harris, and M.~Herlihy, Eds.\hskip 1em plus 0.5em minus 0.4em\relax {ACM}, 2015, pp. 28:1--28:14. [Online]. Available: \url{https://doi.org/10.1145/2741948.2741978}
\BIBentrySTDinterwordspacing

\bibitem{DBLP:journals/iacr/FranzeseFGJPWD25}
\BIBentryALTinterwordspacing
O.~Franzese, C.~Fang, R.~Garg, S.~Jha, N.~Papernot, X.~Wang, and A.~Dziedzic, ``Secure noise sampling for differentially private collaborative learning,'' \emph{{IACR} Cryptol. ePrint Arch.}, p. 1025, 2025. [Online]. Available: \url{https://eprint.iacr.org/2025/1025}
\BIBentrySTDinterwordspacing

\bibitem{DBLP:conf/pkc/AmbainisJL04}
\BIBentryALTinterwordspacing
A.~Ambainis, M.~Jakobsson, and H.~Lipmaa, ``Cryptographic randomized response techniques,'' in \emph{Public Key Cryptography - {PKC} 2004, 7th International Workshop on Theory and Practice in Public Key Cryptography, Singapore, March 1-4, 2004}, ser. Lecture Notes in Computer Science, F.~Bao, R.~H. Deng, and J.~Zhou, Eds., vol. 2947.\hskip 1em plus 0.5em minus 0.4em\relax Springer, 2004, pp. 425--438. [Online]. Available: \url{https://doi.org/10.1007/978-3-540-24632-9\_31}
\BIBentrySTDinterwordspacing

\bibitem{DBLP:conf/sp/BonehBCGI21}
\BIBentryALTinterwordspacing
D.~Boneh, E.~Boyle, H.~Corrigan{-}Gibbs, N.~Gilboa, and Y.~Ishai, ``Lightweight techniques for private heavy hitters,'' in \emph{42nd {IEEE} Symposium on Security and Privacy, {SP} 2021, San Francisco, CA, USA, 24-27 May 2021}.\hskip 1em plus 0.5em minus 0.4em\relax {IEEE}, 2021, pp. 762--776. [Online]. Available: \url{https://doi.org/10.1109/SP40001.2021.00048}
\BIBentrySTDinterwordspacing

\bibitem{DBLP:conf/sigmod/ChowdhuryW0MJ20}
\BIBentryALTinterwordspacing
A.~R. Chowdhury, C.~Wang, X.~He, A.~Machanavajjhala, and S.~Jha, ``Crypt$\epsilon$: Crypto-assisted differential privacy on untrusted servers,'' in \emph{Proceedings of the 2020 International Conference on Management of Data, {SIGMOD} Conference 2020, online conference [Portland, OR, USA], June 14-19, 2020}, D.~Maier, R.~Pottinger, A.~Doan, W.~Tan, A.~Alawini, and H.~Q. Ngo, Eds.\hskip 1em plus 0.5em minus 0.4em\relax {ACM}, 2020, pp. 603--619. [Online]. Available: \url{https://doi.org/10.1145/3318464.3380596}
\BIBentrySTDinterwordspacing

\bibitem{DBLP:conf/ccs/BellBGL020}
\BIBentryALTinterwordspacing
J.~H. Bell, K.~A. Bonawitz, A.~Gasc{\'{o}}n, T.~Lepoint, and M.~Raykova, ``Secure single-server aggregation with (poly)logarithmic overhead,'' in \emph{{CCS} '20: 2020 {ACM} {SIGSAC} Conference on Computer and Communications Security, Virtual Event, USA, November 9-13, 2020}, J.~Ligatti, X.~Ou, J.~Katz, and G.~Vigna, Eds.\hskip 1em plus 0.5em minus 0.4em\relax {ACM}, 2020, pp. 1253--1269. [Online]. Available: \url{https://doi.org/10.1145/3372297.3417885}
\BIBentrySTDinterwordspacing

\bibitem{DBLP:conf/innovations/BalcerV18}
\BIBentryALTinterwordspacing
V.~Balcer and S.~P. Vadhan, ``Differential privacy on finite computers,'' in \emph{9th Innovations in Theoretical Computer Science Conference, {ITCS} 2018, January 11-14, 2018, Cambridge, MA, {USA}}, ser. LIPIcs, A.~R. Karlin, Ed., vol.~94.\hskip 1em plus 0.5em minus 0.4em\relax Schloss Dagstuhl - Leibniz-Zentrum f{\"{u}}r Informatik, 2018, pp. 43:1--43:21. [Online]. Available: \url{https://doi.org/10.4230/LIPIcs.ITCS.2018.43}
\BIBentrySTDinterwordspacing

\bibitem{DBLP:journals/siamcomp/GhoshRS12}
\BIBentryALTinterwordspacing
A.~Ghosh, T.~Roughgarden, and M.~Sundararajan, ``Universally utility-maximizing privacy mechanisms,'' \emph{{SIAM} J. Comput.}, vol.~41, no.~6, pp. 1673--1693, 2012. [Online]. Available: \url{https://doi.org/10.1137/09076828X}
\BIBentrySTDinterwordspacing

\bibitem{dgauss}
\BIBentryALTinterwordspacing
C.~L. Canonne, G.~Kamath, and T.~Steinke, ``The discrete gaussian for differential privacy,'' in \emph{Advances in Neural Information Processing Systems 33: Annual Conference on Neural Information Processing Systems 2020, NeurIPS 2020, December 6-12, 2020, virtual}, H.~Larochelle, M.~Ranzato, R.~Hadsell, M.~Balcan, and H.~Lin, Eds., 2020. [Online]. Available: \url{https://proceedings.neurips.cc/paper/2020/hash/b53b3a3d6ab90ce0268229151c9bde11-Abstract.html}
\BIBentrySTDinterwordspacing

\bibitem{cpsgd}
\BIBentryALTinterwordspacing
N.~Agarwal, A.~T. Suresh, F.~X. Yu, S.~Kumar, and B.~McMahan, ``cpsgd: Communication-efficient and differentially-private distributed {SGD},'' in \emph{Advances in Neural Information Processing Systems 31: Annual Conference on Neural Information Processing Systems 2018, NeurIPS 2018, December 3-8, 2018, Montr{\'{e}}al, Canada}, S.~Bengio, H.~M. Wallach, H.~Larochelle, K.~Grauman, N.~Cesa{-}Bianchi, and R.~Garnett, Eds., 2018, pp. 7575--7586. [Online]. Available: \url{https://proceedings.neurips.cc/paper/2018/hash/21ce689121e39821d07d04faab328370-Abstract.html}
\BIBentrySTDinterwordspacing

\bibitem{Skellam}
\BIBentryALTinterwordspacing
N.~Agarwal, P.~Kairouz, and Z.~Liu, ``The skellam mechanism for differentially private federated learning,'' in \emph{Advances in Neural Information Processing Systems 34: Annual Conference on Neural Information Processing Systems 2021, NeurIPS 2021, December 6-14, 2021, virtual}, M.~Ranzato, A.~Beygelzimer, Y.~N. Dauphin, P.~Liang, and J.~W. Vaughan, Eds., 2021, pp. 5052--5064. [Online]. Available: \url{https://proceedings.neurips.cc/paper/2021/hash/285baacbdf8fda1de94b19282acd23e2-Abstract.html}
\BIBentrySTDinterwordspacing

\bibitem{ddgm}
\BIBentryALTinterwordspacing
P.~Kairouz, Z.~Liu, and T.~Steinke, ``The distributed discrete gaussian mechanism for federated learning with secure aggregation,'' in \emph{Proceedings of the 38th International Conference on Machine Learning, {ICML} 2021, 18-24 July 2021, Virtual Event}, ser. Proceedings of Machine Learning Research, M.~Meila and T.~Zhang, Eds., vol. 139.\hskip 1em plus 0.5em minus 0.4em\relax {PMLR}, 2021, pp. 5201--5212. [Online]. Available: \url{http://proceedings.mlr.press/v139/kairouz21a.html}
\BIBentrySTDinterwordspacing

\bibitem{pbm}
\BIBentryALTinterwordspacing
W.~Chen, A.~{\"{O}}zg{\"{u}}r, and P.~Kairouz, ``The poisson binomial mechanism for unbiased federated learning with secure aggregation,'' in \emph{International Conference on Machine Learning, {ICML} 2022, 17-23 July 2022, Baltimore, Maryland, {USA}}, ser. Proceedings of Machine Learning Research, K.~Chaudhuri, S.~Jegelka, L.~Song, C.~Szepesv{\'{a}}ri, G.~Niu, and S.~Sabato, Eds., vol. 162.\hskip 1em plus 0.5em minus 0.4em\relax {PMLR}, 2022, pp. 3490--3506. [Online]. Available: \url{https://proceedings.mlr.press/v162/chen22s.html}
\BIBentrySTDinterwordspacing

\bibitem{DBLP:conf/ccs/WangDKZ20}
\BIBentryALTinterwordspacing
Y.~Wang, Z.~Ding, D.~Kifer, and D.~Zhang, ``Checkdp: An automated and integrated approach for proving differential privacy or finding precise counterexamples,'' in \emph{{CCS} '20: 2020 {ACM} {SIGSAC} Conference on Computer and Communications Security, Virtual Event, USA, November 9-13, 2020}, J.~Ligatti, X.~Ou, J.~Katz, and G.~Vigna, Eds.\hskip 1em plus 0.5em minus 0.4em\relax {ACM}, 2020, pp. 919--938. [Online]. Available: \url{https://doi.org/10.1145/3372297.3417282}
\BIBentrySTDinterwordspacing

\bibitem{DBLP:conf/vmcai/ReshefKSD24}
\BIBentryALTinterwordspacing
R.~Reshef, A.~Kabaha, O.~Seleznova, and D.~Drachsler{-}Cohen, ``Verification of neural networks' local differential classification privacy,'' in \emph{Verification, Model Checking, and Abstract Interpretation - 25th International Conference, {VMCAI} 2024, London, United Kingdom, January 15-16, 2024, Proceedings, Part {II}}, ser. Lecture Notes in Computer Science, R.~Dimitrova, O.~Lahav, and S.~Wolff, Eds., vol. 14500.\hskip 1em plus 0.5em minus 0.4em\relax Springer, 2024, pp. 98--123. [Online]. Available: \url{https://doi.org/10.1007/978-3-031-50521-8\_5}
\BIBentrySTDinterwordspacing

\bibitem{DBLP:journals/corr/abs-1909-02481}
\BIBentryALTinterwordspacing
J.~P. Near, D.~Darais, C.~Abuah, T.~Stevens, P.~Gaddamadugu, L.~Wang, N.~Somani, M.~Zhang, N.~Sharma, A.~Shan, and D.~Song, ``Duet: An expressive higher-order language and linear type system for statically enforcing differential privacy,'' \emph{CoRR}, vol. abs/1909.02481, 2019. [Online]. Available: \url{http://arxiv.org/abs/1909.02481}
\BIBentrySTDinterwordspacing

\bibitem{DBLP:conf/pods/EvfimievskiGS03}
\BIBentryALTinterwordspacing
A.~V. Evfimievski, J.~Gehrke, and R.~Srikant, ``Limiting privacy breaches in privacy preserving data mining,'' in \emph{Proceedings of the Twenty-Second {ACM} {SIGACT-SIGMOD-SIGART} Symposium on Principles of Database Systems, June 9-12, 2003, San Diego, CA, {USA}}, F.~Neven, C.~Beeri, and T.~Milo, Eds.\hskip 1em plus 0.5em minus 0.4em\relax {ACM}, 2003, pp. 211--222. [Online]. Available: \url{https://doi.org/10.1145/773153.773174}
\BIBentrySTDinterwordspacing

\bibitem{CDP}
\BIBentryALTinterwordspacing
I.~Mironov, O.~Pandey, O.~Reingold, and S.~P. Vadhan, ``Computational differential privacy,'' in \emph{Advances in Cryptology - {CRYPTO} 2009, 29th Annual International Cryptology Conference, Santa Barbara, CA, USA, August 16-20, 2009. Proceedings}, ser. Lecture Notes in Computer Science, S.~Halevi, Ed., vol. 5677.\hskip 1em plus 0.5em minus 0.4em\relax Springer, 2009, pp. 126--142. [Online]. Available: \url{https://doi.org/10.1007/978-3-642-03356-8\_8}
\BIBentrySTDinterwordspacing

\bibitem{DBLP:conf/tcc/BunS16}
\BIBentryALTinterwordspacing
M.~Bun and T.~Steinke, ``Concentrated differential privacy: Simplifications, extensions, and lower bounds,'' in \emph{Theory of Cryptography - 14th International Conference, {TCC} 2016-B, Beijing, China, October 31 - November 3, 2016, Proceedings, Part {I}}, ser. Lecture Notes in Computer Science, M.~Hirt and A.~D. Smith, Eds., vol. 9985, 2016, pp. 635--658. [Online]. Available: \url{https://doi.org/10.1007/978-3-662-53641-4\_24}
\BIBentrySTDinterwordspacing

\bibitem{DBLP:conf/icml/KairouzBR16}
\BIBentryALTinterwordspacing
P.~Kairouz, K.~A. Bonawitz, and D.~Ramage, ``Discrete distribution estimation under local privacy,'' in \emph{Proceedings of the 33nd International Conference on Machine Learning, {ICML} 2016, New York City, NY, USA, June 19-24, 2016}, ser. {JMLR} Workshop and Conference Proceedings, M.~Balcan and K.~Q. Weinberger, Eds., vol.~48.\hskip 1em plus 0.5em minus 0.4em\relax JMLR.org, 2016, pp. 2436--2444. [Online]. Available: \url{http://proceedings.mlr.press/v48/kairouz16.html}
\BIBentrySTDinterwordspacing

\bibitem{DBLP:conf/ndss/0001LLSL20}
\BIBentryALTinterwordspacing
T.~Wang, M.~Lopuha{\"{a}}{-}Zwakenberg, Z.~Li, B.~Skoric, and N.~Li, ``Locally differentially private frequency estimation with consistency,'' in \emph{27th Annual Network and Distributed System Security Symposium, {NDSS} 2020, San Diego, California, USA, February 23-26, 2020}.\hskip 1em plus 0.5em minus 0.4em\relax The Internet Society, 2020. [Online]. Available: \url{https://www.ndss-symposium.org/ndss-paper/locally-differentially-private-frequency-estimation-with-consistency/}
\BIBentrySTDinterwordspacing

\bibitem{DBLP:conf/nips/KairouzOV14}
\BIBentryALTinterwordspacing
P.~Kairouz, S.~Oh, and P.~Viswanath, ``Extremal mechanisms for local differential privacy,'' in \emph{Advances in Neural Information Processing Systems 27: Annual Conference on Neural Information Processing Systems 2014, December 8-13 2014, Montreal, Quebec, Canada}, Z.~Ghahramani, M.~Welling, C.~Cortes, N.~D. Lawrence, and K.~Q. Weinberger, Eds., 2014, pp. 2879--2887. [Online]. Available: \url{https://proceedings.neurips.cc/paper/2014/hash/86df7dcfd896fcaf2674f757a2463eba-Abstract.html}
\BIBentrySTDinterwordspacing

\bibitem{DBLP:conf/sigmod/WangDZHHLJ19}
\BIBentryALTinterwordspacing
T.~Wang, B.~Ding, J.~Zhou, C.~Hong, Z.~Huang, N.~Li, and S.~Jha, ``Answering multi-dimensional analytical queries under local differential privacy,'' in \emph{Proceedings of the 2019 International Conference on Management of Data, {SIGMOD} Conference 2019, Amsterdam, The Netherlands, June 30 - July 5, 2019}, P.~A. Boncz, S.~Manegold, A.~Ailamaki, A.~Deshpande, and T.~Kraska, Eds.\hskip 1em plus 0.5em minus 0.4em\relax {ACM}, 2019, pp. 159--176. [Online]. Available: \url{https://doi.org/10.1145/3299869.3319891}
\BIBentrySTDinterwordspacing

\bibitem{DBLP:conf/sp/WangLJ18}
\BIBentryALTinterwordspacing
T.~Wang, N.~Li, and S.~Jha, ``Locally differentially private frequent itemset mining,'' in \emph{2018 {IEEE} Symposium on Security and Privacy, {SP} 2018, Proceedings, 21-23 May 2018, San Francisco, California, {USA}}.\hskip 1em plus 0.5em minus 0.4em\relax {IEEE} Computer Society, 2018, pp. 127--143. [Online]. Available: \url{https://doi.org/10.1109/SP.2018.00035}
\BIBentrySTDinterwordspacing

\bibitem{DBLP:conf/uss/WangBLJ17}
\BIBentryALTinterwordspacing
T.~Wang, J.~Blocki, N.~Li, and S.~Jha, ``Locally differentially private protocols for frequency estimation,'' in \emph{26th {USENIX} Security Symposium, {USENIX} Security 2017, Vancouver, BC, Canada, August 16-18, 2017}, E.~Kirda and T.~Ristenpart, Eds.\hskip 1em plus 0.5em minus 0.4em\relax {USENIX} Association, 2017, pp. 729--745. [Online]. Available: \url{https://www.usenix.org/conference/usenixsecurity17/technical-sessions/presentation/wang-tianhao}
\BIBentrySTDinterwordspacing

\bibitem{DBLP:books/crc/KatzLindell2014}
\BIBentryALTinterwordspacing
J.~Katz and Y.~Lindell, \emph{Introduction to Modern Cryptography, Second Edition}.\hskip 1em plus 0.5em minus 0.4em\relax {CRC} Press, 2014. [Online]. Available: \url{https://www.crcpress.com/Introduction-to-Modern-Cryptography-Second-Edition/Katz-Lindell/p/book/9781466570269}
\BIBentrySTDinterwordspacing

\bibitem{DBLP:conf/crypto/Damgard88}
\BIBentryALTinterwordspacing
I.~Damg{\aa}rd, ``On the randomness of legendre and jacobi sequences,'' in \emph{Advances in Cryptology - {CRYPTO} '88, 8th Annual International Cryptology Conference, Santa Barbara, California, USA, August 21-25, 1988, Proceedings}, ser. Lecture Notes in Computer Science, S.~Goldwasser, Ed., vol. 403.\hskip 1em plus 0.5em minus 0.4em\relax Springer, 1988, pp. 163--172. [Online]. Available: \url{https://doi.org/10.1007/0-387-34799-2\_13}
\BIBentrySTDinterwordspacing

\bibitem{DBLP:journals/tosc/BeullensBUV20}
\BIBentryALTinterwordspacing
W.~Beullens, T.~Beyne, A.~Udovenko, and G.~Vitto, ``Cryptanalysis of the legendre {PRF} and generalizations,'' \emph{{IACR} Trans. Symmetric Cryptol.}, vol. 2020, no.~1, pp. 313--330, 2020. [Online]. Available: \url{https://doi.org/10.13154/tosc.v2020.i1.313-330}
\BIBentrySTDinterwordspacing

\bibitem{DBLP:conf/pqcrypto/BeullensG20}
\BIBentryALTinterwordspacing
W.~Beullens and C.~D. de~Saint~Guilhem, ``Legroast: Efficient post-quantum signatures from the legendre {PRF},'' in \emph{Post-Quantum Cryptography - 11th International Conference, PQCrypto 2020, Paris, France, April 15-17, 2020, Proceedings}, ser. Lecture Notes in Computer Science, J.~Ding and J.~Tillich, Eds., vol. 12100.\hskip 1em plus 0.5em minus 0.4em\relax Springer, 2020, pp. 130--150. [Online]. Available: \url{https://doi.org/10.1007/978-3-030-44223-1\_8}
\BIBentrySTDinterwordspacing

\bibitem{DBLP:conf/csfw/MayZ22}
\BIBentryALTinterwordspacing
A.~May and F.~Zweydinger, ``Legendre {PRF} (multiple) key attacks and the power of preprocessing,'' in \emph{35th {IEEE} Computer Security Foundations Symposium, {CSF} 2022, Haifa, Israel, August 7-10, 2022}.\hskip 1em plus 0.5em minus 0.4em\relax {IEEE}, 2022, pp. 428--438. [Online]. Available: \url{https://doi.org/10.1109/CSF54842.2022.9919640}
\BIBentrySTDinterwordspacing

\bibitem{DBLP:conf/sc/SalmonMDS11}
\BIBentryALTinterwordspacing
J.~K. Salmon, M.~A. Moraes, R.~O. Dror, and D.~E. Shaw, ``Parallel random numbers: as easy as 1, 2, 3,'' in \emph{Conference on High Performance Computing Networking, Storage and Analysis, {SC} 2011, Seattle, WA, USA, November 12-18, 2011}, S.~A. Lathrop, J.~Costa, and W.~Kramer, Eds.\hskip 1em plus 0.5em minus 0.4em\relax {ACM}, 2011, pp. 16:1--16:12. [Online]. Available: \url{https://doi.org/10.1145/2063384.2063405}
\BIBentrySTDinterwordspacing

\bibitem{DBLP:conf/ccs/0001RRSS16}
\BIBentryALTinterwordspacing
L.~Grassi, C.~Rechberger, D.~Rotaru, P.~Scholl, and N.~P. Smart, ``Mpc-friendly symmetric key primitives,'' in \emph{Proceedings of the 2016 {ACM} {SIGSAC} Conference on Computer and Communications Security, Vienna, Austria, October 24-28, 2016}, E.~R. Weippl, S.~Katzenbeisser, C.~Kruegel, A.~C. Myers, and S.~Halevi, Eds.\hskip 1em plus 0.5em minus 0.4em\relax {ACM}, 2016, pp. 430--443. [Online]. Available: \url{https://doi.org/10.1145/2976749.2978332}
\BIBentrySTDinterwordspacing

\bibitem{DBLP:conf/uss/ChangSCKP23}
\BIBentryALTinterwordspacing
I.~Chang, K.~Sotiraki, W.~Chen, M.~Kantarcioglu, and R.~A. Popa, ``{HOLMES:} efficient distribution testing for secure collaborative learning,'' in \emph{32nd {USENIX} Security Symposium, {USENIX} Security 2023, Anaheim, CA, USA, August 9-11, 2023}, J.~A. Calandrino and C.~Troncoso, Eds.\hskip 1em plus 0.5em minus 0.4em\relax {USENIX} Association, 2023, pp. 4823--4840. [Online]. Available: \url{https://www.usenix.org/conference/usenixsecurity23/presentation/chang}
\BIBentrySTDinterwordspacing

\bibitem{DBLP:conf/crypto/Pedersen91}
\BIBentryALTinterwordspacing
T.~P. Pedersen, ``Non-interactive and information-theoretic secure verifiable secret sharing,'' in \emph{Advances in Cryptology - {CRYPTO} '91, 11th Annual International Cryptology Conference, Santa Barbara, California, USA, August 11-15, 1991, Proceedings}, ser. Lecture Notes in Computer Science, J.~Feigenbaum, Ed., vol. 576.\hskip 1em plus 0.5em minus 0.4em\relax Springer, 1991, pp. 129--140. [Online]. Available: \url{https://doi.org/10.1007/3-540-46766-1\_9}
\BIBentrySTDinterwordspacing

\bibitem{DBLP:conf/asiacrypt/KateZG10}
\BIBentryALTinterwordspacing
A.~Kate, G.~M. Zaverucha, and I.~Goldberg, ``Constant-size commitments to polynomials and their applications,'' in \emph{Advances in Cryptology - {ASIACRYPT} 2010 - 16th International Conference on the Theory and Application of Cryptology and Information Security, Singapore, December 5-9, 2010. Proceedings}, ser. Lecture Notes in Computer Science, M.~Abe, Ed., vol. 6477.\hskip 1em plus 0.5em minus 0.4em\relax Springer, 2010, pp. 177--194. [Online]. Available: \url{https://doi.org/10.1007/978-3-642-17373-8\_11}
\BIBentrySTDinterwordspacing

\bibitem{DBLP:books/daglib/0023084}
\BIBentryALTinterwordspacing
S.~Arora and B.~Barak, \emph{Computational Complexity - {A} Modern Approach}.\hskip 1em plus 0.5em minus 0.4em\relax Cambridge University Press, 2009. [Online]. Available: \url{http://www.cambridge.org/catalogue/catalogue.asp?isbn=9780521424264}
\BIBentrySTDinterwordspacing

\bibitem{DBLP:journals/iacr/BlumbergTVW14}
\BIBentryALTinterwordspacing
A.~J. Blumberg, J.~Thaler, V.~Vu, and M.~Walfish, ``Verifiable computation using multiple provers,'' \emph{{IACR} Cryptol. ePrint Arch.}, p. 846, 2014. [Online]. Available: \url{http://eprint.iacr.org/2014/846}
\BIBentrySTDinterwordspacing

\bibitem{DBLP:conf/tcc/Lee21}
\BIBentryALTinterwordspacing
J.~Lee, ``Dory: Efficient, transparent arguments for generalised inner products and polynomial commitments,'' in \emph{Theory of Cryptography - 19th International Conference, {TCC} 2021, Raleigh, NC, USA, November 8-11, 2021, Proceedings, Part {II}}, ser. Lecture Notes in Computer Science, K.~Nissim and B.~Waters, Eds., vol. 13043.\hskip 1em plus 0.5em minus 0.4em\relax Springer, 2021, pp. 1--34. [Online]. Available: \url{https://doi.org/10.1007/978-3-030-90453-1\_1}
\BIBentrySTDinterwordspacing

\bibitem{DBLP:journals/iacr/GabizonWC19}
\BIBentryALTinterwordspacing
A.~Gabizon, Z.~J. Williamson, and O.~Ciobotaru, ``{PLONK:} permutations over lagrange-bases for oecumenical noninteractive arguments of knowledge,'' \emph{{IACR} Cryptol. ePrint Arch.}, p. 953, 2019. [Online]. Available: \url{https://eprint.iacr.org/2019/953}
\BIBentrySTDinterwordspacing

\bibitem{DBLP:conf/eurocrypt/ChenBBZ23}
\BIBentryALTinterwordspacing
B.~Chen, B.~B{\"{u}}nz, D.~Boneh, and Z.~Zhang, ``Hyperplonk: Plonk with linear-time prover and high-degree custom gates,'' in \emph{Advances in Cryptology - {EUROCRYPT} 2023 - 42nd Annual International Conference on the Theory and Applications of Cryptographic Techniques, Lyon, France, April 23-27, 2023, Proceedings, Part {II}}, ser. Lecture Notes in Computer Science, C.~Hazay and M.~Stam, Eds., vol. 14005.\hskip 1em plus 0.5em minus 0.4em\relax Springer, 2023, pp. 499--530. [Online]. Available: \url{https://doi.org/10.1007/978-3-031-30617-4\_17}
\BIBentrySTDinterwordspacing

\bibitem{DBLP:conf/ima/KoblitzM05}
\BIBentryALTinterwordspacing
N.~Koblitz and A.~Menezes, ``Pairing-based cryptography at high security levels,'' in \emph{Cryptography and Coding, 10th {IMA} International Conference, Cirencester, UK, December 19-21, 2005, Proceedings}, ser. Lecture Notes in Computer Science, N.~P. Smart, Ed., vol. 3796.\hskip 1em plus 0.5em minus 0.4em\relax Springer, 2005, pp. 13--36. [Online]. Available: \url{https://doi.org/10.1007/11586821\_2}
\BIBentrySTDinterwordspacing

\bibitem{DBLP:journals/dam/GalbraithPS08}
\BIBentryALTinterwordspacing
S.~D. Galbraith, K.~G. Paterson, and N.~P. Smart, ``Pairings for cryptographers,'' \emph{Discret. Appl. Math.}, vol. 156, no.~16, pp. 3113--3121, 2008. [Online]. Available: \url{https://doi.org/10.1016/j.dam.2007.12.010}
\BIBentrySTDinterwordspacing

\bibitem{DBLP:journals/cacm/Shamir79}
\BIBentryALTinterwordspacing
A.~Shamir, ``How to share a secret,'' \emph{Commun. {ACM}}, vol.~22, no.~11, pp. 612--613, 1979. [Online]. Available: \url{https://doi.org/10.1145/359168.359176}
\BIBentrySTDinterwordspacing

\bibitem{DBLP:conf/mark2/Blakley79}
\BIBentryALTinterwordspacing
G.~R. Blakley, ``Safeguarding cryptographic keys,'' in \emph{1979 International Workshop on Managing Requirements Knowledge, {MARK} 1979, New York, NY, USA, June 4-7, 1979}.\hskip 1em plus 0.5em minus 0.4em\relax {IEEE}, 1979, pp. 313--318. [Online]. Available: \url{https://doi.org/10.1109/MARK.1979.8817296}
\BIBentrySTDinterwordspacing

\bibitem{bls}
\BIBentryALTinterwordspacing
D.~Boneh, B.~Lynn, and H.~Shacham, ``Short signatures from the weil pairing,'' in \emph{Advances in Cryptology - {ASIACRYPT} 2001, 7th International Conference on the Theory and Application of Cryptology and Information Security, Gold Coast, Australia, December 9-13, 2001, Proceedings}, ser. Lecture Notes in Computer Science, C.~Boyd, Ed., vol. 2248.\hskip 1em plus 0.5em minus 0.4em\relax Springer, 2001, pp. 514--532. [Online]. Available: \url{https://doi.org/10.1007/3-540-45682-1\_30}
\BIBentrySTDinterwordspacing

\bibitem{rsa}
\BIBentryALTinterwordspacing
R.~L. Rivest, A.~Shamir, and L.~M. Adleman, ``A method for obtaining digital signatures and public-key cryptosystems,'' \emph{Commun. {ACM}}, vol.~21, no.~2, pp. 120--126, 1978. [Online]. Available: \url{https://doi.org/10.1145/359340.359342}
\BIBentrySTDinterwordspacing

\bibitem{elgamal}
\BIBentryALTinterwordspacing
T.~E. Gamal, ``A public key cryptosystem and a signature scheme based on discrete logarithms,'' in \emph{Advances in Cryptology, Proceedings of {CRYPTO} '84, Santa Barbara, California, USA, August 19-22, 1984, Proceedings}, ser. Lecture Notes in Computer Science, G.~R. Blakley and D.~Chaum, Eds., vol. 196.\hskip 1em plus 0.5em minus 0.4em\relax Springer, 1984, pp. 10--18. [Online]. Available: \url{https://doi.org/10.1007/3-540-39568-7\_2}
\BIBentrySTDinterwordspacing

\bibitem{eddsa}
\BIBentryALTinterwordspacing
S.~Josefsson and I.~Liusvaara, ``Edwards-curve digital signature algorithm (eddsa),'' \emph{{RFC}}, vol. 8032, pp. 1--60, 2017. [Online]. Available: \url{https://doi.org/10.17487/RFC8032}
\BIBentrySTDinterwordspacing

\bibitem{DBLP:conf/crypto/CramerDS94}
\BIBentryALTinterwordspacing
R.~Cramer, I.~Damg{\aa}rd, and B.~Schoenmakers, ``Proofs of partial knowledge and simplified design of witness hiding protocols,'' in \emph{Advances in Cryptology - {CRYPTO} '94, 14th Annual International Cryptology Conference, Santa Barbara, California, USA, August 21-25, 1994, Proceedings}, ser. Lecture Notes in Computer Science, Y.~Desmedt, Ed., vol. 839.\hskip 1em plus 0.5em minus 0.4em\relax Springer, 1994, pp. 174--187. [Online]. Available: \url{https://doi.org/10.1007/3-540-48658-5\_19}
\BIBentrySTDinterwordspacing

\bibitem{DBLP:conf/eurocrypt/Damgard00}
\BIBentryALTinterwordspacing
I.~Damg{\aa}rd, ``Efficient concurrent zero-knowledge in the auxiliary string model,'' in \emph{Advances in Cryptology - {EUROCRYPT} 2000, International Conference on the Theory and Application of Cryptographic Techniques, Bruges, Belgium, May 14-18, 2000, Proceeding}, ser. Lecture Notes in Computer Science, B.~Preneel, Ed., vol. 1807.\hskip 1em plus 0.5em minus 0.4em\relax Springer, 2000, pp. 418--430. [Online]. Available: \url{https://doi.org/10.1007/3-540-45539-6\_30}
\BIBentrySTDinterwordspacing

\bibitem{DBLP:journals/jacm/Schwartz80}
\BIBentryALTinterwordspacing
J.~T. Schwartz, ``Fast probabilistic algorithms for verification of polynomial identities,'' \emph{J. {ACM}}, vol.~27, no.~4, pp. 701--717, 1980. [Online]. Available: \url{https://doi.org/10.1145/322217.322225}
\BIBentrySTDinterwordspacing

\bibitem{DBLP:conf/eurosam/Zippel79}
\BIBentryALTinterwordspacing
R.~Zippel, ``Probabilistic algorithms for sparse polynomials,'' in \emph{Symbolic and Algebraic Computation, {EUROSAM} '79, An International Symposiumon Symbolic and Algebraic Computation, Marseille, France, June 1979, Proceedings}, ser. Lecture Notes in Computer Science, E.~W. Ng, Ed., vol.~72.\hskip 1em plus 0.5em minus 0.4em\relax Springer, 1979, pp. 216--226. [Online]. Available: \url{https://doi.org/10.1007/3-540-09519-5\_73}
\BIBentrySTDinterwordspacing

\bibitem{DBLP:conf/africacrypt/Maurer09}
\BIBentryALTinterwordspacing
U.~M. Maurer, ``Unifying zero-knowledge proofs of knowledge,'' in \emph{Progress in Cryptology - {AFRICACRYPT} 2009, Second International Conference on Cryptology in Africa, Gammarth, Tunisia, June 21-25, 2009. Proceedings}, ser. Lecture Notes in Computer Science, B.~Preneel, Ed., vol. 5580.\hskip 1em plus 0.5em minus 0.4em\relax Springer, 2009, pp. 272--286. [Online]. Available: \url{https://doi.org/10.1007/978-3-642-02384-2\_17}
\BIBentrySTDinterwordspacing

\bibitem{DBLP:conf/sp/WahbyTSTW18}
\BIBentryALTinterwordspacing
R.~S. Wahby, I.~Tzialla, A.~Shelat, J.~Thaler, and M.~Walfish, ``Doubly-efficient zksnarks without trusted setup,'' in \emph{2018 {IEEE} Symposium on Security and Privacy, {SP} 2018, Proceedings, 21-23 May 2018, San Francisco, California, {USA}}.\hskip 1em plus 0.5em minus 0.4em\relax {IEEE} Computer Society, 2018, pp. 926--943. [Online]. Available: \url{https://doi.org/10.1109/SP.2018.00060}
\BIBentrySTDinterwordspacing

\bibitem{mcl}
S.~Mitsunari, ``A portable and fast pairing-based cryptography library.'' \url{https://github.com/herumi/mcl}, 2024, commit hash: 52366af5f5e06d02c9cd2f64ae32541786aebe28.

\end{thebibliography}

\end{document}